\title{Computing Nash Equilibria in Potential Games with Private Uncoupled Constraints}
\author[1,2]{Nikolas Patris}
\author[1]{Stelios Stavroulakis}
\author[1,2]{Fivos Kalogiannis}
\author[1]{Rose Zhang}
\author[1]{Ioannis Panageas}
\affil[1]{University of California, Irvine}
\affil[2]{Archimedes Research Unit}
\affil[ ]{\texttt{\{npatris, sstavrou, fkalogia, rwzhang\}@uci.edu,
    ipanagea@ics.uci.edu}}
\date{} 
\begin{document}

\maketitle

\begin{abstract}
We consider the problem of computing Nash equilibria in potential games where each player's strategy set is subject to private uncoupled constraints. This scenario is frequently encountered in real-world applications like road network congestion games where individual drivers adhere to personal budget and fuel limitations. Despite the plethora of algorithms that efficiently compute Nash equilibria (NE) in potential games, the domain of constrained potential games remains largely unexplored. We introduce an algorithm that leverages the Lagrangian formulation of NE. The algorithm is implemented independently by each player and runs in polynomial time with respect to the approximation error, the sum of the size of the action-spaces, and the game's inherent parameters.
\end{abstract}

\section{Introduction}

The modeling and studying of games with constraints has received a lot of attention in various areas including control \cite{decentralized_flow}, transportation and routing in traffic networks \cite{transportation, side_constraints}, telecommunications \cite{telecom}, \cite{telecom2}, markets \cite{arrow1954existence}, cloud computing \cite{cloud_computing}, multi-agent RL \cite{finding_corrrelated}, even in managing environmental pollution \cite{environmental_pollution}. There are two main models in the literature:  settings in which each agent has their own private constraints (also called orthogonal constraints) and settings in which the constraints are the same for all agents (common couple constraints). It is worth noting that having arbitrary constraints might lead to non-existence of a Nash equilibrium, i.e., John Nash's theorem \cite{john_nash} is not applicable, even for two player zero-sum games \cite{altman_constrained_games}. A Nash equilibrium in constrained games, also known as \textit{Generalized Nash Equilibrium}, is defined to be a feasible strategy profile so that each agent does not have an incentive to unilaterally, \textit{feasibly} deviate and decrease their cost. In case of $\epsilon$-approximate Nash equilibrium, we borrow the definition used in \cite{jordan2023first}. Specifically, an $\epsilon$-approximate Nash equilibrium is a strategy profile that is approximately feasible (inequality constraints are violated by at most an additive $\epsilon$) and moreover each agent does not have an incentive to unilaterally deviate among the approximately feasible strategies of their constraints and decrease their cost. 

A common sufficient condition that guarantees existence of Nash equilibria in constrained games is Slater's condition \cite{bertsekas1997nonlinear}. Slater's condition is a constraint qualification (CQ)  stating that for every strategy profile $\vx := (\vx_i,\vx_{-i})$ of the agents (feasible or infeasible), each agent $i$ can deviate and create a strategy profile $(\vx_i',\vx_{-i})$ that is strictly feasible (see \cref{assumption:slater}) for their own constraints. 

In this paper, our primary focus is on the computation of Nash equilibria in potential games, as defined in \cref{section:prelim}, with private constraints, specifically in the context of normal-form games. The constraints we consider in this study are assumed to be convex. The necessity of this assumption to establish arguments regarding approximate Nash equilibria is thoroughly examined in \cref{section:concave_constraints}. While the constraints themselves are assumed to be convex, it is important to note that the potential is non-convex. This non-convexity poses a significant technical challenge that has yet to be addressed in the existing literature. However, in this paper, we  tackle this challenge effectively.
% In this paper, we focus on the problem of computing a Nash equilibrium in (normal-form) potential games (see \cref{section:prelim} for a proper definition of potential games)  with private constraints. The constraints are assumed to be convex (see \cref{section:concave_constraints} the necessity of this assumption to argue about approximate Nash equilibria); the potential game is non-convex though and this is the main technical challenge that has not been addressed in the literature and we tackle in this paper. 
%https://ieeexplore.ieee.org/stamp/stamp.jsp?tp=&arnumber=5256135 kai fere 9,11,13
% 9: constrained_traffic - Constrained Traffic equilibrium 
% 11: decentralized_flow - Optimal decentralized flow
% 13: side_constraints - Side constrained traffic equilibrium
% John Nash: john_nash
\paragraph{Our contributions} Our paper focuses on the problem of computing approximate Nash equilibrium in normal form potential games with private convex constraints. We present Algorithm \igd that uses projected gradient descent on a carefully constructed Lagrange function that reaches an $\epsilon$-approximate Nash equilibrium after $O(1/\epsilon^6)$ iterations. We note that our algorithm is implemented in a distributed manner in the sense that each agent performs projected gradient on their own cost function and uses information about their own constraints.   

\paragraph{Technical overview}
We present a concise roadmap of the main contributions. In \cref{section:main_result}, we outline the steps taken to prove \cref{thm:main}. Following the standard procedure for constraint optimization, we formulate a Lagrangian problem that incorporates the constraints. Our objective is to identify a stationary point of the Lagrangian. However, a challenge arises when dealing with multipliers $\vlambda$ in bounded domains, as the commonly used first-order methods are designed for unbounded domains. Consequently, there is uncertainty regarding whether the first-order stationary points obtained using these methods are true stationary points in the unconstrained case or if they arise due to the imposed restrictions. To address this issue, we introduce a regularization term to restrict the domain of the lambdas effectively. This regularization introduces a new technical consideration: the choice of regularization parameter significantly impacts the solution quality. Nevertheless, we successfully establish bounds and determine the appropriate relationship between the solution quality and the regularization constant.

\paragraph{Related work} A common technique in constrained optimization is to incorporate the constraints, i.e. define the Lagrange function and optimize over the domain of interest and the Lagrange multipliers (e.g., aim at computing an approximate Karush-Kuhn-Tucker (KKT) point). The KKT condition provides a general characterization of local optimality under various CQs \cite{kuhn_tucker, karush}. Both the KKT condition and CQs can be extended to get a Generalized Nash equilibria. 

The purpose of Lagrange multipliers is to penalize agents in situations where they deviate significantly from feasibility. Several studies have adopted a similar principle known as penalty-type algorithms. However, the selection of the penalty function may vary based on the specific constraints' structure. Notable representatives include exact and inexact penalty methods \cite{Facchinei2011, Fukushima2011, Facchinei_and_Kanzow_2010b,Kanzow_and_Steck_2018, ba_and_pang_2022} and exact and inexact augmented Lagrangian method \cite{Pang2005, Kanzow2016OnTM, Kanzow_and_Steck_2016, Kanzow_and_Steck_2018}. For more information, see \cite{Fisher_et_al_2014} and references therein.
Finally there have been a lot of works based on the Nikaido-Isoda function; these include gradient projection \cite{Rosen_1965},  relaxation method \cite{Uryas, Krawczyk2000, vonHeusinger2009} and Newton methods \cite{Facchinei2009,vonHeusinger2012,dreves2013globalized, izmailov2014error, fischer2016globally}. In the context of potential games the works of \cite{zhu2008lagrangian, 1660950} are regarded as standard. Notably, this extensive body of literature has focused primarily on on convex functions with common, coupled, or uncoupled constraints.
This assumption, however, does not align with real-world problems. Conversely, when the potential function is non-convex, the understanding of the constraint setting is limited. For coupled constraints, there are no guarantees beyond the simplest scenario where constraints are linear, see \cite{sagratella2017algorithms, jordan2023first}. The recent, concurrent, work of \cite{alatur2023provably} considers the more general problem of learning Nash policies in constrained Markov potential games. However, it fundamentally diverges from our approach. In contrast to our decentralized framework, enabling each agent to act independently, the solution proposed in \cite{alatur2023provably} is centralized. This implies that a central authority is responsible for determining which agent (only one) performs a best response update in each iteration.
\section{Preliminaries} \label{section:prelim} 

\subsection{Notation and Definitions}

\paragraph{Notation} Let $\mathbb{R}$ be the set of real numbers, and $[n] = \{ 1, 2, \ldots, n \}$. We define $\Delta$ as the probability simplex, which is the set of $n$-dimensional probability vectors, i.e., $\Delta \coloneqq \{ \vx \in \mathbb{R}^n : x_i \geq 0, \sum_{i=1}^{n} x_i = 1 \}$. We use $\ve_i$ to denote the $i$-th elementary basis vector, and to refer to a coordinate of a vector, we use either $x_i$ or $[x]_i$. The superscripts are used to indicate the iterates of an algorithm. Lastly,  it should be noted that all norms used correspond to the standard Euclidean norm, $\norm{\cdot} = \norm{\cdot}_2$.

\paragraph{Normal-form Games} We consider $n$ players, represented by the set $\calN \coloneqq [n]$, with each player $i \in \calN$ having a set of actions denoted by $\mathcal{A}_i$. The joint action profile is represented by $\bm{a} \coloneqq (a_1, a_2, \cdots, a_n) \in \calA$, where $\calA \coloneqq \times_{i \in \calN} \calA_i$ is the product of the action spaces.
Players may also randomize their strategies by selecting a probability distribution over their set of actions. We use $\bm{x}_i(a_i)$ to denote the probability that player $i$ chooses action $a_i \in \mathcal{A}_i$. Since $\bm{x}_i$ is a probability distribution, it must belong to the probability simplex, which we denote as $\Delta(\mathcal{A}_i)$. The set $\Delta(\mathcal{A}_i)$ consists of all probability vectors $\bm{x} \in \mathbb{R}^{|\mathcal{A}_i|}_{\geq 0}$ satisfying $\sum_{a_i \in \mathcal{A}_i} \bm{x}(a_i) = 1$. The product of simplices is denoted by $\Delta^n \coloneqq \times_{i \in \calN } \Delta(\calA_i)$.

We consider the case where the players are trying to minimize their costs. For a given strategy profile $\va \in \calA$, each player $i$ receives a \textit{cost} $c_i(\va)$, where $c_i : \calA \rightarrow \mathbb{R}$. In the case of a randomized strategy $\vx$, i.e., a probability distribution over $\Delta$, we define the notion of \emph{expected cost} as the expected value of the cost function $C_i(\cdot)$ under the distribution $\vx$. Specifically, the expected cost for player $i$ is defined as $\mathbb{E}_{\va \sim \vx}[C_i(\va)] = C_i(\vx)$.

\begin{definition}[Approximate Nash equilibrium] \label{def:approximate_nash_eq}
A joint strategy profile $(\bm{x}_1^{\star}, \bm{x}_2^{\star}, \ldots,$ $\bm{x}_n^{\star}) \in \Delta^n$ is said to be an $\epsilon$ approximate Nash equilibrium if for any player $i \in \calN$ and any possible unilateral deviation $\vx_i' \in \Delta(\mathcal{A}_i)$, the resulting change in player $i$'s expected cost is no more than $\epsilon$.
\begin{equation}
C_i (\vx_i', \bm{x}^{\star}_{-i}) \geq 
C_i(\vx^{\star}) - \epsilon
\end{equation}
\end{definition}

\begin{remark}
We note that the inequality in \cref{def:approximate_nash_eq} should be adjusted based on the objective of the problem, i.e., whether the players aim to maximize their utility or minimize their cost.    
\end{remark}

\paragraph{Potential Games}
A potential game is a type of game that has a single function 
$\Phi(\bm{x}) : \Delta^n \rightarrow \mathbb{R}$, referred to as a potential function, which captures the incentive of all players to modify their strategies. In other words, if a player deviates from their strategy, then the difference in payoffs is determined by a potential function $\Phi$ evaluated at those two strategy profiles. We express this formally as follows:
\begin{definition}[Potential function] \label{def:potential_function}
Consider a joint action profile $\bm{x} = (\bm{x}_1, \bm{x}_2, \ldots$ $, \bm{x}_n)$. For any player $i \in \calN$ and any unilateral deviation $\vx_i'$, the difference in cost resulting from this deviation is reflected in the change in the potential function.
\begin{equation}
C_i(\vx_i', \bm{x}_{-i}) - C_i(\bm{x}) = 
\Phi(\vx_i', \bm{x}_{-i}) - \Phi(\bm{x})
\label{eq:potential_function}
\end{equation}
\end{definition}

\subsection{Problem Statement}
This work addresses the problem of constrained potential games. In these games, multiple players are involved, and each player $i \in \calN$ has a private set of convex constraints. These constraints specify a feasibility set for each player, within which they aim to minimize their individual cost function $C_i(\vx)$.

As noted before, the potential function in constrained potential games captures the incentives of all players simultaneously. This allows us to formulate a single problem, referred to as the primal problem, that captures the individual players' problems collectively. By solving the primal problem, we are able to solve the constituent problems of each player simultaneously.
\begin{definition}[Primal problem]
\label{def:primal_problem}
Let each player $i \in \calN$ have a set of $d_i$ private \textit{convex} constraints, denoted by $g_{i,m}(\cdot)$ for any $m \in [d_i]$. Then, the players aim to solve the following optimization problem.
\begin{equation} \label{eq:primal_problem}
\begin{array}{ccc}
  \text{minimize}   & \Phi(\vx_1, \vx_2, \ldots, \vx_n) & \\
  \text{subject to} & g_{i,m}(\bm{x}_i) \leq 0, & m = 1, 2, \ldots, d_i, \ \forall i \in \calN \\ 
\end{array}
\end{equation}
\end{definition}

The total number of constraints is denoted by $d = \sum_{i \in \calN} d_i$, where $\calN$ represents the set of all players. Additionally, we refer to the feasibility set of player $i$ as $\calS_i$, which is defined by their respective constraints, $g_{i,m}(\bm{x}_i) \leq 0$ for any $m \in [d_i]$.
\begin{equation}
\label{eq:feasibility_sets_definition}
\calS_i = \{ \vx_i \in \Delta(\calA_i) \, | \, g_{i,m}(\vx_i) \leq 0 \text{ for all } m \in [d_i] \}
\end{equation}

As expected, optimization in the constrained case presents additional difficulties. However, this is a well-studied class of optimization problems, and there are standard tools available, with the most prominent being the Lagrangian methods.
The Lagrangian method involves defining a modified problem that takes the constraints into account. Specifically, we begin with the primal problem defined in \cref{def:primal_problem}, and then define a new function that incorporates the constraints.
\begin{definition}[Lagrangian function] 
\label{def:lagrangian_function}
Consider a function $\Phi(\cdot)$ to be minimized, subject to the constraints $g_{i,m}(\vx_i)$ for any player $i \in \calN$ and any constraint $m \in [d_i]$. For each inequality constraint, we introduce a non-negative multiplier $\lambda_{i,m}$, commonly known as a Lagrange multiplier.
\begin{align} 
\calL (\vx, \vlambda) 
&= \Phi(\vx) + \sum\limits_{i=1}^{n} \sum\limits_{m=1}^{d_i} \lambda_{i,m} g_{i,m}(\vx_i) \nonumber\\
&= \Phi(\vx) + \sum\limits_{i=1}^{n} \vlambda_i^{\top} \vg_i(\vx_i) \nonumber\\
&= \Phi(\vx) + \vlambda^{\top} \vg
\label{eq:lagrangian_function}
\end{align}
where $\vlambda$ is a $d$-dimensional vector corresponding where $d \coloneqq \left( \sum_{i \in \calN} d_i \right)$ is the overall number of constraints. The vector $\vlambda$ is the concatenation of $n$ vectors, where each vector has a length corresponding to the number of constraints for the corresponding player, i.e., $d_1, d_2, \ldots, d_n$, respectively. 

\begin{equation}
\vlambda = 
\begin{bmatrix}
\vlambda_1 \\ \vdots \\ \vlambda_n
\end{bmatrix}
\in 
\mathbb{R}^{d_1} \times \ldots \times \mathbb{R}^{d_n} = \mathbb{R}^{d}
\quad
\text{and}
\end{equation}
\begin{equation}
\quad
\vg_i(\vx_i)
=
\begin{bmatrix}
g_{i,1}(\vx_i) \\ \vdots \\ g_{i,d_i}(\vx_i) 
\end{bmatrix}
\in \mathbb{R}^{d_i}
\quad
\text{and}
\quad
\vg 
=
\begin{bmatrix}
\vg_1(\vx_1) \\ \vdots \\ \vg_n(\vx_n) 
\end{bmatrix}
\end{equation}
where $\vlambda_i \in \mathbb{R}_{+}^{d_i}$ for any $i \in \calN$ and $\lambda_{i,m}$ correspond to the Lagrange multiplier of the $m$-th constraint of the $i$-th player.
\end{definition}

\paragraph{Lagrangian dual function} We define the Lagrangian dual function as the minimum value of the Lagrangian \eqref{eq:lagrangian_function} over $\vx \in \Delta$. It's important to note that this minimum can be achieved even if $\vx$ is not a feasible point, meaning it may not satisfy the constraints. 

\begin{definition} [Lagrangian dual function] \label{eq:lagrangian_dual_function} The Lagrangian dual function is defined as $h(\vlambda) = 
\inf\limits_{\vx \in \Delta} 
\calL (\vx, \vlambda) $ where $\vx$ belongs to the product of simplices of all players.
% \begin{equation}
% h(\vlambda) = 
% \inf\limits_{\vx \in \Delta} 
% \calL (\vx, \vlambda) 
% \end{equation}
\end{definition}

The Lagrangian dual function provides a lower bound on the optimal value of the primal problem defined in \cref{def:primal_problem}, and this property is known as weak duality. The proof of this inequality is standard and can be found in many textbooks on optimization.

\begin{definition}[Weak duality] \label{def:weak_duality}
Consider the primal problem in \cref{eq:primal_problem} and suppose $p^{\star}$ is its optimal value. Then, for any $\vlambda \succeq \bm{0}$ we have $h(\vlambda) \leq p^{\star}$.
% \begin{equation} \label{eq:weak_duality}
% h(\vlambda) \leq p^{\star}
% \end{equation}
\end{definition}

\paragraph{Dual problem} As we described above, the Lagrangian dual function provides a lower bound on the optimal value $p^{\star}$ of the primal problem for any value of $\vlambda$. It is natural to ask what the best possible lower bound for $p^{\star}$ would be. This leads to the formulation of another optimization problem commonly referred to as the \emph{dual problem}.

\begin{definition}[Dual problem]
\label{def:dual_problem}
Suppose the primal problem as defined in \cref{def:primal_problem}. Then, the dual problem is defined as follows.
\begin{equation} \label{eq:dual_problem}
\begin{array}{ccc}
  \text{maximize}   & h(\vlambda) \textrm{ \;\;}  \text{subject to} & \vlambda \succeq \bm{0}
\end{array}
\end{equation}
\end{definition}

There are also equivalent formulation of the primal, dual problems with respect to each player having the joint strategy profile of the other players $\vx_{-i}$ fixed. We defer those definitions along with some relevant lemmas to the \cref{section:appendix_definition}.
\section{Nonconvex Constrained Games and Lagrangian Functions}

In this section, we provide a concise overview of nonconvex games. We begin by introducing the problem and defining the appropriate solution concept for this class of games. Next, we delve into the importance of having convex constraints and analyze the implications when convexity is violated. Finally, we introduce the concept of a regularized Lagrangian, which plays a crucial role in our proposed solution.

\subsection{Nonconvex Games}
\label{section:nonconvex_games}

It is evident that the goal is to find an Nash equilibrium $\vx$ for the constrained potential game. However, unlike the unconstrained case where the optimization is done over the entire space, in the constrained case, the optimization is restricted to a specific convex domain. In our case, the convex set is the product of the simplices, and so the notion of Nash equilibrium is defined as follows.

\begin{definition}[Approximate First Order Stationary Point] 
\label{def:approximate_stationary_points}
A joint strategy profile $\vx \coloneqq (\vx_1, \cdots, \vx_n) \in \Delta^n$ is called $\epsilon$ approximate first order stationary point of function $f$ as long as 
\begin{equation}
- \min\limits_{
(\vx + \vdelta) \in \Delta^n, 
\norm{\vdelta}^{2} \leq 1
}
\vdelta^{\top} \nabla_{\vx} f(\vx) 
\leq 
\epsilon
\end{equation}
\end{definition}
% \textcolor{red}{Reviewer: Is this the same as an e-KKT point?}
% \textcolor{blue}{Stelios: The above definition differs from an $\epsilon$-KKT point as it focuses only on $\epsilon$-stationarity.}
%kathe approximate stationary point einai solution tou variational inequality

\paragraph{Solution Concept} The solution concept we consider in this work is commonly referred to as a nonlinear generalized Nash equilibrium \cite{jordan2023first}. Unlike in the case where there are no constraints, in the constrained case, we extend the set to include unilateral deviations that are also approximately feasible. The formal definition of this concept is provided below.

\begin{definition}[Approximate feasible Approximate Nash equilibrium]
\label{def:approximate_feasible_approximate_nash}
A joint strategy profile $(\bm{x}_1^{\star}, \bm{x}_2^{\star}, \ldots,$ $\bm{x}_n^{\star}) \in \Delta^n$ is said to be an $O(\epsilon)$ approximate feasible approximate Nash equilibrium if for any player $i \in \calN$ and any possible unilateral deviation $\vx_i' \in \{ \vx_i \in \Delta(\mathcal{A}_i) \, | \, \vg_i(\vx_i) \leq \epsilon \}$, the resulting change in player $i$'s expected cost is no more than $O(\epsilon)$.
\begin{equation}
C_i (\vx_i', \bm{x}^{\star}_{-i}) \geq 
C_i(\vx^{\star}) - O(\epsilon)
\end{equation}
\end{definition}
% \textcolor{red}{Why is $O(\epsilon)$ used?}
% \textcolor{blue}{Because it provides flexibility and allows us to focus on the general behavior with respect to the approximation error.}

\paragraph{Techniques and Assumptions}
To solve our problem, we leverage an important property of the potential function: convexity per player. This property, together with a standard assumption used in the literature (referred to as Slater’s condition), yields a stronger condition (strong duality) for the relationship between the primal and dual optimal values.

% Slater's condition is a standard assumption frequently employed in the literature. Building upon this foundation, we observe the convexity per player inherent in the potential function. By harnessing this important property in conjunction with Slater's condition, we achieve a stronger condition (strong duality) that illuminates the relationship between the primal and dual optimal values.

% \textcolor{red}{Why do we have convexity per player?}
% \textcolor{blue}{We don't, the strong duality comes from the fact that we have a convex program + slater's condition which works as a constrained qualification even when the potential is non-convex (Bertsekas, chapter 3)}

\begin{assumption}[Slater's Condition] \label{assumption:slater}
For any player $i \in [n]$ and any constraint $m \in [d_i]$, there is strategy profile $\Tilde{\vx}_i$ such that $g_{i,m}(\Tilde{\vx}_i) < \xi_{i,m}$ for a strictly negative $\xi_{i,m}$.
\end{assumption}

Slater's condition requires the existence of a strictly feasible point under the constraints. Constraints qualifications, such as Slater's condition, are a standard way to obtain strong duality. It is worth noting that Slater's condition only requires the existence of a point that strictly satisfies the constraints, and does not require an a priori knowledge of any candidate optimal solution, such as regularity.

\begin{lemma}[Strong duality per player]
\label{lemma:strong_duality_original}
For any player $i \in [n]$ and for any joint strategy of the other players $\vx_{-i}$ along with their Lagrange multipliers $\vlambda_{-i}$, strong duality holds.

\begin{align}
\min\limits_{\vx_i \in \Delta(\calA_i)} 
\max\limits_{\vlambda_i \succeq \bm{0}} 
\calL_i (\vx_i, \vlambda_i ; \vx_{-i}, \vlambda_{-i}) \nonumber \\
=
\max\limits_{\vlambda_i \succeq \bm{0}}
\min\limits_{\vx_i \in \Delta(\calA_i)} 
\calL_i (\vx_i, \vlambda_i ; \vx_{-i}, \vlambda_{-i})
\label{eq:strong_duality_original}
\end{align}

where $\calL_i$ is the player-wise Lagrangian function defined in Appendix.
\end{lemma}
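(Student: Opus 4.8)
The plan is to reduce the claimed player-wise strong duality to the classical Sion minimax theorem (or, equivalently, to the standard strong-duality-from-Slater argument for convex programs). Fix a player $i \in [n]$ and fix the other players' strategies $\vx_{-i}$ together with their multipliers $\vlambda_{-i}$; these are constants for the entire argument. First I would observe that, with $\vx_{-i}$ held fixed, the player-wise Lagrangian $\calL_i(\vx_i,\vlambda_i;\vx_{-i},\vlambda_{-i}) = \Phi(\vx_i,\vx_{-i}) + \vlambda_i^\top \vg_i(\vx_i) + (\text{terms not depending on } \vx_i,\vlambda_i)$ is, as a function of $\vx_i$ alone, convex: this is precisely the ``convexity per player'' property of the potential highlighted in the Techniques paragraph, combined with the assumed convexity of each $g_{i,m}$ and the non-negativity of $\lambda_{i,m}$, so $\vx_i \mapsto \calL_i(\vx_i,\vlambda_i;\cdot)$ is a sum of a convex function and a non-negative combination of convex functions. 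As a function of $\vlambda_i \succeq \bm 0$, $\calL_i$ is affine, hence concave. The domain $\Delta(\calA_i)$ is convex and compact, and the domain $\{\vlambda_i \succeq \bm 0\}$ is convex.

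Next I would invoke Sion's minimax theorem on the pair of domains $\Delta(\calA_i)$ (compact convex) and $\mathbb{R}_{+}^{d_i}$ (convex), using convexity in $\vx_i$ and concavity (affineness) in $\vlambda_i$, to conclude that $\inf_{\vx_i}\sup_{\vlambda_i}\calL_i = \sup_{\vlambda_i}\inf_{\vx_i}\calL_i$. The ``min'' and ``max'' in the statement — as opposed to ``inf''/``sup'' — I would justify as follows: the inner minimization over the compact set $\Delta(\calA_i)$ of the continuous function $\calL_i(\cdot,\vlambda_i;\cdot)$ is attained; for the outer maximization over $\vlambda_i$ one argues, using Slater's condition (\cref{assumption:slater}), that the supremum over $\vlambda_i$ is attained at a finite point, since $h_i(\vlambda_i) := \inf_{\vx_i}\calL_i \to -\infty$ as $\|\vlambda_i\|\to\infty$ (the strictly feasible $\tilde{\vx}_i$ makes $\vlambda_i^\top \vg_i(\tilde{\vx}_i)$ strictly negative and unbounded below along any ray), so the maximization can be restricted to a compact set. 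Alternatively, and perhaps more in keeping with the surrounding exposition, I would instead cite the textbook fact that for a convex program satisfying Slater's condition the primal and dual optimal values coincide and the dual optimum is attained (e.g. \cite{bertsekas1997nonlinear}), applied to the convex program $\min_{\vx_i \in \Delta(\calA_i)} \Phi(\vx_i,\vx_{-i})$ s.t. $\vg_i(\vx_i)\le \bm 0$; the Lagrangian of this program is exactly $\calL_i$ up to the additive constant, which does not affect the min or the max, so the equality \eqref{eq:strong_duality_original} follows.

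One technical point I would be careful about: Slater's condition as stated in \cref{assumption:slater} asserts a strictly feasible point $\tilde{\vx}_i$ for player $i$'s own constraints, and crucially this point can be taken independent of $\vx_{-i}$ since the constraints $g_{i,m}$ depend only on $\vx_i$; so the constraint qualification needed for the fixed-$\vx_{-i}$ convex program is available uniformly. I would also note that weak duality (\cref{def:weak_duality}), which gives the $\ge$ direction of \eqref{eq:strong_duality_original} for free, means only the $\le$ direction needs the Slater-based argument.

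The main obstacle I anticipate is bookkeeping rather than conceptual: cleanly isolating which terms of $\calL_i$ depend on $(\vx_i,\vlambda_i)$ versus which are constants carried along from $\vlambda_{-i}$ and $\vx_{-i}$ (this requires the precise definition of the player-wise Lagrangian $\calL_i$ deferred to the appendix), and verifying that ``convexity per player'' of $\Phi$ is exactly the hypothesis that licenses applying convex strong duality here. Once the fixed-$\vx_{-i}$ slice is correctly identified as an ordinary convex program with a Slater point, the result is immediate from standard convex duality.
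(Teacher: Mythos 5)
Your proposal is correct, and it supplies exactly the standard argument the paper itself leaves implicit: the paper states this lemma without any proof, justifying it only by the remark that per-player convexity of $\Phi$ (in fact linearity of the multilinear extension in $\vx_i$) together with Slater's condition yields strong duality for the fixed-$\vx_{-i}$ convex program, which is precisely your route via Sion/convex duality plus the coercivity-of-the-dual argument for attainment of the outer max. Your added care about the Slater point $\Tilde{\vx}_i$ being independent of $\vx_{-i}$ and about weak duality giving one inequality for free is consistent with how the paper later uses the lemma, so there is nothing to flag.
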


\begin{remark} \label{remark:1}
In the LHS of \cref{eq:strong_duality_original}, the minimizer $\vx$ selects their strategy first. It is crucial for $\vx$ to choose a feasible point that satisfies the constraints; otherwise, the Lagrange multipliers $\vlambda$ can be set arbitrarily large. Moreover, if $\vx$ is a feasible point, then it implies that the Lagrange multipliers $\vlambda$ must be zero, which results in an alternative expression for strong duality.

\begin{equation} \label{eq:strong_duality}
\min\limits_{\vx_i \in \calS_i} 
\Phi(\vx_i, \vx_{-i})
=
\max\limits_{\vlambda_i \succeq \bm{0}}
\min\limits_{\vx_i \in \Delta(\calA_i)} 
\calL_i (\vx_i, \vlambda_i ; \vx_{-i}, \vlambda_{-i})
\end{equation}

where $\calS_i$ is the feasibility set of player $i$.
\end{remark}
% \textcolor{red}{Does this mean we can set lambda to zero without loss of generality?}
% \textcolor{blue}{It means that for a feasible solution, lambdas automatically are set to zero as there is no justification to penalize the player.}

\begin{assumption}
\label{assumption:smooth_constraints}
For any player $i \in \calN$ and any constraint $m \in [d_i]$, the function $g_{i,m}$ is convex and $\gamma$-smooth.
\end{assumption}
\subsection{Necessity of Convex Constraints} \label{section:concave_constraints}

In this subsection, we provide a brief explanation of why the convexity of the constraints is a necessary condition for a first-order methods algorithm to find a Nash equilibrium. 

In \cref{fig:nonconcave_constraints}, we encounter a problematic scenario that highlights a key challenge. Let's assume we have applied a first-order method and found a stationary point $x$. This stationary point suggests that in the neighborhood of $x$, further improvements in the objective function are not possible.

\begin{figure}
    \centering    \includegraphics[width=0.3\linewidth]{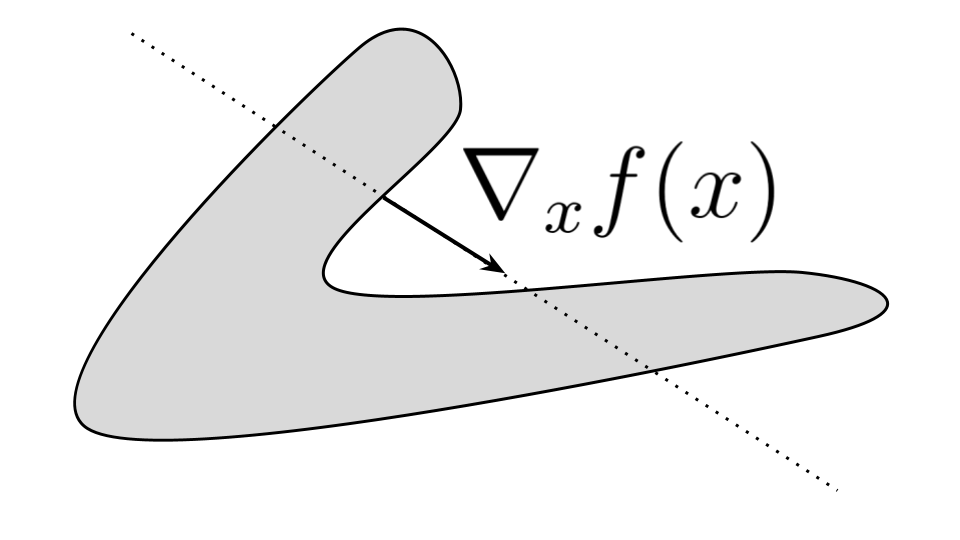}
    \caption{Nonconvex feasibility set}
    \label{fig:nonconcave_constraints}
\end{figure}

This scenario can also be interpreted within the context of zero-sum games. Consider the player $x$, who acts as the minimizer. In this situation, $x$ does not have an incentive to move left as it would lead to being outside the feasible set. Hence, then the maximizer $y$, operating as the Lagrange multiplier in \ref{def:lagrangian_function}, can penalize $x$ by setting $y$ to infinity. Conversely, if $x$ moves right, the objective performance would worsen. This phenomenon captures the notion of a local min-max point.

\begin{definition}[Local min-max point \cite{daskalakis2018limit}]
\label{def:local_min_max_points}
A critical point $(x^{\star}, y^{\star})$ is a local min-max point if there exists a neighborhood $U$ around $(x^{\star}, y^{\star})$ so that for all $(x,y) \in U$ we have that $f(x^{\star}, y) \leq f (x^{\star}, y^{\star}) \leq f (x, y^{\star})$.
\end{definition}

However, relying solely on the concept of a local min-max point is insufficient for our objective. As depicted in Figure \ref{fig:nonconcave_constraints}, although neither player has an incentive to change their strategy within a local neighborhood (i.e., for $\epsilon$ deviations), we observe that there exists a direction along the extension of the gradient line, which, unfortunately, falls into the feasibility set.
\subsection{Regularized Lagrangian}
\label{section:regularized_lagrangian}

In this subsection, we provide a brief explanation of why regularization is an essential component of our proposed solution. To understand this, let us discuss the method of Lagrange multipliers briefly. For a more detailed exposition, we refer to \cite{boyd2004convex}.

\begin{definition}[Regularized Lagrangian function]
Suppose $\calL (\vx, \vlambda)$ is the Lagrangian function as defined in \cref{def:lagrangian_function}. We can define the regularized version $\Tilde{\calL}$ by adding the regularization term $ - \mu \norm{\vlambda}^2$ as follows:
\begin{equation} \label{def:reg_lagrangian_function}
\Tilde{\calL} (\vx, \vlambda) \coloneqq \calL (\vx, \vlambda) - \mu \norm{\vlambda}^2
\end{equation}
\end{definition}

As previously mentioned, the intuition behind the Lagrange method is to introduce a new function that incorporates the constraints, transforming the problem into an unconstrained one. The Lagrange multipliers act as linear penalty terms on the objective. Consequently, whenever $\vx$ is an infeasible point, that means it does not satisfy the constraints, the maximization in the dual problem (as defined in \cref{def:dual_problem}) can set $\vlambda$ arbitrarily large. 

The key distinction lies in the domain of the Lagrange multipliers $\vlambda$. In the original formulation, the domain of $\vlambda$ is unbounded, allowing for arbitrary values. Most widely-used methods for computing first-order stationary points are designed specifically for optimization problems with bounded domains. However, when we introduce bounded domains for the Lagrange multipliers $\vlambda$, a potential issue arises: we cannot determine if the first-order stationary points obtained using these methods are genuine $(\vlambda \succeq \bm{0})$ stationary points in the unconstrained case or if they are introduced due to the restriction. In other words, these bounded domains may prevent the optimizer from reaching the true optimal point that would exist in the unconstrained case.

The regularization term in our formulation plays an important role for two reasons. Firstly, it indirectly limits the magnitude of the Lagrange multipliers. Although one might argue that this modifies the original game, it is important to note that as the regularization parameter $\mu$ approaches zero, the regularized game converges to the original game. Moreover, for sufficiently small values of $\mu$, it governs the alternation of the game. In other words, it allows us to control the impact of the regularization based on how closely the solutions approach (in value) the unconstrained case. As we will demonstrate shortly, the parameter $\mu$ also controls the approximation of the first-order stationary point. Another important aspect of regularization is that it makes the problem with respect to $\vlambda$ strongly concave. Strong concavity guarantees the uniqueness of a maximizer and assists the analysis.
\textit{}\section{Main Result} \label{section:main_result}

The main contribution of this section is divided into two parts. First, we present an algorithm specifically designed for computing first-order stationary points, as defined in \cref{def:approximate_stationary_points}. This algorithm serves as the core computational tool of our approach. Then, we provide a complete set of statements and proofs that establish the correctness of our solution. These statements form the basis for understanding the theoretical underpinnings of our approach.

\subsection{Algorithm \texorpdfstring{\igd}{IGD}} 
\label{section:algorithm}
We present here our proposed solution. Algorithm \igd is a natural and intuitive procedure that essentially performs (projected) gradient descent in a special function, $\phi(\vx) = \max_{\vlambda} \Tilde{\calL} (\vx, \vlambda)$. The first step of the algorithm involves a maximization step, which can be efficiently performed without the need for a max-oracle, thanks to the structure of $\phi(\vx)$ as explained in \cref{section:analysis}. Then, each agent independently performs a step of projected gradient descent on $\Tilde{\calL}$. However, despite its simplicity, Algorithm \igd provides strong guarantees, as we will demonstrate next.
\begin{algorithm}[t]
\caption{ \igd : Independent Gradient Descent on $\phi(\cdot) = \max_{\vlambda} \Tilde{\calL} (\cdot, \vlambda)$ of the regularized Lagrangian $\Tilde{\calL}$}
\label{algo}
\textbf{Output: } $\vxhat$ first order stationary point. \\
\textbf{Initialize: } $\vx^{(0)} \in \Delta^n $\\
\For{$t = 1, 2, \ldots, T$}
{
$
\vlambda^{(t)} 
\leftarrow 
\argmax\limits_{\vlambda}
\left( \vlambda^{\top} \vg(\vx^{(t)}) - \mu \norm{\vlambda}^2 \right)
$
\\
$\vx_{i}^{(t+1)} 
\leftarrow  
% \operatorname{P}_{\Delta(\calA_i)}
\operatorname{\Pi}_i
\left( 
\vx_{i}^{(t)} 
-
\eta 
\left(
\nabla_{\vx_i} C_i(\vx^{(t)}) + \vlambda_{i}^{\top} \nabla_{\vx_i} \vg_i(\vx_{i}^{(t)})
\right)
\right)
$ \algcomment{for all players $i \in \calN$}
}
\textbf{Return} $\vxhat = (\vxhat_1, \ldots, \vxhat_n)$
\end{algorithm}
\subsection{Analysis of Algorithm \texorpdfstring{\igd}{IGD}} 
\label{section:analysis}
In this section, we present a complete proof of our main theorem. We provide a clear roadmap outlining the steps we will take to establish the proof. First, we will argue about the smoothness of the regularized Lagrangian function. By doing so, we will be able to establish the smoothness of the function $\phi(\cdot) = \max_{\vlambda} \Tilde{\calL} (\cdot, \vlambda)$. Next, we will prove some boundedness results on the values of $\vlambda$, both for the case of (approximately) optimal values and for arbitrary values. These bounds are important to ensure that the optimization process remains well-behaved.

\begin{lemma}[Strongly Concavity] \label{lemma:unique_maximizer}
Let $\vxhat$ be an arbitrary joint strategy profile. Then, the function $\Tilde{\calL} (\vxhat, \vlambda) = \calL (\vxhat, \vlambda) - \mu \norm{\vlambda}^{2}$ is strongly concave in $\vlambda$, and so the maximizer $\vlambdahat = \argmax_{\vlambda} \Tilde{\calL} (\vxhat, \vlambda) $ is unique.
\end{lemma}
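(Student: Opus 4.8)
The plan is to show that the Hessian of $\Tilde{\calL}(\vxhat, \cdot)$ with respect to $\vlambda$ is negative definite, uniformly bounded away from zero, which immediately yields strong concavity; strong concavity of a continuous function over the (closed, convex) domain $\vlambda \succeq \bm{0}$ then gives existence and uniqueness of the maximizer by a standard argument.

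First I would write out $\Tilde{\calL}(\vxhat, \vlambda) = \Phi(\vxhat) + \vlambda^{\top} \vg(\vxhat) - \mu \norm{\vlambda}^2$ and observe that, with $\vxhat$ fixed, the first term is a constant and the second is linear in $\vlambda$, so the only curvature comes from the regularizer. Concretely, $\nabla_{\vlambda} \Tilde{\calL}(\vxhat, \vlambda) = \vg(\vxhat) - 2\mu \vlambda$ and $\nabla^2_{\vlambda} \Tilde{\calL}(\vxhat, \vlambda) = -2\mu \, \bm{I}_d$, which is negative definite whenever $\mu > 0$. Hence for any $\vlambda, \vlambda'$ and any $s \in [0,1]$ one has $\Tilde{\calL}(\vxhat, s\vlambda + (1-s)\vlambda') \geq s \Tilde{\calL}(\vxhat, \vlambda) + (1-s)\Tilde{\calL}(\vxhat, \vlambda') + \mu s(1-s)\norm{\vlambda - \vlambda'}^2$, i.e. the function is $2\mu$-strongly concave in $\vlambda$.

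For the uniqueness-of-maximizer claim I would invoke the standard fact that a strongly concave function on a nonempty closed convex set attains its maximum at a unique point: existence follows because $\Tilde{\calL}(\vxhat, \vlambda) \to -\infty$ as $\norm{\vlambda} \to \infty$ (the $-\mu\norm{\vlambda}^2$ term dominates the linear term), so the supremum over $\vlambda \succeq \bm{0}$ is attained on a bounded sublevel set, which is compact; uniqueness follows because if two distinct points $\vlambda_1 \neq \vlambda_2$ both achieved the maximum $v$, strong concavity applied at their midpoint would give a strictly larger value $v + \tfrac{\mu}{4}\norm{\vlambda_1 - \vlambda_2}^2 > v$, a contradiction. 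In fact one can even write the maximizer explicitly as the Euclidean projection of $\vg(\vxhat)/(2\mu)$ onto the nonnegative orthant, namely $\vlambdahat = \max\{\vg(\vxhat)/(2\mu), \bm{0}\}$ coordinatewise, which makes both existence and uniqueness transparent.

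I do not anticipate a serious obstacle here: the result is essentially immediate once one notes that fixing $\vxhat$ linearizes every term except the regularizer. The only point requiring a little care is making sure the domain over which we maximize is specified (the paper maximizes over $\vlambda \succeq \bm{0}$ in the algorithm but writes $\argmax_{\vlambda}$ unqualified in the lemma) and that the argument handles a constrained rather than free maximization; since the nonnegative orthant is closed and convex, the strong-concavity argument goes through verbatim and the coordinatewise-projection formula confirms it.
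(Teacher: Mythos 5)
Your proposal is correct and follows essentially the same route as the paper's own proof: with $\vxhat$ fixed, $\calL(\vxhat,\cdot)$ is linear in $\vlambda$, so all curvature comes from the $-\mu\norm{\vlambda}^2$ term, whose Hessian $-2\mu \bm{I}$ gives $2\mu$-strong concavity and hence a unique maximizer. Your treatment is in fact slightly more careful than the paper's, since you also address existence via coercivity on the closed domain $\vlambda \succeq \bm{0}$ and note the explicit coordinatewise formula $\vlambdahat = \max\{\vg(\vxhat)/(2\mu), \bm{0}\}$, points the paper leaves implicit.
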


% \begin{lemma}[Bounded norm of multipliers] \label{lemma:bounded_maximizer}
% Let $\vxhat$ be an arbitrary joint strategy profile. Then, the maximizer $\vlambdahat = \argmax_{\vlambda} \{ \Tilde{\calL}( \vxhat, \vlambda) \} $ has bounded norm, i.e. $\norm{\vlambda} \leq \frac{G_{\maxtag}}{}$ is bounded above by $\Lambda_{\maxtag}$ component-wise, where $\Lambda_{\maxtag} = \frac{G_{\maxtag}}{2 \mu}, G_{\maxtag} = \max\limits_{i \in \calN} G^{i}_{\maxtag}$ and $ G^{i}_{\maxtag} = \max\limits_{\bm{x}_i \in \Delta(\mathcal{A}_i)} \max\limits_{m \in [d_i]} g_{i,m}(\vx_i)$ .
% \end{lemma}

\begin{lemma}[Bounded norm of multipliers] \label{lemma:bounded_maximizer}
Let $\vxhat$ be an arbitrary joint strategy profile. Then, the maximizer $\vlambdahat = \argmax_{\vlambda} \{ \Tilde{\calL}( \vxhat, \vlambda) \} $ has bounded norm, i.e. $\norm{\vlambdahat} \leq \frac{\sqrt{d} G_{\maxtag}}{2 \mu} = \Lambda_{\maxtag}$, where $G_{\max} = \max\limits_{i \in \calN} \max\limits_{m \in [d_i]} \max\limits_{\vx_i \in \Delta(\calA_i)} g_{i,m}(\vx_i)$.
\end{lemma}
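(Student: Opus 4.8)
The plan is to exploit the strong concavity of $\Tilde{\calL}(\vxhat, \cdot)$ established in \cref{lemma:unique_maximizer} together with the explicit form of the closed-form maximizer over the nonnegative orthant. First I would write out the optimality (KKT) condition for the concave maximization problem $\max_{\vlambda \succeq \bm{0}} \{ \vlambda^{\top} \vg(\vxhat) - \mu \norm{\vlambda}^2 \}$. Since the objective is separable across the coordinates of $\vlambda$ — it equals $\sum_{i,m} ( \lambda_{i,m} g_{i,m}(\vxhat_i) - \mu \lambda_{i,m}^2 )$ — the maximizer decomposes coordinatewise, and each coordinate is the maximizer of a one-dimensional concave quadratic over $[0,\infty)$. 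Solving $\tfrac{d}{d\lambda_{i,m}}(\lambda_{i,m} g_{i,m} - \mu \lambda_{i,m}^2) = 0$ gives the unconstrained optimum $g_{i,m}(\vxhat_i)/(2\mu)$, and after projecting onto $\lambda_{i,m} \geq 0$ we obtain
\begin{equation}
\hat\lambda_{i,m} = \frac{\max\{ g_{i,m}(\vxhat_i), 0 \}}{2\mu}.
\end{equation}

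Next I would bound each coordinate. By the definition of $G_{\max} = \max_{i}\max_{m}\max_{\vx_i \in \Delta(\calA_i)} g_{i,m}(\vx_i)$, we have $\max\{g_{i,m}(\vxhat_i),0\} \leq \max\{G_{\max}, 0\} \le G_{\max}$ (and $G_{\max} \ge 0$ may be assumed, or else all multipliers are zero and the bound is trivial), so $0 \le \hat\lambda_{i,m} \le \tfrac{G_{\max}}{2\mu}$ for every one of the $d$ coordinates. Then I would assemble the Euclidean norm:
\begin{equation}
\norm{\vlambdahat} = \left( \sum_{i=1}^{n}\sum_{m=1}^{d_i} \hat\lambda_{i,m}^2 \right)^{1/2} \le \left( d \cdot \frac{G_{\max}^2}{4\mu^2} \right)^{1/2} = \frac{\sqrt{d}\, G_{\max}}{2\mu} = \Lambda_{\max},
\end{equation}
which is exactly the claimed bound.

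I do not expect any genuine obstacle here; the statement is essentially a consequence of the closed-form solution of the regularized inner problem. The only points requiring a little care are (i) justifying that the maximization genuinely decomposes coordinatewise and that the projection onto $\vlambda \succeq \bm{0}$ acts coordinatewise (immediate from separability of both the objective and the constraint set), and (ii) handling the sign of $G_{\max}$ — if $G_{\max} < 0$ then $\vxhat$ is strictly feasible, every $\hat\lambda_{i,m}=0$, and the bound holds vacuously; otherwise $G_{\max} \ge 0$ and the chain of inequalities above goes through verbatim. This lemma will presumably be used downstream to restrict the effective domain of $\vlambda$ to the ball of radius $\Lambda_{\max}$, so that the maximization step in Algorithm \igd can be carried out over a bounded set and the smoothness/boundedness arguments for $\phi(\cdot) = \max_{\vlambda}\Tilde{\calL}(\cdot,\vlambda)$ go through.
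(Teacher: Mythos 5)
Your proof is correct and follows essentially the same route as the paper's: both derive a closed form for the maximizer from the first-order optimality condition of the strongly concave quadratic in $\vlambda$ and then assemble the Euclidean norm to get $\sqrt{d}\,G_{\max}/(2\mu)$. You are in fact slightly more careful than the paper, which sets the unconstrained gradient to zero and bounds $(g_{i,m}(\vxhat_i))^2 \le G_{\max}^2$ without accounting for the constraint $\vlambda \succeq \bm{0}$ or for constraints with large negative values; your coordinatewise formula $\hat\lambda_{i,m} = \max\{g_{i,m}(\vxhat_i),0\}/(2\mu)$ handles both points cleanly.
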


Having established the strong concavity and boundedness of the Lagrange multipliers, we now continue on showing the smoothness of $\Tilde{\calL}$. We use $A_{\maxtag}$ to denote the maximum number of actions over all agents, i.e. $A_{\maxtag} \coloneqq \max\limits_{i \in \calN} |\calA_i|$ and $\Phi_{\max} \coloneqq \max\limits_{\vx} \Phi(\vx)$ for the maximum value of the potential function.

\begin{lemma}[Smoothness of $\Phi$] \label{lemma:smoothness_potential}
The potential function $\Phi$ is $(n A_{\maxtag} \Phi_{\maxtag})$-smooth. 
\end{lemma}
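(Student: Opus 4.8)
The plan is to show that $\Phi$ is smooth by bounding the operator norm of its Hessian $\nabla^2_{\vx} \Phi(\vx)$ uniformly over $\Delta^n$, which immediately gives a Lipschitz bound on the gradient. Since $\Phi : \Delta^n \to \mathbb{R}$ is a multilinear function (it is the expectation $\Phi(\vx) = \mathbb{E}_{\va \sim \vx}[\phi(\va)]$ of the potential $\phi$ on pure profiles under the product distribution $\vx = \times_i \vx_i$), its Hessian has a very special block structure: the diagonal blocks $\nabla^2_{\vx_i \vx_i} \Phi$ vanish (multilinearity means $\Phi$ is affine in each $\vx_i$ separately), and the off-diagonal block $\nabla^2_{\vx_i \vx_j} \Phi$ for $i \neq j$ has entries $\partial^2 \Phi / \partial x_i(a_i) \partial x_j(a_j)$, each of which is itself an expectation of $\phi$ over the remaining players' strategies and hence bounded in absolute value by $\Phi_{\maxtag}$.

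The key steps, in order: (1) Record that $\Phi$ is multilinear in the blocks $\vx_1,\dots,\vx_n$ — this follows from \cref{def:potential_function} combined with the fact that in a normal-form game $C_i(\vx)$ is multilinear, so $\Phi$ differs from $C_i$ by a function not depending on $\vx_i$ and is therefore affine in each $\vx_i$; iterating over $i$ gives full multilinearity. (2) Conclude $\nabla^2_{\vx_i\vx_i}\Phi = 0$ for every $i$, so the Hessian $H := \nabla^2_{\vx}\Phi$ is block-hollow with blocks $H_{ij} := \nabla^2_{\vx_i\vx_j}\Phi$ for $i \ne j$. (3) Bound each entry: $\bigl|[H_{ij}]_{a_i,a_j}\bigr| = \bigl|\mathbb{E}_{\va_{-i-j}}[\phi(a_i,a_j,\va_{-i-j})]\bigr| \le \Phi_{\maxtag}$. (4) Bound the spectral norm of $H$: write $\|H\|_{op} \le \|H\|_F \le \sqrt{\sum_{i\ne j} \|H_{ij}\|_F^2}$; since $H_{ij}$ is $|\calA_i|\times|\calA_j|$ with entries at most $\Phi_{\maxtag}$ in absolute value, $\|H_{ij}\|_F \le \Phi_{\maxtag}\sqrt{|\calA_i||\calA_j|} \le \Phi_{\maxtag} A_{\maxtag}$, and there are at most $n^2$ such blocks, giving $\|H\|_{op} \le n A_{\maxtag}\Phi_{\maxtag}$. (Alternatively one can bound $\|H\|_{op}$ by the maximum over unit vectors of $\vv^\top H \vv = \sum_{i\ne j}\vv_i^\top H_{ij}\vv_j$ and use Cauchy–Schwarz block-wise; either route lands at the same constant up to the details of how the $\sqrt{|\calA_i|}$ factors are aggregated.) (5) Integrate: for any $\vx,\vx' \in \Delta^n$, $\|\nabla\Phi(\vx) - \nabla\Phi(\vx')\| \le \sup_{\vz}\|\nabla^2\Phi(\vz)\|_{op}\,\|\vx-\vx'\| \le (n A_{\maxtag}\Phi_{\maxtag})\|\vx-\vx'\|$, which is exactly the claimed smoothness.

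The main obstacle is the bookkeeping in step (4): getting the constant to come out as exactly $n A_{\maxtag} \Phi_{\maxtag}$ rather than, say, $n A_{\maxtag}\Phi_{\maxtag}$ times an extra $\sqrt{\cdot}$ factor requires choosing the right norm inequality and being slightly careful about whether one sums over ordered or unordered pairs $(i,j)$ and how the $|\calA_i|$ are dominated by $A_{\maxtag}$. A clean way to avoid an over-count is to bound directly $\vv^\top H \vv = \sum_{i} \vv_i^\top \bigl(\sum_{j\ne i} H_{ij}\vv_j\bigr)$ and note each $\|\sum_{j\ne i}H_{ij}\vv_j\| \le (n-1)\Phi_{\maxtag}A_{\maxtag}\max_j\|\vv_j\|$; since $\sum_i\|\vv_i\|^2 = 1$ this yields the bound with room to spare. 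Everything else is routine: multilinearity is essentially immediate from the normal-form structure, and the entrywise bound on $H_{ij}$ is just "an expectation of $\phi$ is at most $\max\phi = \Phi_{\maxtag}$."
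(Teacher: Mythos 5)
Your proposal is correct and follows essentially the same route as the paper: bound each Hessian block $\nabla^2_{\vx_i\vx_j}\Phi$ entrywise by $\Phi_{\maxtag}$, pass to the Frobenius norm to get a per-block bound of $A_{\maxtag}\Phi_{\maxtag}$, and aggregate over the $n\times n$ block structure to obtain spectral norm at most $nA_{\maxtag}\Phi_{\maxtag}$. The only cosmetic difference is the aggregation step --- the paper invokes \cref{prop:spectral_norm_block_matrix} (spectral norm of a block matrix is at most $n$ times the maximal block spectral norm), whereas you bound the global Frobenius norm over at most $n^2$ blocks directly --- and your observation that the diagonal blocks vanish by multilinearity is true but not needed.
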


\begin{lemma}[Smoothness of $\calL$] \label{lemma:smoothness_lagrangian}
The Lagrangian function $\calL$ is $(n A_{\maxtag} \Phi_{\maxtag} + \Lambda_{\maxtag} \gamma)$-smooth,
where $\gamma$ is the smoothness of the constraints.
\end{lemma}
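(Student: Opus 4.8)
The plan is to prove smoothness of $\calL$ by differentiating the definition in \cref{eq:lagrangian_function} twice and bounding the Hessian's operator norm, splitting the bound into the contribution from $\Phi$ and the contribution from the penalty term. First I would write $\nabla^2_{\vx} \calL(\vx, \vlambda) = \nabla^2_{\vx} \Phi(\vx) + \sum_{i=1}^{n} \sum_{m=1}^{d_i} \lambda_{i,m} \nabla^2_{\vx} g_{i,m}(\vx_i)$, observing that the constraint term $g_{i,m}$ depends only on $\vx_i$, so its Hessian sits in a single diagonal block and the cross terms between distinct players vanish. By the triangle inequality for the operator norm, $\norm{\nabla^2_{\vx} \calL} \leq \norm{\nabla^2_{\vx} \Phi} + \sum_{i,m} \lambda_{i,m} \norm{\nabla^2_{\vx_i} g_{i,m}}$. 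Then \cref{lemma:smoothness_potential} bounds the first term by $n A_{\maxtag} \Phi_{\maxtag}$, and the hypothesis that each $g_{i,m}$ is $\gamma$-smooth (\cref{assumption:smooth_constraints}) bounds each $\norm{\nabla^2_{\vx_i} g_{i,m}}$ by $\gamma$.

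The remaining work is to control $\sum_{i,m} \lambda_{i,m}$. Here is where I would invoke \cref{lemma:bounded_maximizer}: since in Algorithm \igd the multiplier vector $\vlambda$ is always taken to be the maximizer $\vlambdahat = \argmax_{\vlambda} \Tilde{\calL}(\vx, \vlambda)$, we have $\norm{\vlambda} \leq \Lambda_{\maxtag}$. Because the $\lambda_{i,m}$ are nonnegative and there are $d$ of them, one could use $\sum_{i,m} \lambda_{i,m} = \norm{\vlambda}_1 \leq \sqrt{d}\,\norm{\vlambda}_2 \leq \sqrt{d}\,\Lambda_{\maxtag}$; more directly, since each $\norm{\nabla^2 g_{i,m}} \le \gamma$ and $\vlambda \succeq \bm 0$, the block structure gives $\norm{\sum_{i,m}\lambda_{i,m}\nabla^2_{\vx_i} g_{i,m}} \le \max_i \bigl(\sum_{m=1}^{d_i}\lambda_{i,m}\bigr)\gamma \le \norm{\vlambda}_1\,\gamma$. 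To land exactly on the stated constant $\Lambda_{\maxtag}\gamma$, I would argue per block: the Hessian of $\calL$ restricted to player $i$'s coordinates is $\nabla^2_{\vx_i}\Phi + \sum_{m}\lambda_{i,m}\nabla^2_{\vx_i} g_{i,m}$, and since $g_{i,m}$ is convex its Hessian is PSD, so the penalty block has operator norm at most $(\sum_m \lambda_{i,m})\gamma \le \norm{\vlambdahat}\,\gamma \le \Lambda_{\maxtag}\gamma$; combining across the (block-diagonal-in-the-penalty-part) structure with the global $n A_{\maxtag}\Phi_{\maxtag}$ bound on $\nabla^2\Phi$ yields the claim. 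I would then also remark that $\calL$ is continuously differentiable because $\Phi$ is multilinear in the $\vx_i$ and each $g_{i,m}$ is smooth, so the Hessian bound indeed certifies smoothness.

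The main obstacle is the bookkeeping around exactly which norm of $\vlambda$ appears and whether the intended constant is $\Lambda_{\maxtag}\gamma$ with $\Lambda_{\maxtag} = \sqrt d\, G_{\maxtag}/(2\mu)$ already absorbing a $\sqrt d$, or whether a further factor is hidden; I would reconcile this by being careful that $\Lambda_{\maxtag}$ from \cref{lemma:bounded_maximizer} is a bound on $\norm{\vlambdahat}_2$, and that the per-block sum $\sum_m \lambda_{i,m} \le \norm{\vlambdahat}_1 \le \norm{\vlambdahat}_2 \cdot \sqrt{d_i}$ — if the paper wants the clean constant $\Lambda_{\maxtag}\gamma$ it is implicitly using $\sum_m \lambda_{i,m} \le \norm{\vlambdahat} \le \Lambda_{\maxtag}$ for a single active block or treating $\norm{\vlambdahat}_1 \le \norm{\vlambdahat}_2$ loosely; I would state the bound in whichever of these forms matches the downstream usage and flag the constant explicitly. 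The rest is a routine triangle-inequality argument, so I do not expect any genuine difficulty beyond this constant-tracking.
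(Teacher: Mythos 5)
Your proposal is correct and follows essentially the same route as the paper's (one-sentence) proof: decompose $\nabla^2_{\vx}\calL$ into the $\Phi$ part and the penalty part via the triangle inequality, bound the former by \cref{lemma:smoothness_potential}, and bound the latter by a Cauchy--Schwarz-type argument combining the $\gamma$-smoothness of the constraints with the multiplier bound $\norm{\vlambdahat}\le\Lambda_{\maxtag}$ from \cref{lemma:bounded_maximizer}. Your observation that the clean constant $\Lambda_{\maxtag}\gamma$ may hide an extra $\sqrt{d_i}$ from the $\ell_1$--$\ell_2$ comparison on the per-player sum $\sum_m\lambda_{i,m}$ is a fair point that the paper's terse proof does not resolve, but it affects only the constant, not the structure or validity of the argument.
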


% Since the differentiation is performed with respect to $\vx$ rather than $\vlambda$, it immediately follows that the regularized Lagrangian $\Tilde{\calL}$ shares the same smoothness parameter as $\calL$. This fact can be formally expressed in the following corollary.

By combining \cref{lemma:smoothness_potential} and \cref{lemma:smoothness_lagrangian}, and applying the triangle inequality, we can establish the boundedness of the spectral norm of $\Tilde{\calL}$. 

\begin{lemma}[Smoothness of $\Tilde{\calL}$] 
\label{lemma:smoothness_reg_lagrangian}
The regularized Lagrangian $\Tilde{\calL}$ is $(n A_{\maxtag} \Phi_{\maxtag} + \Lambda_{\maxtag} \gamma+2 \mu)$-smooth.
\end{lemma}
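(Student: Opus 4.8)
The plan is to prove the claimed smoothness of $\Tilde{\calL}$ by decomposing its Hessian (equivalently, bounding the spectral norm of the Jacobian of its gradient in the joint variable $(\vx,\vlambda)$) into three pieces corresponding to the three summands in the definition $\Tilde{\calL}(\vx,\vlambda) = \Phi(\vx) + \vlambda^\top \vg(\vx) - \mu\norm{\vlambda}^2$, and then to add the three smoothness constants via the triangle inequality on operator norms. This is exactly the strategy flagged in the sentence preceding the lemma, so the work is to make it precise.

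First I would recall that \cref{lemma:smoothness_lagrangian} already establishes that $\calL(\vx,\vlambda) = \Phi(\vx) + \vlambda^\top \vg(\vx)$ is $(n A_{\maxtag}\Phi_{\maxtag} + \Lambda_{\maxtag}\gamma)$-smooth; crucially this bound was derived treating $\vlambda$ as ranging over the bounded set $\norm{\vlambda}\le\Lambda_{\maxtag}$ guaranteed by \cref{lemma:bounded_maximizer}, which is the regime in which Algorithm \igd operates (since $\vlambda^{(t)}$ is always the constrained maximizer $\vlambdahat$). Second, I would observe that the only remaining term is the regularizer $r(\vlambda) := -\mu\norm{\vlambda}^2$, whose gradient with respect to $\vlambda$ is $-2\mu\vlambda$ and whose gradient with respect to $\vx$ is $\bm 0$; hence its Hessian is the block-diagonal matrix $\operatorname{diag}(\bm 0,\, -2\mu I_d)$, which has spectral norm exactly $2\mu$. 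Therefore $r$ is $2\mu$-smooth. Third, since $\Tilde{\calL} = \calL + r$ and the operator norm is subadditive, $\nabla \Tilde{\calL}$ is Lipschitz with constant at most $(n A_{\maxtag}\Phi_{\maxtag} + \Lambda_{\maxtag}\gamma) + 2\mu$, which is the claim.

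The main obstacle — really the only subtlety — is making sure the smoothness is being measured with respect to the \emph{same} variable and over the \emph{same} domain in all three lemmas, so that the triangle inequality on Lipschitz constants is legitimate: \cref{lemma:smoothness_potential} concerns $\Phi$ as a function of $\vx$ alone, while \cref{lemma:smoothness_lagrangian,lemma:smoothness_reg_lagrangian} concern functions of the pair $(\vx,\vlambda)$. I would address this by stating everything in terms of the joint gradient $\nabla_{(\vx,\vlambda)}$, noting that $\Phi$ viewed as a function on $\Delta^n \times \{\vlambda : \norm{\vlambda}\le\Lambda_{\maxtag}\}$ has the same smoothness constant as on $\Delta^n$ (its $\vlambda$-gradient is zero), and then simply adding the constants. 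I do not expect any genuinely hard step; the content of the lemma is bookkeeping on top of \cref{lemma:smoothness_lagrangian}, and the one-line point worth emphasizing is that the bounded-domain restriction on $\vlambda$ (via \cref{lemma:bounded_maximizer}) is what keeps the $\Lambda_{\maxtag}\gamma$ term finite.
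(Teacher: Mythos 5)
Your proposal matches the paper's own (one-line) proof of \cref{lemma:smoothness_reg_lagrangian}: decompose $\Tilde{\calL} = \calL - \mu\norm{\vlambda}^2$, note the regularizer's Hessian has spectral norm $2\mu$, and add the constants from \cref{lemma:smoothness_lagrangian} via the triangle inequality on operator norms. Your additional care about measuring smoothness in the joint variable over the bounded $\vlambda$-domain is a welcome clarification of a detail the paper leaves implicit, but it is the same argument.
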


For clarity, let us define $\ell$ as the smoothness parameter of the regularized Lagrangian, such that $\ell = (n A_{\maxtag} \Phi_{\maxtag} + \Lambda_{\maxtag} \gamma + 2 \mu)$ from \cref{lemma:smoothness_reg_lagrangian}. The following proposition follows from the above statements.

\begin{proposition}[Properties of $\Tilde{\calL}$]
\label{prop:properties_reg_lagrangian}
The regularized Lagrangian $\Tilde{\calL}$ satisfies the following properties:

\begin{enumerate}
\item $\Tilde{\calL}$ is $\ell$-smooth and $\Tilde{\calL}(\vx, \cdot)$ is $\mu$-strongly concave.
\item The domain of $\vlambda$ is bounded, i.e. $\norm{\vlambda} \leq \Lambda_{\maxtag}$.
\end{enumerate}
\end{proposition}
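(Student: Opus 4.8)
The plan is to simply assemble the two itemized claims from the lemmas already established, since this proposition is essentially a bookkeeping statement. For the first item, the $\ell$-smoothness of $\Tilde{\calL}$ is exactly the content of \cref{lemma:smoothness_reg_lagrangian}, where we recall that $\ell$ was defined as the smoothness constant $n A_{\maxtag} \Phi_{\maxtag} + \Lambda_{\maxtag}\gamma + 2\mu$ appearing there. The $\mu$-strong concavity of $\Tilde{\calL}(\vx, \cdot)$ for any fixed $\vx$ is the quantitative version of \cref{lemma:unique_maximizer}: since $\calL(\vx,\vlambda)$ is affine (hence concave) in $\vlambda$ and we subtract $\mu\norm{\vlambda}^2$, the Hessian in $\vlambda$ is $-2\mu I \preceq -\mu I$ (or one invokes the definition of strong concavity directly), which gives $\mu$-strong concavity. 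So the first item follows immediately by combining \cref{lemma:smoothness_reg_lagrangian} and the argument behind \cref{lemma:unique_maximizer}.

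For the second item, I would point to \cref{lemma:bounded_maximizer}: for \emph{any} fixed joint strategy profile $\vxhat$, the maximizer $\vlambdahat = \argmax_{\vlambda} \Tilde{\calL}(\vxhat,\vlambda)$ satisfies $\norm{\vlambdahat} \leq \sqrt{d}\,G_{\maxtag}/(2\mu) = \Lambda_{\maxtag}$. Hence without loss of generality the maximization over $\vlambda$ in the definition of $\phi(\vx) = \max_{\vlambda}\Tilde{\calL}(\vx,\vlambda)$ can be restricted to the compact ball $\{\vlambda : \norm{\vlambda}\leq \Lambda_{\maxtag}\}$ without changing the value of $\phi$ or the optimal $\vlambda^{(t)}$ computed in Algorithm \igd. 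This justifies treating the $\vlambda$-domain as bounded by $\Lambda_{\maxtag}$, which is all the second item asserts.

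Since both parts are restatements/immediate consequences of prior lemmas, there is no real obstacle here; the proposition's purpose is to package the hypotheses needed for the subsequent convergence analysis of Algorithm \igd (where one typically invokes a result on finding stationary points of a smooth function that is defined as a max of an $\ell$-smooth, $\mu$-strongly-concave function over a bounded convex set). If anything, the only point requiring a word of care is noting that the bound in item 2 holds uniformly over all $\vx \in \Delta^n$ (the constant $\Lambda_{\maxtag}$ does not depend on $\vxhat$), which is already the form in which \cref{lemma:bounded_maximizer} is stated, so nothing further is needed.
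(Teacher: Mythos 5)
Your proposal is correct and matches the paper, which simply asserts that the proposition ``follows from the above statements'' — i.e., smoothness from \cref{lemma:smoothness_reg_lagrangian}, strong concavity from the Hessian argument in \cref{lemma:unique_maximizer}, and the bound on $\vlambda$ from \cref{lemma:bounded_maximizer}. Your only extra observation — that the Hessian $-2\mu I$ in fact gives $2\mu$-strong concavity, which implies the stated $\mu$-strong concavity — is harmless and consistent with the claim.
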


Let $\kappa = \ell / \mu$ denote the condition number and define $\phi(\cdot) = \max_{\vlambda} \Tilde{\calL} (\cdot, \vlambda)$ and $\vlambda^{\star} (\cdot) = \argmax \Tilde{\calL} (\cdot, \vlambda)$.

\begin{lemma}[Lemma 4.3 \cite{lin2020gradient}]
\label{lemma:smoothness_max}
Under \cref{prop:properties_reg_lagrangian}, $\phi(\cdot)$ is $(\ell + \kappa \ell)$-smooth with $\nabla \phi(\cdot) = \nabla_{\vx} \Tilde{\calL} (\cdot, \vlambda^{\star}(\cdot))$. Also $\vlambda^{\star}(\cdot)$ is $\kappa$-Lipschitz.
\end{lemma}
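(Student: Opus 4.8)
This is a restatement of the standard ``envelope + sensitivity'' lemma for strongly-concave-in-the-inner-variable minimax objectives, so the plan is to follow the classical three-step template: (i) differentiability of $\phi$ via Danskin's theorem, (ii) Lipschitz continuity of the argmax map $\vlambda^\star(\cdot)$ via strong monotonicity of the first-order optimality operator, and (iii) smoothness of $\phi$ by chaining the joint $\ell$-smoothness of $\Tilde{\calL}$ with the Lipschitz bound from (ii).

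For step (i), recall from \cref{lemma:unique_maximizer} that for every fixed $\vx$ the maximizer $\vlambda^\star(\vx) = \argmax_{\vlambda} \Tilde{\calL}(\vx,\vlambda)$ over the compact convex feasible set $\{\vlambda \succeq \bm 0 : \norm{\vlambda} \leq \Lambda_{\maxtag}\}$ is \emph{unique}, since $\Tilde{\calL}(\vx,\cdot)$ is $\mu$-strongly concave. Uniqueness of the inner maximizer together with continuity of $\Tilde{\calL}$ and its $\vx$-gradient is exactly the hypothesis of Danskin's theorem, which then gives that $\phi$ is differentiable with $\nabla \phi(\vx) = \nabla_{\vx} \Tilde{\calL}(\vx,\vlambda^\star(\vx))$.

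For step (ii), fix two profiles $\vx_1,\vx_2 \in \Delta^n$ and write $\vlambda_1 = \vlambda^\star(\vx_1)$, $\vlambda_2 = \vlambda^\star(\vx_2)$. Optimality of $\vlambda_j$ over the feasible set is the variational inequality $\langle \nabla_{\vlambda}\Tilde{\calL}(\vx_j,\vlambda_j),\, \vlambda - \vlambda_j\rangle \leq 0$ for all feasible $\vlambda$. Plugging $\vlambda = \vlambda_2$ into the inequality for $j=1$ and $\vlambda = \vlambda_1$ into the one for $j=2$, then adding, yields $\langle \nabla_{\vlambda}\Tilde{\calL}(\vx_1,\vlambda_1) - \nabla_{\vlambda}\Tilde{\calL}(\vx_2,\vlambda_2),\, \vlambda_1 - \vlambda_2\rangle \geq 0$. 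Now insert the cross term $\nabla_{\vlambda}\Tilde{\calL}(\vx_1,\vlambda_2)$: $\mu$-strong concavity of $\Tilde{\calL}(\vx_1,\cdot)$ gives $\langle \nabla_{\vlambda}\Tilde{\calL}(\vx_1,\vlambda_1) - \nabla_{\vlambda}\Tilde{\calL}(\vx_1,\vlambda_2),\, \vlambda_1-\vlambda_2\rangle \leq -\mu\norm{\vlambda_1-\vlambda_2}^2$, while $\ell$-smoothness of $\Tilde{\calL}$ bounds $\langle \nabla_{\vlambda}\Tilde{\calL}(\vx_1,\vlambda_2) - \nabla_{\vlambda}\Tilde{\calL}(\vx_2,\vlambda_2),\,\vlambda_1-\vlambda_2\rangle \leq \ell\,\norm{\vx_1-\vx_2}\,\norm{\vlambda_1-\vlambda_2}$. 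Combining the three displays gives $\mu\norm{\vlambda_1-\vlambda_2}^2 \leq \ell\,\norm{\vx_1-\vx_2}\,\norm{\vlambda_1-\vlambda_2}$, i.e. $\norm{\vlambda^\star(\vx_1)-\vlambda^\star(\vx_2)} \leq \kappa\,\norm{\vx_1-\vx_2}$ with $\kappa = \ell/\mu$.

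For step (iii), using the formula from step (i), $\norm{\nabla\phi(\vx_1)-\nabla\phi(\vx_2)} = \norm{\nabla_{\vx}\Tilde{\calL}(\vx_1,\vlambda_1) - \nabla_{\vx}\Tilde{\calL}(\vx_2,\vlambda_2)}$; inserting $\nabla_{\vx}\Tilde{\calL}(\vx_2,\vlambda_1)$ and applying the triangle inequality plus joint $\ell$-smoothness of $\Tilde{\calL}$ (\cref{lemma:smoothness_reg_lagrangian}) gives an upper bound $\ell\norm{\vx_1-\vx_2} + \ell\norm{\vlambda_1-\vlambda_2} \leq \ell\norm{\vx_1-\vx_2} + \ell\kappa\norm{\vx_1-\vx_2} = (\ell + \kappa\ell)\norm{\vx_1-\vx_2}$, which is the claimed smoothness of $\phi$. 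The main obstacle is really step (ii): one must handle the \emph{constrained} inner maximization correctly (the feasible $\vlambda$-set is $\{\vlambda\succeq\bm 0,\ \norm{\vlambda}\le\Lambda_{\maxtag}\}$, not all of $\mathbb{R}^d$), so the clean gradient identity $\nabla_{\vlambda}\Tilde{\calL}=\bm 0$ at the optimum is unavailable and must be replaced by the variational-inequality argument above; everything else is bookkeeping with the triangle inequality. Since these are exactly the hypotheses and conclusion of Lemma 4.3 of \cite{lin2020gradient}, one may alternatively cite it directly, but the self-contained argument is short enough to include.
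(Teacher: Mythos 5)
Your proof is correct, and it is essentially the argument given in the cited source: the paper itself offers no proof of this lemma, importing it verbatim as Lemma 4.3 of \cite{lin2020gradient}, and your three steps (envelope formula, variational-inequality argument for $\kappa$-Lipschitzness of $\vlambda^{\star}(\cdot)$, triangle inequality for $(\ell+\kappa\ell)$-smoothness) reproduce the standard proof from that reference. One small caution on step (i): the version of Danskin's theorem stated in this paper's appendix (\cref{thm:danskin_theorem}) carries a blanket hypothesis that $\phi(\cdot,z)$ be convex in the outer variable, which fails here since the potential is nonconvex; the differentiability-plus-envelope-formula conclusion does hold under a unique maximizer without convexity, but the rigorous route in the nonconvex setting is to establish the $\kappa$-Lipschitzness of $\vlambda^{\star}(\cdot)$ first (your step (ii), which does not depend on step (i)) and then derive $\nabla\phi(\vx)=\nabla_{\vx}\Tilde{\calL}(\vx,\vlambda^{\star}(\vx))$ from that Lipschitz continuity together with joint $\ell$-smoothness, which is exactly how Lin et al.\ proceed. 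Reordering (ii) before (i), or citing a nonconvex-compatible envelope theorem, closes this gap; the remaining computations, including the direction of the strong-monotonicity inequality and the final bound $\ell\norm{\vx_1-\vx_2}+\ell\kappa\norm{\vx_1-\vx_2}$, are all correct.
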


\begin{lemma}[Approximate Stationary Point of $\phi$]
\label{thm:stationary_reg_lagrangian}
Let the learning rate $\eta$ be $1/\beta$, where $\beta = c / \mu$ and $c = 4((n A_{\maxtag})^2 + (\Lambda_{\maxtag} \gamma)^2)$ is constant. If we run Algorithm \igd for $T = \frac{32}{\epsilon^2 \mu}
\big( \Phi_{\maxtag} + \Lambda_{\maxtag} \sqrt{d} G_{\maxtag} \big)
\left(
(n A_{\maxtag})^2 + (\Lambda_{\maxtag} \gamma)^2
\right)$, then there exists a timestep $t \in \{ 1, 2, \cdots, T \}$ such that $\vx^{(t)}$ is an $\epsilon$-approximate first order stationary point of $\phi$.
\end{lemma}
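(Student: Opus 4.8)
The key observation is that Algorithm \igd is exactly projected gradient descent on $\phi(\cdot)=\max_{\vlambda}\Tilde{\calL}(\cdot,\vlambda)$, so the statement reduces to the textbook convergence rate of the gradient mapping for a smooth (possibly nonconvex) objective over a convex set, plus a conversion of a small gradient mapping into the stationarity notion of \cref{def:approximate_stationary_points}.

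\textbf{Step 1: \igd\ is projected GD on $\phi$.} Differentiating the potential identity \eqref{eq:potential_function} in the direction of any $\vv$ in the tangent space of $\Delta(\calA_i)$ gives $\nabla_{\vx_i}C_i(\vx)=\nabla_{\vx_i}\Phi(\vx)$ for every $i$; since the $-\mu\norm{\vlambda}^2$ term does not depend on $\vx$, the update direction in the inner loop equals $\nabla_{\vx_i}\Tilde{\calL}(\vx^{(t)},\vlambda^{(t)})$. The first line of the loop computes the exact maximizer $\vlambda^{(t)}=\vlambda^{\star}(\vx^{(t)})$ (well defined and in closed form by \cref{lemma:unique_maximizer}), so by \cref{lemma:smoothness_max} we have $\nabla_{\vx}\Tilde{\calL}(\vx^{(t)},\vlambda^{(t)})=\nabla\phi(\vx^{(t)})$. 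Because $\Delta^n$ is a Cartesian product, $\Pi_{\Delta^n}=(\Pi_1,\dots,\Pi_n)$, hence the stacked iterates satisfy $\vx^{(t+1)}=\Pi_{\Delta^n}\!\big(\vx^{(t)}-\eta\,\nabla\phi(\vx^{(t)})\big)$ with $\eta=1/\beta$, i.e. projected GD on the $L_\phi$-smooth function $\phi$, where $L_\phi=\ell+\kappa\ell$ by \cref{lemma:smoothness_max}.

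\textbf{Step 2: descent in the gradient mapping.} Define $\mathcal{G}^{(t)}\coloneqq \tfrac1\eta\big(\vx^{(t)}-\vx^{(t+1)}\big)$. First one checks $\eta\le 1/L_\phi$ by comparing $\beta=c/\mu$ with $L_\phi=\ell(1+\ell/\mu)$, using $\ell=nA_{\maxtag}\Phi_{\maxtag}+\Lambda_{\maxtag}\gamma+2\mu$ and $c=4\big((nA_{\maxtag})^2+(\Lambda_{\maxtag}\gamma)^2\big)$ (in the regime of small $\mu$ the term $\ell^2/\mu$ dominates and $\ell^2$ is bounded by a constant times $(nA_{\maxtag})^2+(\Lambda_{\maxtag}\gamma)^2$). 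Combining $L_\phi$-smoothness of $\phi$ with the first-order optimality of the projection applied at $\vz=\vx^{(t)}$ yields the standard inequality $\phi(\vx^{(t+1)})\le \phi(\vx^{(t)})-\tfrac{\eta}{2}\norm{\mathcal{G}^{(t)}}^2$. Since $0\le \phi(\vx)\le \Phi_{\maxtag}+\Lambda_{\maxtag}\sqrt d\,G_{\maxtag}$ (the upper bound follows from $\vlambda^{\top}\vg-\mu\norm{\vlambda}^2\le \norm{\vg}^2/(4\mu)$ together with \cref{lemma:bounded_maximizer}, and $\Phi\ge 0$ without loss of generality), telescoping over $t=1,\dots,T$ gives $\min_{t\le T}\norm{\mathcal{G}^{(t)}}^2\le \tfrac{2}{\eta T}\big(\Phi_{\maxtag}+\Lambda_{\maxtag}\sqrt d\,G_{\maxtag}\big)$. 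Substituting $\eta=\mu/c$ and the stated value of $T$ makes the right-hand side at most $\epsilon^2/4$, so $\norm{\mathcal{G}^{(t)}}\le \epsilon/2$ for some $t$.

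\textbf{Step 3: from gradient mapping to stationarity.} Fix that $t$ and let $\vz=\vx^{(t)}+\vdelta\in\Delta^n$ with $\norm{\vdelta}\le 1$. The projection inequality gives $\langle \nabla\phi(\vx^{(t)}),\vz-\vx^{(t+1)}\rangle\ge \langle \mathcal{G}^{(t)},\vz-\vx^{(t+1)}\rangle$, and expanding $\vz-\vx^{(t+1)}=\vdelta+\eta\mathcal{G}^{(t)}$ together with Cauchy–Schwarz yields $\vdelta^{\top}\nabla\phi(\vx^{(t)})\ge -\big(1+\eta\norm{\nabla\phi(\vx^{(t)})}\big)\norm{\mathcal{G}^{(t)}}$. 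Since $\norm{\nabla\phi}$ is bounded (bounded $\vlambda$-domain of \cref{prop:properties_reg_lagrangian} plus \cref{lemma:smoothness_reg_lagrangian} imply bounded gradients on $\Delta^n$) and $\eta$ is small, $\eta\norm{\nabla\phi(\vx^{(t)})}\le 1$, hence $-\min_{(\vx^{(t)}+\vdelta)\in\Delta^n,\,\norm{\vdelta}^2\le 1}\vdelta^{\top}\nabla\phi(\vx^{(t)})\le 2\norm{\mathcal{G}^{(t)}}\le \epsilon$, which is exactly \cref{def:approximate_stationary_points}.

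\textbf{Main obstacle.} The conceptual content (\igd\ $=$ projected GD on $\phi$) is immediate once the potential identity and the envelope formula of \cref{lemma:smoothness_max} are invoked; the real work is the constant bookkeeping that makes the generic $O(1/\epsilon^2)$ rate land on the exact value of $T$ in the statement — namely verifying $\eta\le 1/L_\phi$ by reconciling $c$, $\beta$, $\ell$, $\kappa$, and controlling the extra factor $\eta\norm{\nabla\phi}$ in the gradient-mapping-to-stationarity conversion, both of which rely on the regime where $\mu$ (hence $\eta$) is taken small.
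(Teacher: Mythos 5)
Your proposal is correct and follows essentially the same route as the paper's proof: smoothness of $\phi$ via \cref{lemma:smoothness_max}, the standard projected-gradient descent/telescoping argument with the gap bounded by $\Phi_{\maxtag} + \Lambda_{\maxtag}\sqrt{d}\,G_{\maxtag}$, and the conversion from a small gradient mapping to the stationarity notion of \cref{def:approximate_stationary_points} (which the paper packages as \cref{prop:cone_to_nash} together with \cref{remark:appendix}, yielding the same factor of $2$ you obtain from $\eta\norm{\nabla\phi}\leq 1$). The only material you add is the explicit verification that \igd\ is projected gradient descent on $\phi$ via the potential identity and the envelope formula, which the paper leaves implicit.
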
 

% \textcolor{red}{How is c constant here as it is a function of many parameters of the game?}
% \textcolor{blue}{Constant with respect to our approximation error $\epsilon$}

\begin{proof}
The first step involves demonstrating that the function $\phi(\cdot) = \max_{\vlambda} \Tilde{\calL} (\cdot, \vlambda)$ is a smooth function. Based on the result from \cref{lemma:smoothness_reg_lagrangian}, we know that $\Tilde{\calL}$ is $\ell$-smooth, with $\ell = (n A_{\maxtag} \Phi_{\maxtag} + \Lambda_{\maxtag} \gamma + 2 \mu)$. Utilizing the smoothness result in \cref{lemma:smoothness_max}, we can conclude that $\phi(\cdot)$ is also smooth, with a parameter $(\ell + \kappa \ell)$, where $\kappa = \ell / \mu$.

Next, by applying the descent lemma (\cref{lemma:descent_lemma}) on $\phi$, we establish that the sequence of successive steps will invariably form a non-increasing sequence. This can be inferred from \cref{lemma:descent_lemma}, which ensures that $\phi(\vx^{(t+1)}) - \phi(\vx^{(t)}) \leq - \frac{1}{2\beta} \norm{\vx^{(t+1)} - \vx^{(t)}}_{2}^{2}$. Furthermore, \cref{thm:descent_gradient} guarantees the existence of at least one timestep $t \in \{ 1, 2, \cdots, T \}$ where $\norm{\vx^{(t+1)} - \vx^{(t)}}_{2}^{2} \leq \frac{2 \beta \delta_{\phi}}{T}$, with $\delta_{\phi} = \phi(\vx^{(0)}) - \phi(\vx^{\star})$. Bounding $\delta_{\phi}$ by $\big( \Phi_{\maxtag} + \Lambda_{\maxtag} \sqrt{d} G_{\maxtag} \big)$, setting $\beta = c / \mu$ and 
\[
T = \frac{32}{\epsilon^2 \mu}
\big( \Phi_{\maxtag} + \Lambda_{\maxtag} \sqrt{d} G_{\maxtag} \big)
\left(
(n A_{\maxtag})^2 + (\Lambda_{\maxtag} \gamma)^2
\right)
\]

% Since, we cannot assume that the initial point is feasible, we bound $\delta_{\phi}$ as follows:
% \begin{equation}
% \delta_{\phi} \leq 
% \max_{\vx} \phi(\vx) 
% \leq \Phi_{\maxtag} + \vlambda^{\top} \vg(\vx)
% \leq
% \Phi_{\maxtag} + \norm{\vlambda} \norm{\vx} 
% \leq 
% \Phi_{\maxtag} + \Lambda_{\maxtag} \sqrt{d} G_{\maxtag}
% \end{equation}

% In order to establish the time complexity of the Algorithm, we need to choose $\beta$ so that it bounds $\ell + \kappa \ell$.
% \begin{equation}
% (\ell + \kappa \ell) = \ell + \frac{\ell^2}{\mu} \leq 2 \frac{\ell^2}{\mu} \leq 2 \frac{2 \left( (n A_{\maxtag})^2 + (\Lambda_{\maxtag} \gamma)^2 \right)}{\mu} = 4 c/ \mu = \beta
% % \quad
% % \text{and}
% % \quad
% % \delta_{\phi} \leq \phi_{\maxtag} - \phi_{\mintag} = \delta_{\maxtag}
% \end{equation}

% where the first inequality in the left term holds for sufficiently small $\mu$ and constant $c$ is set to $c = (n A_{\maxtag})^2 + (\Lambda_{\maxtag} \gamma)^2$. Hence, if we bound $\norm{\vx^{(t+1)} - \vx^{(t)}}_{2}^{2}$ by $\epsilon^2 / 4$, it follows
% \begin{align}
% \norm{\vx^{(t+1)} - \vx^{(t)}}_{2}^{2} 
% &\leq 
% \frac{2 \beta}{T} \big( \Phi_{\maxtag} + \Lambda_{\maxtag} \sqrt{d} G_{\maxtag} \big)
% \leq 
% \epsilon^2/4 
% \\
% \Rightarrow 
% T &\geq \frac{8 \beta}{\epsilon^2} 
% \big( \Phi_{\maxtag} + \Lambda_{\maxtag} \sqrt{d} G_{\maxtag} \big)
% \\
% &=
% \frac{32}{\epsilon^2 \mu}
% \big( \Phi_{\maxtag} + \Lambda_{\maxtag} \sqrt{d} G_{\maxtag} \big)
% \left(
% (n A_{\maxtag})^2 + (\Lambda_{\maxtag} \gamma)^2
% \right)
% \end{align}

we get that $\norm{\vx^{(t+1)} - \vx^{(t)}}_{2} \leq \epsilon/2$. Then, from \cref{prop:cone_to_nash} and \cref{remark:appendix} we conclude that $\vx^{(t+1)}$ is an $\epsilon$-approximate first order stationary point of $\phi$.
\end{proof}

To establish the time complexity, we are free to choose the value of $\mu$. By setting $\mu = O(\epsilon)$, we determine the dependence of the running time of Algorithm \igd on $\epsilon$.

\begin{corollary}
Algorithm \igd runs in $T = O(1/\epsilon^6)$ iterations to find an $\epsilon$ approximate first order stationary point of $\phi$.
\end{corollary}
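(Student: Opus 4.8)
The plan is to treat this corollary as pure bookkeeping on top of \cref{thm:stationary_reg_lagrangian}: substitute the multiplier bound into the iteration count and track the exponent of the regularization parameter $\mu$. First I would recall the iteration bound
\[
T \;=\; \frac{32}{\epsilon^2 \mu}\bigl( \Phi_{\maxtag} + \Lambda_{\maxtag} \sqrt{d}\, G_{\maxtag} \bigr)\bigl( (n A_{\maxtag})^2 + (\Lambda_{\maxtag} \gamma)^2 \bigr)
\]
from \cref{thm:stationary_reg_lagrangian}, and then plug in $\Lambda_{\maxtag} = \sqrt{d}\,G_{\maxtag}/(2\mu)$ from \cref{lemma:bounded_maximizer}.

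Next I would collect the powers of $\mu$, regarding the game-intrinsic quantities $n, A_{\maxtag}, d, G_{\maxtag}, \gamma, \Phi_{\maxtag}$ as fixed constants. The leading factor contributes $\mu^{-1}$; the first parenthesis becomes $\Phi_{\maxtag} + d\,G_{\maxtag}^2/(2\mu)$, which is $\Theta(\mu^{-1})$ as $\mu \to 0$; and the second parenthesis becomes $(n A_{\maxtag})^2 + d\,G_{\maxtag}^2\gamma^2/(4\mu^2)$, which is $\Theta(\mu^{-2})$. Multiplying the three factors yields $T = O(\epsilon^{-2}\mu^{-4})$; concretely, any $\mu \le 1$ already suffices for the $\mu^{-1}$ and $\mu^{-2}$ terms to absorb the additive constants up to a universal constant.

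Finally, since the regularization parameter is ours to choose and the downstream argument that converts a first-order stationary point of $\phi$ into an approximate Nash equilibrium forces $\mu = \Theta(\epsilon)$, I would substitute $\mu = \Theta(\epsilon)$ into $T = O(\epsilon^{-2}\mu^{-4})$ to obtain $T = O(\epsilon^{-6})$. I do not expect a genuine obstacle here: all the analytic content (the descent lemma, the smoothness of $\phi$ through \cref{lemma:smoothness_max}, and the bound on $\delta_\phi$) already lives inside \cref{thm:stationary_reg_lagrangian}. The only point deserving a word of care is that $\phi = \phi_\mu$ itself depends on $\mu$, so the conclusion should be read as ``for this choice $\mu = \Theta(\epsilon)$, the iterate $\vx^{(t)}$ is an $\epsilon$-approximate first-order stationary point of $\phi_\mu$'' — a reinterpretation that does not affect the $O(1/\epsilon^6)$ iteration count.
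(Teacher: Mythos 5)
Your proposal is correct and follows exactly the route the paper intends (the paper itself only remarks that one sets $\mu = O(\epsilon)$ in the iteration bound of \cref{thm:stationary_reg_lagrangian}): substituting $\Lambda_{\maxtag} = \sqrt{d}\,G_{\maxtag}/(2\mu)$ and counting powers of $\mu$ gives $T = O(\epsilon^{-2}\mu^{-4})$, hence $O(\epsilon^{-6})$ for $\mu = \Theta(\epsilon)$. Your closing caveat that $\phi = \phi_\mu$ depends on the chosen $\mu$ is a fair and worthwhile clarification, but it does not change the count.
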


To proceed, we can utilize Lemma \ref{lemma:smoothness_max}, which states that $\nabla \phi(\cdot) = \nabla_{\vx} \Tilde{\calL} (\cdot, \vlambda^{\star}(\cdot))$, where $\vlambda^{\star}(\cdot)$ represents the unique maximizer for a given $\vx$, as indicated in Lemma \ref{lemma:unique_maximizer}. Therefore, this implies that the point $\vxhat$ obtained from the algorithm is a first order stationary point of the regularized Lagrangian. Additionally, since the regularization term does not depend on the $\vx$ variable, it follows that $\vxhat$ is a first order stationary point of the Lagrangian as defined in \ref{def:lagrangian_function}.

The next step involves demonstrating that $\vlambdahat = \argmax_{\vlambda} \calL(\vxhat, \vlambda)$ necessarily resides in the interior of its domain, meaning $\norm{\vlambda} \leq \Lambda_{\maxtag}$. More precisely, we are looking for a bound that is independent of $\mu = O(\epsilon)$. To accomplish this, we utilize the Slater's condition.

\begin{lemma}[Bounded optimal multipliers]
\label{lemma:bounded_optimal_multipliers}
Let $\vxhat$ represent the $\epsilon$-approximate first order stationary point returned by Algorithm \igd. Then, for each set of Lagrange multipliers $\vlambdahat_i$, where $\vlambdahat_i = \argmax_{\vlambda_i} \calL_i(\vxhat_i, \vlambda_i ; \vxhat_{-i}, \vlambdahat_{-i})$, we can bound them by $\frac{2 \big( \Phi_{\maxtag} - \Phi_{\mintag} \big) }{\xi_{i,m}}$ component-wise, where $\xi_{i,m}$ are defined in \cref{assumption:slater}.
\end{lemma}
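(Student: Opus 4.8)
The plan is to exploit Slater's condition (\cref{assumption:slater}) together with strong duality per player (\cref{lemma:strong_duality_original}, \cref{remark:1}) to bound each component $\widehat{\lambda}_{i,m}$. Fix a player $i$ and freeze the other players' strategies $\widehat{\vx}_{-i}$ and multipliers $\widehat{\vlambda}_{-i}$. By strong duality in the form \eqref{eq:strong_duality}, we have
\begin{equation}
\min_{\vx_i \in \calS_i} \Phi(\vx_i, \widehat{\vx}_{-i}) = \max_{\vlambda_i \succeq \bm{0}} \min_{\vx_i \in \Delta(\calA_i)} \calL_i(\vx_i, \vlambda_i ; \widehat{\vx}_{-i}, \widehat{\vlambda}_{-i}). \nonumber
\end{equation}
Since $\widehat{\vlambda}_i$ is the maximizer on the right-hand side, for the corresponding inner minimizer $\vx_i^\sharp$ we get $\Phi(\vx_i^\sharp, \widehat{\vx}_{-i}) + \widehat{\vlambda}_i^\top \vg_i(\vx_i^\sharp) \le \Phi_{\maxtag}^{(i)} := \min_{\vx_i \in \calS_i}\Phi(\vx_i,\widehat{\vx}_{-i})$, which is at most $\Phi_{\maxtag}$ by definition.

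Next I would plug in the Slater point $\widetilde{\vx}_i$ (which is feasible for the inner minimization since $\widetilde{\vx}_i \in \Delta(\calA_i)$, just not for $\calS_i$). Because $\vx_i^\sharp$ minimizes $\calL_i(\cdot, \widehat{\vlambda}_i; \cdot)$ over $\Delta(\calA_i)$,
\begin{equation}
\Phi(\vx_i^\sharp, \widehat{\vx}_{-i}) + \widehat{\vlambda}_i^\top \vg_i(\vx_i^\sharp) \le \Phi(\widetilde{\vx}_i, \widehat{\vx}_{-i}) + \widehat{\vlambda}_i^\top \vg_i(\widetilde{\vx}_i). \nonumber
\end{equation}
Combining the two displays and using $\Phi(\widetilde{\vx}_i,\widehat{\vx}_{-i}) \le \Phi_{\maxtag}$, $\Phi(\vx_i^\sharp,\widehat{\vx}_{-i}) \ge \Phi_{\mintag}$, and $\widehat{\vlambda}_i \succeq \bm{0}$ together with $g_{i,m}(\widetilde{\vx}_i) < \xi_{i,m} < 0$, one isolates a single coordinate: keeping only the $m$-th term of $\widehat{\vlambda}_i^\top \vg_i(\widetilde{\vx}_i)$ (all terms are $\le 0$) gives $\widehat{\lambda}_{i,m}\,|\xi_{i,m}| \le \Phi_{\maxtag} - \Phi_{\mintag} + (\text{a correction from the stationarity slack})$, i.e. up to the $O(\epsilon)$ error coming from $\vx_i^\sharp$ being only an \emph{approximate} inner optimum the bound is $\widehat{\lambda}_{i,m} \le \frac{2(\Phi_{\maxtag}-\Phi_{\mintag})}{|\xi_{i,m}|}$ as claimed (the factor $2$ absorbs the lower-order slack term).

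The main obstacle I anticipate is handling the \emph{approximate} nature of $\widehat{\vx}$: $\widehat{\vx}$ returned by Algorithm \igd is an $\epsilon$-approximate first-order stationary point of $\phi$, not an exact saddle point, so the chain of inequalities above only holds up to an additive error that must be shown to be $O(\epsilon)$ and hence subsumed into the stated constant (this is exactly why the bound carries a factor $2$ rather than $1$). A secondary delicate point is that $\widehat{\vlambda}_i = \argmax_{\vlambda_i} \calL_i(\widehat{\vx}_i, \vlambda_i; \widehat{\vx}_{-i},\widehat{\vlambda}_{-i})$ is taken with respect to the \emph{unregularized} $\calL_i$ over the unbounded cone $\vlambda_i \succeq \bm 0$ — so one must first argue this $\argmax$ is well defined and finite, which itself follows once the Slater bound above is in place (a mild circularity resolved by noting that the argument only uses the optimality inequality, not finiteness a priori). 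Everything else is the standard Slater-based multiplier bound from convex duality applied per player.
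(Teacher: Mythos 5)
Your proposal follows essentially the same route as the paper's proof: per-player strong duality plus the Slater point $\Tilde{\vx}_i$ to extract $\lambda_{i,m}\,|\xi_{i,m}| \le \Phi_{\maxtag} - \Phi_{\mintag}$ up to lower-order slack, with the factor $2$ absorbing the $O(\epsilon)$ error. One small correction: the needed lower bound should come from the strong-duality identity $\calL_i(\vx_i^{\sharp},\vlambdahat_i;\vxhat_{-i},\vlambdahat_{-i}) = \min_{\vx_i \in \calS_i}\Phi(\vx_i,\vxhat_{-i}) \ge \Phi_{\mintag}$ (already implicit in your first display), not from $\Phi(\vx_i^{\sharp},\vxhat_{-i}) \ge \Phi_{\mintag}$ alone, since $\vlambdahat_i^{\top}\vg_i(\vx_i^{\sharp})$ need not be nonnegative at the unconstrained inner minimizer; the paper's chain (\cref{aux:bounded_multiplier_1}--\cref{aux:bounded_multiplier_5}) instead routes through the approximate saddle point $(\vxhat_i,\vlambdahat_i)$, which is exactly where its extra $\epsilon$ terms originate.
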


Finally, we need to consider the (approximate) optimality with respect to the potential function $\Phi$. So far, our discussion has mainly focused on the Lagrangian, but we have not directly addressed the optimality of the potential function and as a result the individual players' value functions. In other words, we need a relation that associates the first order stationary point with a guarantee with respect to the value of $\Phi$.

\begin{lemma}[Approximate Optimality of $\Phi$]
\label{lemma:approximate_optimality}
Given an $\epsilon$ approximate stationary point $(\vxhat, \vlambdahat)$ of $\Tilde{\calL}$, we get that $\vxhat$ is $O(\epsilon)$ approximate feasible approximate Nash equilibrium, as in \cref{def:approximate_feasible_approximate_nash}.
\end{lemma}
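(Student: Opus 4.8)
The plan is to chain together three facts: (i) $\vxhat$ is a first-order stationary point of $\phi(\cdot) = \max_{\vlambda} \Tilde{\calL}(\cdot, \vlambda)$, which by \cref{lemma:smoothness_max} means $\nabla\phi(\vxhat) = \nabla_{\vx}\Tilde{\calL}(\vxhat, \vlambdahat)$ with $\vlambdahat = \vlambda^\star(\vxhat)$; (ii) the regularization term is independent of $\vx$, so $\vxhat$ is also an $\epsilon$-approximate stationary point of the (unregularized) Lagrangian $\calL(\cdot,\vlambdahat)$ on $\Delta^n$; and (iii) the Lagrangian decomposes per player as $\calL_i(\vx_i,\vlambdahat_i;\vxhat_{-i},\vlambdahat_{-i})$, and by the potential property \cref{eq:potential_function} a deviation of player $i$ changes $C_i$ exactly as it changes $\Phi$, hence exactly as it changes the $\vx_i$-dependent part of $\calL_i$.

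The argument would then proceed as follows. First, I would fix a player $i$ and a feasible-up-to-$\epsilon$ deviation $\vx_i' \in \{\vx_i \in \Delta(\calA_i) : \vg_i(\vx_i) \le \epsilon\}$. Writing $\vdelta_i = \vx_i' - \vxhat_i$, stationarity of $\phi$ (unpacked to $\calL$ via step (ii)) gives the lower bound $\vdelta_i^\top \nabla_{\vx_i}\calL(\vxhat,\vlambdahat) \ge -\epsilon\|\vdelta_i\|$, and since $\|\vdelta_i\|$ is bounded on the simplex, $\ge -O(\epsilon)$. Second, I would convert this gradient inequality into a function-value inequality: since $\Phi$ is $(nA_{\maxtag}\Phi_{\maxtag})$-smooth (\cref{lemma:smoothness_potential}) and each $g_{i,m}$ is convex (\cref{assumption:smooth_constraints}), the map $\vx_i \mapsto \calL(\vx_i,\vxhat_{-i},\vlambdahat)$ is smooth, so a standard smoothness/quasar-convexity-style estimate — or rather, using convexity of $g_{i,m}$ in $\vx_i$ and the per-player convexity of $\Phi$ mentioned in the "Techniques and Assumptions" paragraph — yields $\calL(\vx_i',\vxhat_{-i},\vlambdahat) \ge \calL(\vxhat,\vlambdahat) - O(\epsilon)$. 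Expanding $\calL$ and using $\vlambdahat_i^\top\vg_i(\vxhat_i) \approx 0$ (the KKT/complementary-slackness consequence of $\vlambdahat_i = \argmax_{\vlambda_i\succeq 0}[\vlambda_i^\top\vg_i(\vxhat_i) - \mu\|\vlambda_i\|^2]$, together with the bound $\norm{\vlambdahat}\le\Lambda_{\maxtag}$ from \cref{lemma:bounded_maximizer}) and $\vlambdahat_i^\top\vg_i(\vx_i') \le \vlambdahat_i \cdot \epsilon \le O(\epsilon)$ using the component-wise bound on $\vlambdahat_i$ from \cref{lemma:bounded_optimal_multipliers} (this is where $\mu = O(\epsilon)$-independence matters), I obtain $\Phi(\vx_i',\vxhat_{-i}) \ge \Phi(\vxhat) - O(\epsilon)$. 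Third, I would invoke the potential-function identity \cref{eq:potential_function}: $C_i(\vx_i',\vxhat_{-i}) - C_i(\vxhat) = \Phi(\vx_i',\vxhat_{-i}) - \Phi(\vxhat) \ge -O(\epsilon)$, which is exactly the inequality in \cref{def:approximate_feasible_approximate_nash}. Finally I would note approximate feasibility of $\vxhat$ itself: stationarity forces $\vg_i(\vxhat_i) \le O(\epsilon)$ componentwise, since otherwise the $\argmax$ defining $\vlambdahat_i$ would push $\lambda_{i,m} = g_{i,m}(\vxhat_i)/(2\mu)$ large, and then $\vdelta_i$ pointing toward a Slater point $\Tilde{\vx}_i$ (\cref{assumption:slater}) would give a first-order decrease of $\calL$ of order $\Lambda_{\maxtag}|\xi_{i,m}| \gg \epsilon$, contradicting approximate stationarity — this also closes the loop used implicitly in \cref{lemma:bounded_optimal_multipliers}.

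I expect the main obstacle to be the second step — turning the first-order stationarity inequality into the value inequality $\calL(\vx_i',\cdot) \ge \calL(\vxhat,\cdot) - O(\epsilon)$ — because $\Phi$ (hence $\calL$) is \emph{not} globally convex; only per-player convexity is available, and even that is a structural claim about mixed-extension potentials that must be used carefully. The clean way around this is to observe that for \emph{fixed} $\vxhat_{-i}$ and $\vlambdahat$, the function $\vx_i \mapsto \calL(\vx_i,\vxhat_{-i},\vlambdahat) = \Phi(\vx_i,\vxhat_{-i}) + \vlambdahat_i^\top\vg_i(\vx_i) + \text{const}$ is convex in $\vx_i$ (multilinearity of $\Phi$ in each player's mixed strategy makes it affine, hence convex, in $\vx_i$; the $g_{i,m}$ are convex by \cref{assumption:smooth_constraints}, $\vlambdahat_i \succeq 0$), so the gradient inequality at $\vxhat_i$ directly gives the value inequality with no smoothness loss. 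The remaining bookkeeping — tracking that every "$O(\epsilon)$" is genuinely $\mu$-independent after plugging $\mu = O(\epsilon)$, and that the simplex diameter constants fold into the $O(\cdot)$ — is routine but is where one must be careful to cite \cref{lemma:bounded_optimal_multipliers} rather than \cref{lemma:bounded_maximizer} for the deviation term.
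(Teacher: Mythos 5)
Your proposal is correct in substance and relies on the same three pillars as the paper's proof --- approximate feasibility extracted from the $\vlambda$-stationarity condition $\vg_i(\vxhat_i)=2\mu\vlambdahat_i$, the $\mu$-independent multiplier bound of \cref{lemma:bounded_optimal_multipliers}, and control of the terms $\vlambdahat_i^{\top}\vg_i(\cdot)$ --- but the mechanism you use to establish per-player optimality is genuinely different. The paper sandwiches $\Tilde{\calL}_i(\vxhat_i,\vlambdahat_i;\vxhat_{-i},\vlambdahat_{-i})$ between $\min_{\vx_i}\max_{\vlambda}\Tilde{\calL}_i+\epsilon$ from above (asserting that an approximate stationary point has value within $\epsilon$ of the min--max value) and $\Phi(\vxhat)-O(\epsilon)$ from below, then invokes strong duality (\cref{lemma:strong_duality_original}, \cref{remark:1}) to replace the min--max value by $\min_{\vx_i\in\calS_i}\Phi(\vx_i,\vxhat_{-i})$ and a set-inclusion argument to pass to the $O(\epsilon)$-feasible deviation set. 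You instead go directly from the first-order stationarity inequality $\vdelta_i^{\top}\nabla_{\vx_i}\calL(\vxhat,\vlambdahat)\geq -O(\epsilon)$ to the value inequality $\calL(\vx_i',\vxhat_{-i},\vlambdahat)\geq\calL(\vxhat,\vlambdahat)-O(\epsilon)$ via convexity of $\vx_i\mapsto\Phi(\vx_i,\vxhat_{-i})+\vlambdahat_i^{\top}\vg_i(\vx_i)$ (multilinearity of $\Phi$ plus convex $g_{i,m}$ and $\vlambdahat_i\succeq\bm{0}$), and then peel off the multiplier terms. Your route has two advantages: it makes explicit the per-player convexity that the paper's ``stationary value is within $\epsilon$ of the min--max value'' step silently relies on (that step is not automatic for nonconvex--concave problems), and it closes the loop from $\Phi$ to the individual costs $C_i$ via \cref{eq:potential_function}, which the paper's write-up leaves implicit. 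One small point you share with the paper rather than resolve: $\vlambdahat=\vlambda^{\star}(\vxhat)$ is the maximizer of the \emph{regularized} Lagrangian, whereas \cref{lemma:bounded_optimal_multipliers} bounds the maximizer of the unregularized $\calL_i$; the paper patches this with a one-line remark that regularization can only shrink the multiplier, and your proof should acknowledge the same mismatch when it cites that lemma for the $O(\epsilon)$ bound on $\vlambdahat_i^{\top}\vg_i(\vx_i')$.
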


\begin{theorem}[Main Theorem] 
\label{thm:main}
Assuming that all agents perform Algorithm \igd, after $T = O(1/\epsilon^6)$ steps, there exists an iterate $\vx^{(t)}$ for $t \in [T]$, so that $\vx^{(t)}$ is an $O(\epsilon)$ approximate feasible approximate Nash equilibrium.
\end{theorem}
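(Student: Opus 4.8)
The plan is to stitch together the lemmas of \cref{section:analysis} into one chain, with the single free parameter $\mu$ fixed at $\mu = \Theta(\epsilon)$ throughout. First I would run Algorithm \igd with this $\mu$. By \cref{thm:stationary_reg_lagrangian} it terminates after $T = \frac{32}{\epsilon^2\mu}\big(\Phi_{\maxtag} + \Lambda_{\maxtag}\sqrt{d}\,G_{\maxtag}\big)\big((nA_{\maxtag})^2 + (\Lambda_{\maxtag}\gamma)^2\big)$ iterations with some iterate $\vx^{(t)}$, $t\in[T]$, that is an $\epsilon$-approximate first order stationary point of $\phi(\cdot)=\max_{\vlambda}\Tilde{\calL}(\cdot,\vlambda)$; since $\Lambda_{\maxtag}=\sqrt{d}\,G_{\maxtag}/(2\mu)=\Theta(1/\epsilon)$, substituting $\mu=\Theta(\epsilon)$ collapses this to $T=O(1/\epsilon^6)$, matching the claim.

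Next I would lift stationarity from $\phi$ to the Lagrangian. By \cref{lemma:smoothness_max}, $\nabla\phi(\vx^{(t)})=\nabla_{\vx}\Tilde{\calL}(\vx^{(t)},\vlambda^{\star}(\vx^{(t)}))$, and by \cref{lemma:unique_maximizer} the inner iterate $\vlambda^{(t)}\coloneqq\vlambda^{\star}(\vx^{(t)})$ is the unique maximizer of $\Tilde{\calL}(\vx^{(t)},\cdot)$. Hence $(\vxhat,\vlambdahat)\coloneqq(\vx^{(t)},\vlambda^{(t)})$ is an $\epsilon$-approximate stationary point of $\Tilde{\calL}$: exactly optimal in $\vlambda$ by construction, and $\epsilon$-approximately stationary in $\vx$ over $\Delta^n$ in the sense of \cref{def:approximate_stationary_points}. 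Because the regularizer $-\mu\norm{\vlambda}^2$ is independent of $\vx$, the $\vx$-gradient of $\Tilde{\calL}$ agrees with that of $\calL$, so $(\vxhat,\vlambdahat)$ is equally an $\epsilon$-approximate stationary point of the plain Lagrangian of \cref{def:lagrangian_function}.

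The delicate step is to certify that the artificial bound $\norm{\vlambda}\le\Lambda_{\maxtag}$, imposed only so that a bounded-domain first-order method applies, has not manufactured a spurious solution. Here I would invoke \cref{lemma:bounded_optimal_multipliers}: Slater's condition (\cref{assumption:slater}) forces every block $\vlambdahat_i=\argmax_{\vlambda_i}\calL_i(\vxhat_i,\vlambda_i;\vxhat_{-i},\vlambdahat_{-i})$ to be bounded componentwise by $2(\Phi_{\maxtag}-\Phi_{\mintag})/\xi_{i,m}$, a quantity \emph{independent of} $\mu$. Since $\Lambda_{\maxtag}\to\infty$ as $\mu=\Theta(\epsilon)\to 0$, for all sufficiently small $\epsilon$ the maximizer $\vlambdahat$ lies strictly inside the ball, the projection in the $\vlambda$-step is inactive, and $\vlambdahat$ is a genuine maximizer over the full cone $\vlambda\succeq\bm 0$; the point found is therefore a true (approximate) KKT-type point of the original constrained problem, not a truncation artifact.

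Finally I would feed $(\vxhat,\vlambdahat)$ into \cref{lemma:approximate_optimality}: combining the $\epsilon$-approximate $\vx$-stationarity, the $\mu$-independent bound on $\vlambdahat$, the per-player strong duality of \cref{lemma:strong_duality_original}, and the potential-game identity of \cref{def:potential_function} (which converts a single player's deviation into a change of the one function $\Phi$) yields that $\vxhat$ violates each $g_{i,m}$ by at most $O(\epsilon)$ and that no player can cut their expected cost by more than $O(\epsilon)$ among their $\epsilon$-approximately feasible strategies — exactly \cref{def:approximate_feasible_approximate_nash}. I expect the main obstacle to be the simultaneous calibration of $\mu$: it must shrink like $\Theta(\epsilon)$ so that the initial gap $\delta_\phi\le\Phi_{\maxtag}+\Lambda_{\maxtag}\sqrt{d}\,G_{\maxtag}$, the residual feasibility slack, and the value gap between regularized and original games are all $O(\epsilon)$, while keeping $T=O(1/(\epsilon^2\mu))\cdot\mathrm{poly}(\Lambda_{\maxtag})=O(1/\epsilon^6)$ and keeping the $\mu$-dependent smoothness $\ell=nA_{\maxtag}\Phi_{\maxtag}+\Lambda_{\maxtag}\gamma+2\mu$ (with $\Lambda_{\maxtag}=\Theta(1/\mu)$) under control; verifying that the single choice $\mu=\Theta(\epsilon)$ discharges all of these at once, and that passing from the variational-inequality stationarity of \cref{def:approximate_stationary_points} to ``no profitable feasible deviation'' genuinely uses convexity of the $g_{i,m}$ (as the example in \cref{section:concave_constraints} shows it must), is where the argument needs the most care.
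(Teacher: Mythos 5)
Your proposal is correct and follows essentially the same route as the paper: it chains Lemma~\ref{thm:stationary_reg_lagrangian} (stationarity of $\phi$ after $T$ iterations, collapsing to $O(1/\epsilon^6)$ once $\mu=\Theta(\epsilon)$), Lemma~\ref{lemma:bounded_optimal_multipliers} (lifting to a $\mu$-independently bounded pair $(\vxhat,\vlambdahat)$ that is an approximate stationary point of $\Tilde{\calL}$), and Lemma~\ref{lemma:approximate_optimality} (converting this to an $O(\epsilon)$ approximate feasible approximate Nash equilibrium). The additional detail you give on calibrating $\mu$ and on why the $\Lambda_{\maxtag}$ truncation is inactive is a faithful elaboration of what the paper delegates to the cited lemmas rather than a different argument.
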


\section{Numerical Experiments} 
\label{section:experiments}

\begin{figure}[t]
    \centering
    \includegraphics[width=0.5\linewidth]{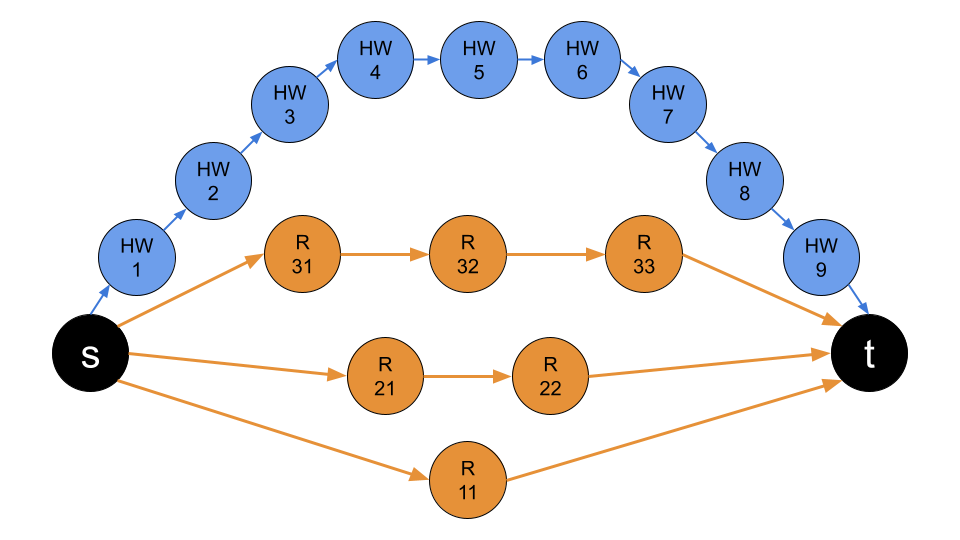}
    \caption{Players can pick out of four options: \texttt{R1, R2, R3, HW} by consuming \texttt{\{2,3,4,10\}} from their gas allowance.}
    \label{fig:network}
\end{figure}

In this section, we empirically validate our theoretical results using constrained congestion games as a testing ground. Relying on the framework of congestion games is possible since every congestion game can be interpreted as a potential game \cite{Rosenthal1973}.

% \begin{figure}[t]
% \centering
% \begin{subfigure}[t]{0.45\textwidth}
%     \centering
%     \includegraphics[width=\linewidth]{figures/graph_1.png}
%     \caption{Players can pick out of four options: \texttt{R1, R2, R3, HW} by consuming \texttt{\{2,3,4,10\}} from their gas allowance.}
%     \label{fig:network}
% \end{subfigure}
% \hfill
% \begin{subfigure}[t]{0.45\textwidth}
%     \includegraphics[width=\linewidth]{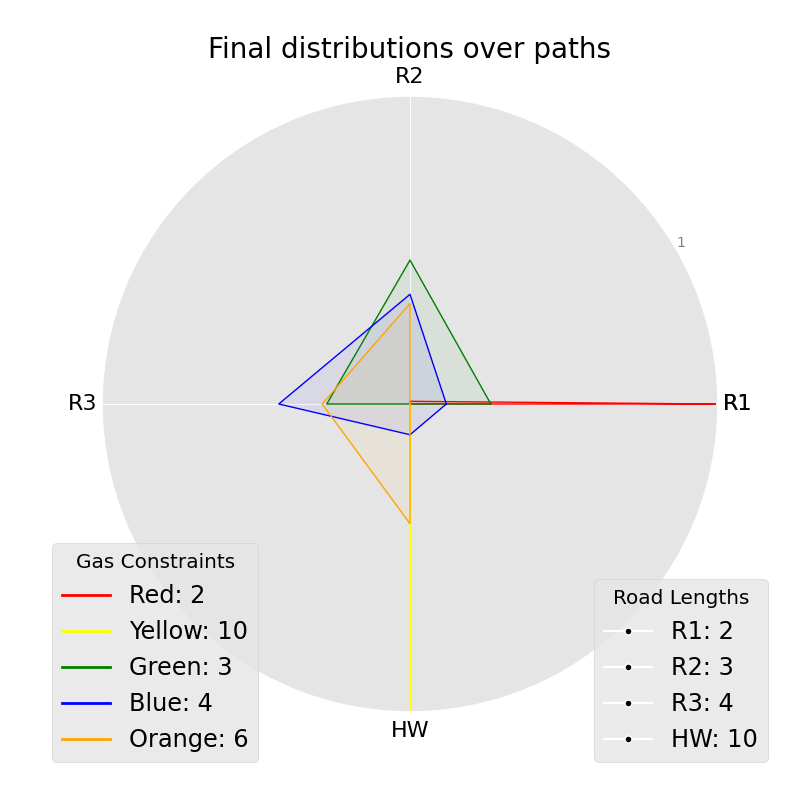}
%     \caption{Spider chart capturing the final distributions over paths of all players.}
%     \label{fig:spider}
% \end{subfigure}
% \end{figure}

\paragraph{Experimental Setup}
Our experimental setup involves a rooted directed acyclic graph (DAG), as illustrated in \cref{fig:network}. The graph comprises four paths connecting a source node $s$ and a target node $t$. The length of each path is determined by the number of edges it contains, given that every edge in the graph is of unit distance. In addition, there are five players, each assigned the task of selecting a path and subject to an additional constraint on the amount of gas they are allowed to expend. The gas expenditure is proportional to the distance traveled, with one unit of gas allowing the traversal of one unit of distance. The congestion experienced on each path is influenced by the number of players selecting that particular route. Although the congestion functions for all paths are linear, with the yellow paths sharing the same function, the highway's function (blue path) has a smaller parameter, resulting in a lower cost experienced. This trade-off between congestion and gas consumption on the highway presents the important decision-making factor for the players.\footnote{The code is available at the GitHub repository: https://github.com/steliostavroulakis/constrained-potential-games}

\paragraph{Implementation \& Results}
\begin{wrapfigure}[]{r}{0.5\textwidth}
% \begin{subfigure}[t]{0.45\textwidth}
    \includegraphics[width=0.8\linewidth]{}
    \caption{Spider chart capturing the final distributions over paths of all players.}
    \label{fig:spider}
% \end{subfigure}
\end{wrapfigure}
Upon completing the iterations of Algorithm \igd, we generate a spider chart to visually represent the final distributions of the players. Additionally, a table detailing the gas constraints for each player is presented alongside the spider chart, see \cref{fig:spider}. The agents select paths with lengths proportional to their gas constraints while simultaneously minimizing congestion.

\begin{figure}[t]
    \centering
    \begin{subfigure}[t]{0.3\textwidth} 
        \includegraphics[width=\textwidth]{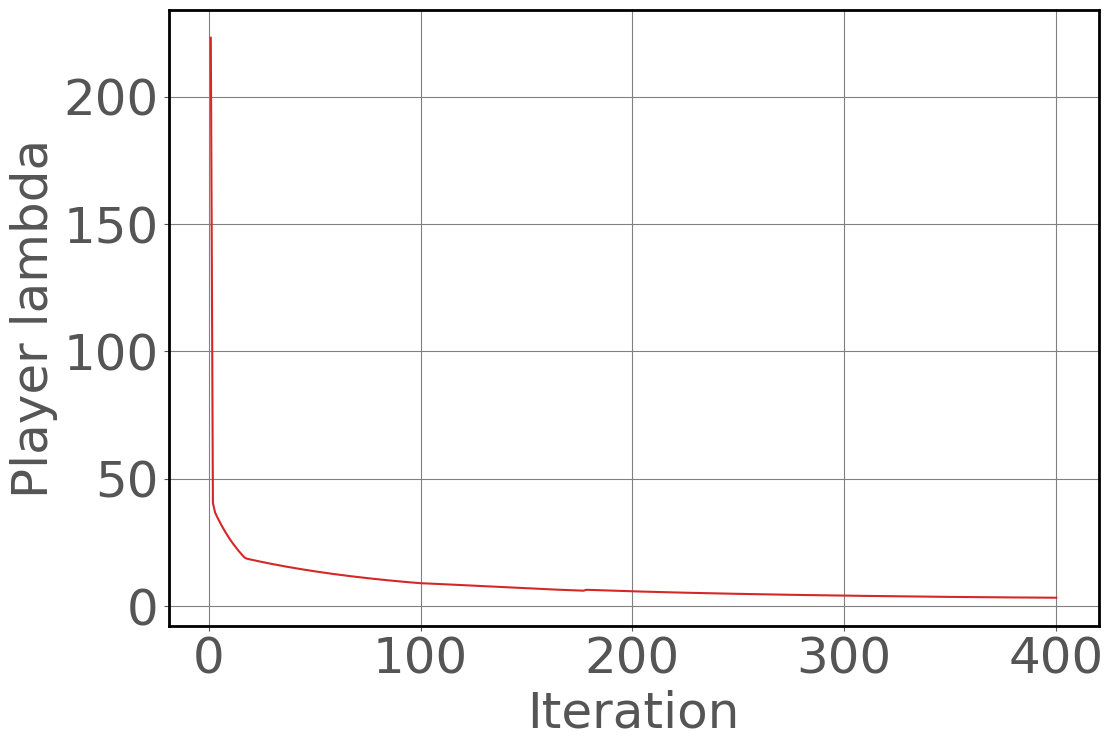}
        \caption{Sum of Lagrangian multipliers}
        \label{fig:lagrangian_mult}
    \end{subfigure}
    \hfill
    \begin{subfigure}[t]{0.3\textwidth}
        \includegraphics[width=\textwidth]{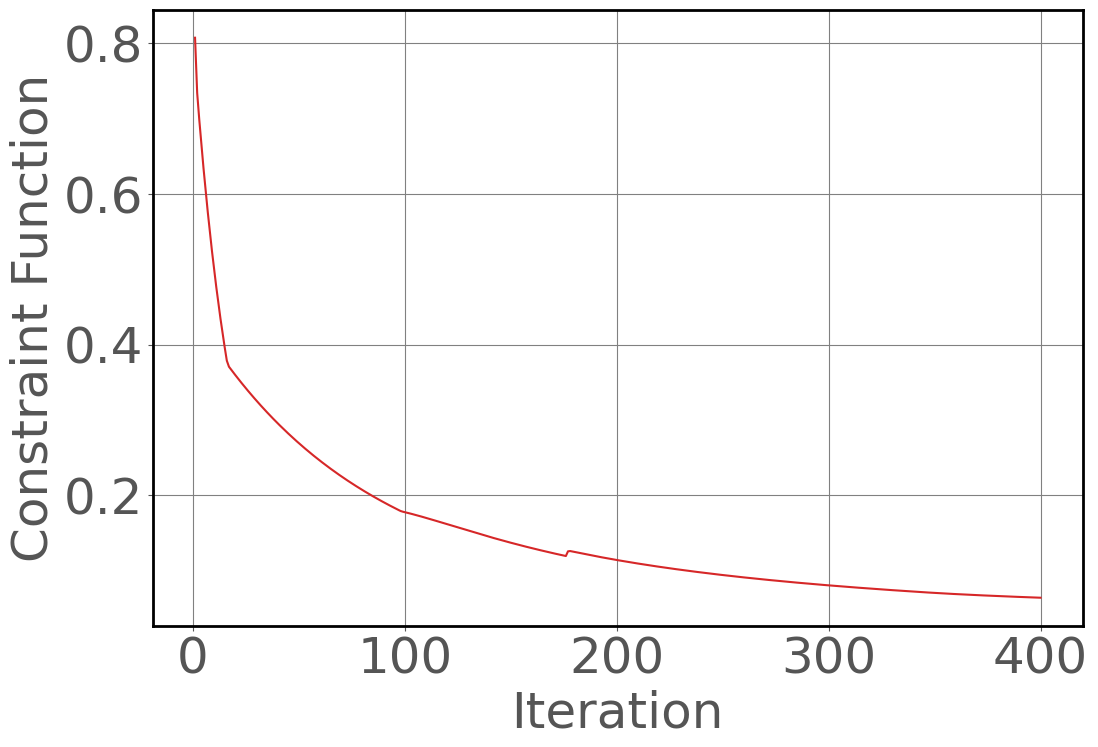}        
        \caption{Sum of constraint violation}
        \label{fig:constraint_violations}
    \end{subfigure}
    \hfill
    \begin{subfigure}[t]{0.3\textwidth}
        \includegraphics[width=\textwidth]{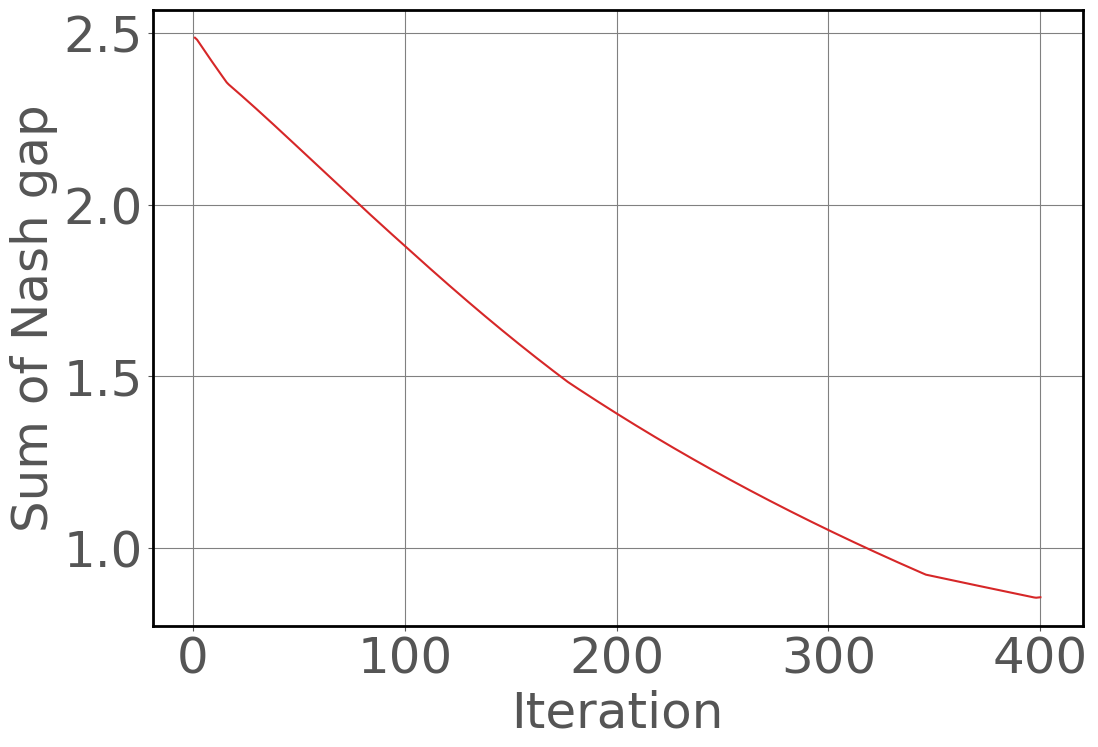}
        \caption{Nash gap}
        \label{fig:nash_gap}
    \end{subfigure}
    \caption{Collection of plots that are used to identify an approximate feasible approximate Nash Equilibrium.}
    \label{fig:plots}
\end{figure}

% \begin{figure}[!t]
%     \centering
%     \includegraphics[width=\linewidth]{figures/figures_matplotlib/spider_chart.png}
%     \caption{Spider chart capturing the final distributions over paths of all players. Given the initial gas constraints shown in the previous figure, this spider chart displays how players conform with constraints. The red player chooses his only available option, the yellow player freely chooses the highway to minimize congestion, and the other players distribute accordingly based on the relative ordering of their constraints.}
%     \label{fig:spider}
% \end{figure}

\paragraph{Evaluation}
To assess the solution's quality and the convergence speed of our algorithm, we employ several metrics of convergence: constraint violation, Lagrange multipliers, and Nash gap. Detailed calculations and information regarding the Nash Gap and hyperparameter selection can be found in the Appendix. In the final iteration, we anticipate the Nash gap to approach zero, constraint violation to be close to zero, and the Lagrange multipliers to have converged to a finite value. The plots are presented in \cref{fig:plots}.
\section{Conclusions}

In conclusion, this paper delves into decentralized computation of approximate Nash equilibria in constrained potential games, offering valuable insights and avenues for future research. We emphasize the importance of improving the convergence rates, based on our current findings. Furthermore, exploring different configurations of coupled constraints shows promise for advancing our understanding in this field. Additionally, extending the applicability of our results to scenarios beyond traditional potential games presents intriguing opportunities for further research.

\section*{Acknowledgments}
Ioannis Panageas would like to acknowledge startup grant from UCI. Part of this work was conducted while Nikolas, Fivos and Ioannis were visiting Archimedes Research Unit. This work has been partially supported by project MIS 5154714 of the National Recovery and Resilience Plan Greece 2.0 funded by the European Union under the NextGenerationEU Program.

\newpage
\bibliographystyle{alpha}
\bibliography{references}

\appendix
\section{Additional Definitions and Propositions} \label{section:appendix_definition}

In this section, we provide the definitions for the \textit{per-player} primal and dual problems, where the joint strategy profile of the rest of the players, denoted by $\mathbf{\hat{x}}_{-i}$, is assumed to be fixed. Additionally, we present some auxiliary lemmas that are used to establish the main theorem of this work, \cref{thm:main}.

% \subsection{Definitions}

\begin{definition}[Primal problem - per player]
\label{def:primal_per_player}
Let each player $i \in \calN$ have a set of $d_i$ private constraints, denoted by $g_{i,m}(\cdot)$ for any $m \in [d_i]$. Then, given a fixed joint strategy profile for the other players $\vxhat_{-i}$, player $i$ aims to solve the following optimization problem.

\begin{equation} \label{eq:primal_per_player}
\begin{array}{ccc}
  \text{minimize}   & \Phi( \vx_i, \vxhat_{-i}) & \\
  \text{subject to} & g_{i,m}( \vx_i ) \leq 0, & m = \{ 1, 2, \ldots, d_i 
  \}
  \\
\end{array}
\end{equation}

\end{definition}

\begin{definition}[Dual problem - per player]
\label{def:dual_per_player}
Let $i \in \calN $ be an arbitrary player associated with a primal problem, as defined in \cref{def:primal_per_player}. Then, the dual (Lagrangian) problem is defined as follows.

\begin{equation} \label{eq:dual_per_player}
\begin{array}{ccc}
  \text{maximize}   & \inf\limits_{\bm{x}_i \in \Delta(\mathcal{A}_i)} \mathcal{L}_i(\bm{x}_i, \bm{\lambda}_i; \vxhat_{-i}, \vlambdahat_{-i}) & \\
  \text{subject to} & \bm{\lambda}_{i} \succeq \bm{0}
\end{array}
\end{equation}

\noindent where $
\calL_i(\bm{x}_i, \bm{\lambda}_i; \bm{\hat{x}}_{-i},
\vlambdahat_{-i}
) 
\coloneqq 
\Phi(\bm{x}_i, \bm{\hat{x}}_{-i})
+ 
\sum\limits_{m=1}^{d_i} \lambda_{i,m} g_{i,m}(\vx_i)
$ and $\bm{\lambda}_i \in \mathbb{R}^{d_i}_{+}$.
\end{definition}

In a similar way, we can define the per-player regularized Lagrangians $\Tilde{\calL}_i$ by adding the regularization term $- \mu \norm{\vlambda_i}^2$.

% \subsection{Auxiliary Lemmas}

\begin{proposition}[Claim C.2 \cite{leonardos2021global}]
\label{prop:spectral_norm_block_matrix}
Consider a symmetric block matrix $C$ with $n \times n$ matrices so that $\norm{C_{ij}}_2 \leq L$. Then, it holds that $\norm{C}_2 \leq n L$. In other words, if all block matrices have spectral norm at most $L$, then $C$ has spectral norm at most $nL$.
\end{proposition}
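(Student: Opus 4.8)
The plan is to combine the variational (Rayleigh-quotient) characterization of the spectral norm with a block-wise Cauchy--Schwarz argument. Since $C$ is symmetric, its spectral norm satisfies $\norm{C}_2 = \max_{\norm{v}=1} |v^\top C v|$, so it suffices to produce the bound $|v^\top C v| \leq nL$ uniformly over all unit vectors $v$. First I would partition any such $v$ conformally with the block structure of $C$, writing $v = (v_1, \ldots, v_n)$ where $v_i$ is the sub-vector whose coordinates match the row/column indices of the $i$-th block. Because the blocks partition the coordinates, this decomposition is orthogonal and gives $\sum_{i=1}^n \norm{v_i}^2 = \norm{v}^2$. Expanding the quadratic form over the block grid then yields $v^\top C v = \sum_{i=1}^n \sum_{j=1}^n v_i^\top C_{ij} v_j$.

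Next I would bound each summand using the hypothesis $\norm{C_{ij}}_2 \leq L$. By the definition of the operator norm, $|v_i^\top C_{ij} v_j| \leq \norm{v_i}\,\norm{C_{ij}}_2\,\norm{v_j} \leq L \norm{v_i}\norm{v_j}$; summing over all pairs $(i,j)$ produces the key inequality
\begin{equation}
|v^\top C v| \leq L \sum_{i=1}^n \sum_{j=1}^n \norm{v_i}\norm{v_j} = L \Big( \sum_{i=1}^n \norm{v_i} \Big)^2 .
\end{equation}

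The final step is to convert the $\ell_1$-type sum of block norms back into the Euclidean norm of $v$. Applying Cauchy--Schwarz to the vector of block norms $(\norm{v_1}, \ldots, \norm{v_n})$ against the all-ones vector gives $\big( \sum_{i=1}^n \norm{v_i} \big)^2 \leq n \sum_{i=1}^n \norm{v_i}^2 = n \norm{v}^2 = n$, where the last equalities use the orthogonality of the block decomposition and $\norm{v}=1$. Substituting into the previous display yields $|v^\top C v| \leq nL$, and taking the maximum over unit $v$ establishes $\norm{C}_2 \leq nL$, as claimed.

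I do not expect a genuine obstacle: the argument is elementary once the correct characterization is chosen. The only points requiring care are the two appearances of Cauchy--Schwarz — one at the block level (the per-block operator-norm estimate) and one globally (relating the sum of block norms to $\norm{v}$) — and checking that the partition is exactly conformal so that $\sum_i \norm{v_i}^2 = \norm{v}^2$ holds. Symmetry is invoked only to reduce to the single-vector Rayleigh quotient; an entirely analogous two-vector computation using $\norm{C}_2 = \max_{\norm{u}=\norm{v}=1} u^\top C v$ would remove even that dependence and give the same bound.
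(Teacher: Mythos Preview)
Your argument is correct: the Rayleigh-quotient expansion over the conformal block partition, the per-block operator-norm bound, and the final Cauchy--Schwarz step relating $\sum_i \norm{v_i}$ to $\sqrt{n}\,\norm{v}$ are all valid, and together they yield $\norm{C}_2 \le nL$. Your closing remark that the two-vector form $\max_{\norm{u}=\norm{v}=1} u^\top C v$ would give the same bound without invoking symmetry is also right.

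As for comparison: the paper does not supply its own proof of this proposition. It is stated as a cited result (Claim~C.2 of \cite{leonardos2021global}) and used as a black box in the smoothness calculation for $\Phi$. So there is nothing in the paper to compare your route against; your proof simply fills in what the paper delegates to the reference, via the standard elementary argument.
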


The following lemma establishes that given a point $x$ for which $\norm{C}_2$ is bounded by $\epsilon$, we can easily find a new point $x^+$ that is an approximate first-order stationary point, as defined in \cref{def:approximate_stationary_points}. The proof follows a similar approach as in Proposition B.1 in \cite{agarwal2020optimality}.

\begin{proposition} \label{prop:cone_to_nash}
Let $f$ be a $\beta$-smooth function. Define the gradient mapping 

\begin{equation}
G^{\eta}(x) = \frac{1}{\eta} 
\left(
x - P_C (x - \eta \nabla f(x))
\right)
\end{equation}

and the updated rule for the projected gradient is $x^{+} = x - \eta G^{\eta}(x)$. If $\norm{G^{\eta} (x)}_2 \leq \epsilon$, then

\begin{equation}
- \min\limits_{x + \delta \in \Delta, \norm{\delta} \leq 1} \delta^{\top} \nabla f(x) \leq \epsilon (\eta \beta + 
1)
\end{equation}
\end{proposition}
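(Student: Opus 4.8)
The plan is to combine the variational (obtuse-angle) characterization of the Euclidean projection with the $\beta$-smoothness of $f$, extracting the constant $\eta\beta + 1$ as the sum of two separate contributions of order $\epsilon$. Throughout, I would write $x^{+} = P_C(x - \eta\nabla f(x))$, so that by the definition of the gradient mapping $x^{+} = x - \eta G^{\eta}(x)$, and hence $\norm{x - x^{+}} = \eta\norm{G^{\eta}(x)} \le \eta\epsilon$. This closeness estimate is what ultimately forces both error terms to scale with $\epsilon$, and it is the reason the projected point, rather than $x$ itself, is the natural place to certify stationarity.

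First I would record the first-order optimality of the projection. Since $x^{+}$ minimizes $\frac12\norm{z - (x - \eta\nabla f(x))}^2$ over $z \in C$, the projection inequality gives $\langle (x - \eta\nabla f(x)) - x^{+},\, u - x^{+}\rangle \le 0$ for every $u \in C$. Substituting $x - x^{+} = \eta G^{\eta}(x)$ and dividing by $\eta > 0$ rearranges this into $\langle \nabla f(x),\, u - x^{+}\rangle \ge \langle G^{\eta}(x),\, u - x^{+}\rangle$ for all $u \in C$; equivalently, $\nabla f(x) - G^{\eta}(x)$ lies in the normal cone of $C$ at $x^{+}$.

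Next I would convert this into the stationarity bound. For any feasible unit deviation at the projected point, i.e. $u = x^{+} + \delta \in C$ with $\norm{\delta} \le 1$, the displayed inequality reads $\langle \nabla f(x),\, \delta\rangle \ge \langle G^{\eta}(x),\, \delta\rangle$, and then Cauchy--Schwarz together with $\norm{G^{\eta}(x)} \le \epsilon$ yields $\langle \nabla f(x),\, \delta\rangle \ge -\norm{G^{\eta}(x)}\,\norm{\delta} \ge -\epsilon$. This already produces the ``$+1$'' contribution. To replace $\nabla f(x)$ by the gradient at the iterate that actually appears in \cref{def:approximate_stationary_points}, I would invoke $\beta$-smoothness: $\norm{\nabla f(x) - \nabla f(x^{+})} \le \beta\norm{x - x^{+}} \le \beta\eta\epsilon$, so that $\langle \nabla f(x^{+}),\, \delta\rangle \ge \langle \nabla f(x),\, \delta\rangle - \beta\eta\epsilon\,\norm{\delta} \ge -\epsilon - \eta\beta\epsilon = -\epsilon(\eta\beta + 1)$. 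Taking the minimum over all admissible $\delta$ and negating then gives the claimed inequality, with the Lipschitz-in-gradient step being exactly where the smoothness constant $\beta$ enters.

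The step I expect to be the main obstacle is the bookkeeping around which point and which gradient the stationarity is measured at. The gradient mapping is built from $\nabla f$ evaluated at $x$, whereas the clean stationarity certificate lives at the projected point $x^{+} = x - \eta G^{\eta}(x)$; the naive attempt to bound $\langle \nabla f(x),\delta\rangle$ for deviations anchored at $x$ produces a stray term $-\eta\langle\nabla f(x), G^{\eta}(x)\rangle$ that is not controlled by $\norm{G^{\eta}(x)}$ alone, so anchoring the deviations at $x^{+}$ is essential. Keeping the ``$+1$'' piece (from the projection inequality and Cauchy--Schwarz) cleanly separated from the ``$\eta\beta$'' piece (from transporting the gradient across the short step $x \to x^{+}$) is the crux, and the estimate $\norm{x - x^{+}} \le \eta\epsilon$ is what renders both pieces $O(\epsilon)$. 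A minor secondary point, immediate from the convexity of $C = \Delta^n$, is that every feasible unit deviation corresponds to an admissible $u \in C$, so the projection inequality applies verbatim.
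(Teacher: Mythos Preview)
Your proof is correct and follows the same underlying mechanism as the paper's: the projection (obtuse-angle) inequality pins down $\nabla f(x) - G^{\eta}(x)$ in the normal cone at $x^{+}$, and $\beta$-smoothness then transports the gradient from $x$ to $x^{+}$ at cost $\eta\beta\epsilon$, producing exactly the $\epsilon(\eta\beta+1)$ bound with deviations anchored at $x^{+}$.

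The only difference is packaging. The paper does not derive the normal-cone inclusion from scratch; it invokes \cref{thm:gradient_in_cone} (a result of \cite{ghadimi2016accelerated}) as a black box to obtain $-\nabla f(x^{+}) \in N_{C}(x^{+}) + \epsilon(\eta\beta+1)B_{2}$, and then unpacks that inclusion via the definition of the normal cone to reach the stationarity inequality. Your argument is simply an inline proof of that cited lemma, decomposing the constant into the same ``$+1$'' (projection/Cauchy--Schwarz) and ``$\eta\beta$'' (gradient transport) pieces. What your version buys is self-containment and transparency about where each term comes from; what the paper's version buys is brevity by deferring to the literature. You also correctly flag the bookkeeping issue---that the stationarity certificate and the feasible deviations are anchored at $x^{+}$ rather than $x$---which the paper's statement obscures slightly in its notation but resolves identically in its use of \cref{remark:appendix}.
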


\begin{proof}
From \cref{thm:gradient_in_cone} we get

\begin{equation}
- \nabla f(x^{+}) \in N_{\Delta}(x^{+}) + \epsilon (\eta \beta + 1) B_2
\end{equation}

where $B_2$ is the unit ball $\ell_2$ ball, and $N_C$ is the normal cone of the convex set $C$. The normal cone $N_C(x^{+})$ is defined as $N_C(x^{+}) = \{ w \, | \, \langle w, y-x^{+} \rangle \leq 0, \, y \in C \}$, where $(y-x^{+})$ represents a vector in the tangent cone $T_C(x^{+}) = \textrm{cl} \big( \{ s \, (y-x^{+}) \, | \, y \in C, s \geq 0 \} \big)$. Since $- \nabla f(x^{+})$ is of $\epsilon (\eta \beta + 1)$ distance from the normal cone the normal cone, we can deduce that $\langle - \nabla f(x^{+}), y-x^{+} \rangle \leq \epsilon (\eta \beta + 1)$ for any vector $y \in C$. This inequality holds true for any vector $\delta = y - x^{+}$ in the tangent cone, including the vector that attains the maximum value.
\begin{align}
\max\limits_{x + \delta \in \Delta, \norm{\delta} \leq 1} \delta^{\top} \big( - \nabla f(x) \big) &\leq \epsilon (\eta \beta + 1) \nonumber \\
\Rightarrow \nonumber \\
- \min\limits_{x + \delta \in \Delta, \norm{\delta} \leq 1} \delta^{\top} \nabla f(x) &\leq \epsilon (\eta \beta + 1) 
\end{align}
\end{proof}

\begin{remark} \label{remark:appendix}
When the step size $\eta$ is set to $1/\beta$, we observe that the upper bound in \cref{prop:cone_to_nash} is $2\epsilon$. 
\end{remark}
\section{Omitted Proofs from \texorpdfstring{\cref{section:main_result}}{Section 4}} 
\label{section:appendix_proofs}

In this section, we provide the proofs for the statements that were presented in Section~\ref{section:analysis}. We present the proofs in the same order as they were in the main text.

\begin{lemma}[Proof of \cref{lemma:unique_maximizer}]
Let $\vxhat$ be an arbitrary joint strategy profile. Then, the function $\Tilde{\calL} (\vxhat, \vlambda) = \calL (\vxhat, \vlambda) - \mu \norm{\vlambda}^{2}$ is strongly concave, and so the maximizer $\vlambdahat = \argmax_{\vlambda} \Tilde{\calL} (\vxhat, \vlambda) $ is unique.
\end{lemma}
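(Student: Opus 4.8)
The plan is to show that $\Tilde{\calL}(\vxhat, \cdot)$ is $\mu$-strongly concave in $\vlambda$ directly from the definition, and then invoke the standard fact that a strongly concave function on a convex domain admits a unique maximizer. First I would recall that $\calL(\vxhat, \vlambda) = \Phi(\vxhat) + \vlambda^{\top} \vg(\vxhat)$ is, for fixed $\vxhat$, an \emph{affine} function of $\vlambda$ (the term $\Phi(\vxhat)$ is a constant and $\vlambda^{\top}\vg(\vxhat)$ is linear in $\vlambda$). Affine functions are both concave and convex, with zero Hessian. Subtracting $\mu\norm{\vlambda}^2$ then gives a function whose Hessian in $\vlambda$ equals $-2\mu I$, which is negative definite for $\mu > 0$; equivalently, $\Tilde{\calL}(\vxhat, \vlambda) + \mu\norm{\vlambda}^2 = \calL(\vxhat, \vlambda)$ is concave (being affine), which is exactly the definition of $\Tilde{\calL}(\vxhat, \cdot)$ being $2\mu$-strongly concave (or $\mu$-strongly concave depending on the normalization convention in the paper's macros).

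Next I would derive uniqueness of the maximizer. The domain of $\vlambda$ is $\mathbb{R}^d_{+}$ (or, after the boundedness results, the compact set $\{\norm{\vlambda}\le \Lambda_{\maxtag}\}$), which is convex and closed. A strongly concave function is in particular strictly concave, so if two distinct maximizers $\vlambda^{(1)} \neq \vlambda^{(2)}$ existed, their midpoint would give a strictly larger value, contradicting optimality; hence the maximizer $\vlambdahat$ is unique. One should also note existence: coercivity of $-\mu\norm{\vlambda}^2$ (the objective tends to $-\infty$ as $\norm{\vlambda}\to\infty$) together with continuity guarantees that the supremum over the closed domain is attained, so the $\argmax$ is well-defined and nonempty.

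I do not anticipate a genuine obstacle here — this is essentially a one-line observation that a linear-in-$\vlambda$ function minus a quadratic regularizer is strongly concave. The only thing to be careful about is bookkeeping: making sure $\vg(\vxhat)$ is well-defined and finite (which follows from $\vx_i \in \Delta(\calA_i)$ and continuity of the $g_{i,m}$), and stating the strong-concavity modulus consistently with the constant used later (the paper uses $\mu$-strong concavity in \cref{prop:properties_reg_lagrangian}, so I would phrase it to match that, absorbing the factor of $2$ into the convention or simply writing that the Hessian is $-2\mu I \preceq -\mu I$).
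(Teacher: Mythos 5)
Your proposal is correct and follows essentially the same route as the paper: observe that $\calL(\vxhat,\cdot)$ is affine in $\vlambda$, so subtracting $\mu\norm{\vlambda}^2$ yields a Hessian of $-2\mu I$ in $\vlambda$ and hence strong concavity, with uniqueness following from strict concavity. Your added remarks on existence via coercivity and on the $\mu$-versus-$2\mu$ modulus convention are small completions of details the paper's proof leaves implicit, but they do not change the argument.
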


\begin{proof}
Since the Lagrangian $\calL (\vxhat, \vlambda)$ is linear in $\vlambda$ to prove that the regularized Lagrangian $\Tilde{\calL}$ is strongly concave, it suffices to prove that the regularization term, $\mu \norm{ \vlambda }^2$, is strongly concave. It can be easily shown that $h(\vlambda) = \mu \norm{\vlambda}^2$ is strongly concave, as the Hessian matrix $\nabla^2 h(\vlambda) = 2 \mu \bm{I}$ is positive definite.\\ \\
Regarding the uniqueness of the maximizer, it readily follows as a strongly concave function is necessarily strictly concave as well.
\end{proof}

\begin{lemma}[Proof of \cref{lemma:bounded_maximizer}]
% Let $\vxhat$ be an arbitrary joint strategy profile. Then, the maximizer $\vlambdahat = \argmax_{\vlambda} \{ \Tilde{\calL}( \vxhat, \vlambda) \} $ is bounded above by $\Lambda_{\maxtag}$ component-wise, where $\Lambda_{\maxtag} = \frac{G_{\maxtag}}{2 \mu}, G_{\maxtag} = \max\limits_{i \in \calN} G^{i}_{\maxtag}$ and $ G^{i}_{\maxtag} = \max\limits_{\bm{x}_i \in \Delta(\mathcal{A}_i)} \max\limits_{m \in [d_i]} g_{i,m}(\vx_i)$ .
Let $\vxhat$ be an arbitrary joint strategy profile. Then, the maximizer $\vlambdahat = \argmax_{\vlambda} \{ \Tilde{\calL}( \vxhat, \vlambda) \} $ has bounded norm, i.e. $\norm{\vlambdahat} \leq \frac{\sqrt{d} G_{\maxtag}}{2 \mu} = \Lambda_{\maxtag}$, where \\ $G_{\max} = \max\limits_{i \in \calN} \max\limits_{m \in [d_i]} \max\limits_{\vx_i \in \Delta(\calA_i)} g_{i,m}(\vx_i)$.
\end{lemma}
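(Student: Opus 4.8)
The plan is to directly analyze the first-order optimality condition for the unconstrained maximization $\vlambdahat = \argmax_{\vlambda} \Tilde{\calL}(\vxhat, \vlambda)$. Since $\Tilde{\calL}(\vxhat, \vlambda) = \Phi(\vxhat) + \vlambda^{\top}\vg(\vxhat) - \mu\norm{\vlambda}^2$ is strongly concave in $\vlambda$ (by \cref{lemma:unique_maximizer}) and differentiable over all of $\mathbb{R}^d$, the unique maximizer is characterized by $\nabla_{\vlambda}\Tilde{\calL}(\vxhat, \vlambdahat) = \bm{0}$, i.e. $\vg(\vxhat) - 2\mu\vlambdahat = \bm{0}$, which gives the closed form $\vlambdahat = \frac{1}{2\mu}\vg(\vxhat)$.

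From this closed form the bound is immediate: $\norm{\vlambdahat} = \frac{1}{2\mu}\norm{\vg(\vxhat)}$. It remains to bound $\norm{\vg(\vxhat)}$. Each component is $g_{i,m}(\vxhat_i)$, and since $\vxhat_i \in \Delta(\calA_i)$ we have $g_{i,m}(\vxhat_i) \leq G_{\maxtag}$ by the definition of $G_{\maxtag} = \max_{i}\max_{m}\max_{\vx_i \in \Delta(\calA_i)} g_{i,m}(\vx_i)$. There are $d$ components in total, so $\norm{\vg(\vxhat)}^2 = \sum_{i,m} g_{i,m}(\vxhat_i)^2 \leq d\, G_{\maxtag}^2$, hence $\norm{\vg(\vxhat)} \leq \sqrt{d}\,G_{\maxtag}$. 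Combining, $\norm{\vlambdahat} \leq \frac{\sqrt{d}\,G_{\maxtag}}{2\mu} = \Lambda_{\maxtag}$, as claimed.

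One subtlety worth addressing carefully is the sign: the bound uses $G_{\maxtag}$, the maximum of the constraint functions over the simplex, rather than $|G_{\maxtag}|$ or the maximum of $|g_{i,m}|$. If $\vxhat$ happens to be feasible then $g_{i,m}(\vxhat_i) \leq 0$ and the corresponding $\lambdahat_{i,m} = \frac{1}{2\mu}g_{i,m}(\vxhat_i)$ would be non-positive; since we also want $\vlambdahat \succeq \bm{0}$ as a multiplier, one should note that the maximization in the algorithm is genuinely over all of $\mathbb{R}^d$ (so the stated $\argmax$ is the one above), or alternatively restrict to $\vlambda \succeq \bm{0}$, in which case the optimizer is the positive-part truncation $\lambdahat_{i,m} = \max\{0, g_{i,m}(\vxhat_i)\}/(2\mu)$ and the same norm bound holds a fortiori. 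Either way the inequality $\norm{\vlambdahat} \leq \sqrt{d}\,G_{\maxtag}/(2\mu)$ follows. I do not anticipate a real obstacle here — the only thing to be careful about is stating the domain of $\vlambda$ consistently with the rest of the paper and making sure the componentwise bound $g_{i,m}(\vxhat_i) \leq G_{\maxtag}$ is invoked correctly (it is exactly the definition of $G_{\maxtag}$), after which the $\ell_2$-norm estimate over $d$ coordinates is routine.
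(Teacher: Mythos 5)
Your proof is correct and follows essentially the same route as the paper's: set the gradient of the strongly concave $\Tilde{\calL}(\vxhat,\cdot)$ to zero to obtain $\vlambdahat = \vg(\vxhat)/(2\mu)$, then bound each coordinate by $G_{\maxtag}$ and sum over the $d$ components to get $\norm{\vlambdahat} \leq \sqrt{d}\,G_{\maxtag}/(2\mu)$. Your additional remark about the sign of $g_{i,m}(\vxhat_i)$ and the domain of $\vlambda$ is a legitimate subtlety that the paper's proof glosses over, and your resolution (truncation at zero only shrinks the norm) is sound.
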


\begin{proof}
Let $\vlambdahat$ be the maximizer of $\Tilde{\calL}( \vxhat, \cdot)$. According to \cref{lemma:unique_maximizer}, we know that $\vlambdahat$ is unique and so the gradient of $\Tilde{\calL}( \vxhat, \cdot)$ must vanish at $\vlambdahat$.
\begin{align}
\nabla_{\vlambda} 
\left(
\Tilde{\calL} ( \vxhat, \vlambdahat)
\right)
&=
\nabla_{\bm{\lambda}} 
\bigg(
\Phi(\vxhat) 
+
\vlambdahat^{\top} \vg(\vxhat)
\notag
\\
&\phantom{=\nabla_{\bm{\lambda}}} 
- 
\mu \norm{ \vlambdahat }^2
\bigg)
\\
\Rightarrow
\vg(\vxhat)
- 
2 \mu \vlambdahat
&=
\bm{0}
\\
\Rightarrow
\norm{\vlambdahat} 
&=
\frac{\norm{\vg(\vxhat)}}{2 \mu}
\end{align}

Then, for any player $i \in [n]$ and any constraint $m \in \{ 1, 2, \ldots, d_i \}$ it holds that
$
g_{i,m} (\vx_i)
\leq
G_{\maxtag}^i
$ where $G_{\maxtag}^i = \max\limits_{\bm{x}_i \in \Delta(\mathcal{A}_i)} \max\limits_{m \in [d_i]} g_{i,m}(\vx_i)$ is the maximum value over all $\bm{x}_i \in \Delta(\mathcal{A}_i)$ and over all constraints of player $i$. Hence, defining $G_{\maxtag} = \max_{i \in \calN} G_{\maxtag}^i$ as the maximum over all players, and $\Lambda_{\maxtag} = \frac{G_{\maxtag}}{2 \mu}$ we can upper bound the norm of $\vlambdahat$. 
\begin{align}
\norm{\vlambdahat} &= \frac{1}{2 \mu} \sqrt{ \sum\limits_{i \in \calN} \sum\limits_{m \in [d_i]} \left( g_{i,m} (\vx_i) \right)^2 } \nonumber \\
&\leq \frac{1}{2 \mu} \sqrt{ \sum\limits_{i \in \calN} \sum\limits_{m \in [d_i]} \left( G_{\maxtag} \right)^2 } \nonumber \\
&= \left( \frac{\sqrt{d} \, G_{\maxtag}}{2\mu} \right)
\end{align}
\end{proof}

% \subsection{Smoothness Property}
\paragraph{Smoothness Properties} The following three lemmas deal with the smoothness property of $\Phi$, $\mathcal{L}$, and $\Tilde{\mathcal{L}}$. It is worth noting that this property plays a pivotal role in establishing the smoothness of $\phi(\vx) = \max_{\vlambda} \Tilde{\mathcal{L}}(\vx, \vlambda)$, thereby enabling us to utilize the \cref{lemma:descent_lemma} in subsequent steps.

\begin{lemma}[Proof of \cref{lemma:smoothness_potential}]
The potential function $\Phi$ is $(n A_{\maxtag} \Phi_{\maxtag})$-smooth. 
\end{lemma}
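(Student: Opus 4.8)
The plan is to bound the spectral norm of the Hessian $\nabla^2_{\vx} \Phi(\vx)$ uniformly over $\vx \in \Delta^n$, since a function is $L$-smooth precisely when $\norm{\nabla^2 \Phi(\vx)}_2 \leq L$ everywhere on the (convex) domain. The Hessian has a natural $n \times n$ block structure indexed by the players: the $(i,j)$ block is $\nabla^2_{\vx_i \vx_j} \Phi(\vx)$, a matrix of size $|\calA_i| \times |\calA_j|$. The strategy is to bound the spectral norm of each block by $A_{\maxtag} \Phi_{\maxtag}$ and then invoke \cref{prop:spectral_norm_block_matrix} to conclude $\norm{\nabla^2_{\vx} \Phi(\vx)}_2 \leq n A_{\maxtag} \Phi_{\maxtag}$.

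To bound an individual block, I would use the key structural fact that $\Phi$ is \emph{multilinear}: since $\Phi(\vx)$ is the expectation $\mathbb{E}_{\va \sim \vx}[\phi(\va)]$ of some scalar potential $\phi$ on the discrete action profiles, it is a polynomial that is affine (degree one) in each player's strategy block $\vx_i$ separately. Consequently the diagonal blocks $\nabla^2_{\vx_i \vx_i} \Phi = \bm{0}$ vanish, and each off-diagonal block $\nabla^2_{\vx_i \vx_j} \Phi$ has entries of the form $\partial^2 \Phi / \partial x_i(a_i) \partial x_j(a_j)$, which is a value of $\Phi$ evaluated at a profile where players $i$ and $j$ play pure actions $a_i, a_j$ and everyone else plays according to $\vx$; in absolute value this is at most $\Phi_{\maxtag}$. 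A matrix of dimension at most $A_{\maxtag} \times A_{\maxtag}$ with entries bounded by $\Phi_{\maxtag}$ has Frobenius norm at most $A_{\maxtag} \Phi_{\maxtag}$, hence spectral norm at most $A_{\maxtag} \Phi_{\maxtag}$; this gives the per-block bound $L = A_{\maxtag} \Phi_{\maxtag}$. Feeding this into \cref{prop:spectral_norm_block_matrix} yields the claimed smoothness constant $n A_{\maxtag} \Phi_{\maxtag}$.

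I expect the main obstacle to be making the multilinearity argument fully rigorous and choosing the cleanest route for the per-block spectral bound. One subtlety is whether to bound the block's spectral norm via its Frobenius norm (simple, but potentially loses a factor) or more carefully — the stated constant $A_{\maxtag} \Phi_{\maxtag}$ is consistent with the crude Frobenius bound applied with the observation that an $A_{\maxtag} \times A_{\maxtag}$ matrix with entries of magnitude $\le \Phi_{\maxtag}$ has operator norm $\le A_{\maxtag}\Phi_{\maxtag}$ (e.g.\ by bounding each row's $\ell_1$ norm, i.e.\ the $\norm{\cdot}_\infty \to \norm{\cdot}_\infty$ bound combined with symmetry, or directly $\norm{M}_2 \le \sqrt{\norm{M}_1 \norm{M}_\infty} \le A_{\maxtag}\Phi_{\maxtag}$). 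A second point of care is confirming that $\Phi_{\maxtag} = \max_{\vx}\Phi(\vx)$ genuinely dominates these mixed second partials — this is immediate once one recognizes each such partial is itself the value of $\Phi$ at a particular (partially pure) strategy profile in $\Delta^n$, using multilinearity to write $\partial^2\Phi/\partial x_i(a_i)\partial x_j(a_j) = \Phi(\ve_{a_i}, \ve_{a_j}, \vx_{-\{i,j\}}) $ up to the affine terms that the second derivative kills.
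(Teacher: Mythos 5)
Your proposal is correct and follows essentially the same route as the paper: bound each block $\nabla^2_{\vx_i\vx_j}\Phi$ of the Hessian by $A_{\maxtag}\Phi_{\maxtag}$ (via the Frobenius norm, after observing that each mixed second partial is the potential evaluated at a partially pure profile and hence bounded by $\Phi_{\maxtag}$), then invoke \cref{prop:spectral_norm_block_matrix} to get the factor $n$. Your added remarks on multilinearity and the vanishing diagonal blocks only make explicit what the paper leaves implicit.
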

\begin{proof}
To establish the $\beta$-smoothness of the function $\Phi$, it is necessary to demonstrate that the eigenvalues of its Hessian matrix are smaller than $\beta$, assuming that $\Phi$ is twice differentiable. Considering the symmetry of the Hessian matrix, we can alternatively prove this by utilizing the spectral norm. Specifically, the spectral norm of a symmetric matrix is equivalent to the maximum absolute value of its largest eigenvalue. Therefore, if the Hessian matrix of $\Phi$ has a bounded spectral norm, it implies that all eigenvalues of the Hessian are bounded by $\beta$. Hence, we consider the Hessian $\nabla^{2} \Phi$ as a block matrix of $n \times n$ submatrices of the form $C_{ij} = \nabla^{2}_{\vx_i, \vx_j} \Phi$. 

\begin{align}
\norm{\nabla^{2}_{\vx_i, \vx_j} \Phi}_2
&\leq
\norm{\nabla^{2}_{\vx_i, \vx_j} \Phi}_{F}
\notag
\\
&=
\left(
\sum\limits_{k=1}^{|\calA_i|}
\sum\limits_{\ell=1}^{|\calA_j|}
\bigg|
\frac{\partial^2 \Phi}{\partial x_{ik} \, \partial x_{i \ell}} 
\bigg|^{2}
\right)^{1/2}
\label{aux:smoothness_potential_1}
\\
&\leq
\left(
\sum\limits_{k=1}^{|\calA_i|}
\sum\limits_{\ell=1}^{|\calA_j|}
| \Phi_{\maxtag} |^2
\right)^{1/2}
\notag
\\
&=
\left(
|\calA_i|
|\calA_j|
| \Phi_{\maxtag} |^2
\right)^{1/2}
\label{aux:smoothness_potential_2}
\\
&\leq
A_{\maxtag} \Phi_{\maxtag}
\label{aux:smoothness_potential_3}
\end{align}

\cref{aux:smoothness_potential_1} follows from a well-known inequality $\norm{A}_2 = \sigma_{\maxtag} (A) \leq \norm{A}_{F}$, and \cref{aux:smoothness_potential_3} holds from the definition of $A_{\maxtag}$. Regarding \cref{aux:smoothness_potential_2}, we notice that $\frac{\partial^2 \Phi}{\partial x_{ik} , \partial x_{i \ell}}$ is bounded by $\Phi_{\maxtag}$, which represents the maximum value of the potential. This can be understood as the case in which players $i$ and $j$ assign all their probability mass to strategies $k$ and $\ell$ respectively. 
Using now \cref{prop:spectral_norm_block_matrix}, we conclude that 
\[
\norm{\nabla^2 \Phi}_2 \leq n \norm{\nabla^2_{\vx_i \vx_j} \Phi}_2 \leq n A_{\maxtag} \Phi_{\maxtag}
\]
\end{proof}

\begin{lemma}[Proof of \cref{lemma:smoothness_lagrangian}]
The Lagrangian function $\calL$ is $(n A_{\maxtag} \Phi_{\maxtag} + \Lambda_{\maxtag} \gamma)$-smooth,
where $\gamma$ is the smoothness of the constraints.
\end{lemma}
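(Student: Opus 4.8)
The plan is to bound the spectral norm of the Hessian of $\calL$ in the $\vx$ variable (for a fixed multiplier vector $\vlambda$ with $\norm{\vlambda} \leq \Lambda_{\maxtag}$, which is the regime guaranteed by \cref{lemma:bounded_maximizer}), since this is the smoothness parameter that feeds the descent argument. Starting from $\calL(\vx, \vlambda) = \Phi(\vx) + \vlambda^{\top} \vg(\vx)$, I use linearity of the Hessian to write $\nabla^2_{\vx} \calL = \nabla^2_{\vx} \Phi + \nabla^2_{\vx}\big(\vlambda^{\top} \vg\big)$, and then apply the triangle inequality for the spectral norm, so that it suffices to bound the two pieces independently.

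The first piece is already handled: \cref{lemma:smoothness_potential} gives $\norm{\nabla^2_{\vx}\Phi}_2 \leq n A_{\maxtag} \Phi_{\maxtag}$. For the second piece, the structural point is that $g_{i,m}$ is a function of $\vx_i$ alone, so $\nabla^2_{\vx}\big(\vlambda^{\top}\vg\big)$ is block diagonal, with $i$-th diagonal block $\nabla^2_{\vx_i}\big(\sum_{m \in [d_i]} \lambda_{i,m} g_{i,m}(\vx_i)\big) = \sum_{m} \lambda_{i,m}\nabla^2_{\vx_i} g_{i,m}(\vx_i)$ and zero off-diagonal blocks. The spectral norm of a block-diagonal matrix is the maximum spectral norm over its blocks, and each $g_{i,m}$ is $\gamma$-smooth by \cref{assumption:smooth_constraints}, so each block has spectral norm at most $\gamma$ times the $\ell_1$ mass of the corresponding multipliers; using $\vlambda \succeq \bm{0}$ and the norm bound $\norm{\vlambda}\leq \Lambda_{\maxtag}$ from \cref{lemma:bounded_maximizer}, this is at most $\Lambda_{\maxtag}\gamma$. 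Combining the two bounds yields $\norm{\nabla^2_{\vx}\calL}_2 \leq n A_{\maxtag}\Phi_{\maxtag} + \Lambda_{\maxtag}\gamma$, which is the claimed smoothness constant.

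The main thing to be careful about is the bookkeeping of norms in the constraint term: the weighted sum $\sum_m \lambda_{i,m}\nabla^2_{\vx_i} g_{i,m}$ is most naturally controlled by $\gamma\norm{\vlambda_i}_1$, so one has to argue that this stays within $\Lambda_{\maxtag}\gamma$ given the bound on $\norm{\vlambda}$ (componentwise positivity of $\vlambda$ and the definition of $\Lambda_{\maxtag}$ make this go through). A secondary point worth stating explicitly is that the smoothness here is with respect to $\vx$ only: $\calL$ is linear in $\vlambda$, so there is no curvature contribution from the multiplier variable, which is precisely why the later regularized version $\Tilde{\calL}$ simply picks up an additive $2\mu$ from the $-\mu\norm{\vlambda}^2$ term.
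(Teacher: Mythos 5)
Your proposal follows essentially the same route as the paper: split $\nabla^2_{\vx}\calL$ into $\nabla^2_{\vx}\Phi$ plus the Hessian of the constraint term, apply the triangle inequality for the spectral norm, and control the two pieces via \cref{lemma:smoothness_potential} and the $\gamma$-smoothness of the $g_{i,m}$ together with $\norm{\vlambda}\leq\Lambda_{\maxtag}$ (your observation that the constraint Hessian is block diagonal is a nice extra detail the paper leaves implicit). The only caveat --- shared by the paper's own one-line proof, which invokes Cauchy--Schwarz --- is that passing from $\gamma\norm{\vlambda_i}_1$ to $\Lambda_{\maxtag}\gamma$ silently drops a $\sqrt{d_i}$-type factor, which is harmless for the final asymptotics but would be worth stating explicitly.
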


\begin{proof}
By combining the  triangle inequality, Cauchy–Schwarz inequality and the fact that $\norm{\nabla^2 \Phi} \leq n A_{\maxtag} \Phi_{\maxtag}$ and $\norm{\vlambda \nabla^2 \vg(\vx)} \leq \Lambda_{\maxtag} \gamma$, we can bound to spectral norm of sum of those matrices.
\end{proof}

\begin{lemma}[Proof of \cref{lemma:smoothness_reg_lagrangian}]
The regularized Lagrangian $\Tilde{\calL}$ is $(n A_{\maxtag} \Phi_{\maxtag} + \Lambda_{\maxtag} \gamma + 2 \mu)$-smooth.
\end{lemma}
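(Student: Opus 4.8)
The plan is to exploit additivity of the Hessian together with the triangle inequality for the spectral norm, exactly as in the proof of \cref{lemma:smoothness_lagrangian}. Write $\Tilde{\calL}(\vx,\vlambda) = \calL(\vx,\vlambda) + r(\vlambda)$ with $r(\vlambda) = -\mu\norm{\vlambda}^2$. Since $r$ is twice continuously differentiable (as is $\calL$, given \cref{assumption:smooth_constraints} and twice-differentiability of $\Phi$), the full Hessian of $\Tilde{\calL}$ in the joint variable $(\vx,\vlambda)$ decomposes as $\nabla^2\Tilde{\calL} = \nabla^2\calL + \nabla^2 r$, and the smoothness constant of $\Tilde{\calL}$ is controlled by $\norm{\nabla^2\Tilde{\calL}}_2$.

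For the first summand, \cref{lemma:smoothness_lagrangian} already gives $\norm{\nabla^2\calL}_2 \le n A_{\maxtag}\Phi_{\maxtag} + \Lambda_{\maxtag}\gamma$, and this bound is uniform over the relevant domain because \cref{lemma:bounded_maximizer} caps $\norm{\vlambda}$ by $\Lambda_{\maxtag}$, so the Cauchy--Schwarz step $\norm{\vlambda^{\top}\nabla^2\vg} \le \norm{\vlambda}\cdot\gamma \le \Lambda_{\maxtag}\gamma$ in that proof holds everywhere with the same constant. For the second summand, $r$ depends on $\vlambda$ only, so in the joint variable its Hessian is the block-diagonal matrix $\operatorname{diag}(\bm 0,\, -2\mu\bm I_d)$, whose spectral norm equals $2\mu$. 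Adding the two bounds via $\norm{A+B}_2 \le \norm{A}_2 + \norm{B}_2$ yields $\norm{\nabla^2\Tilde{\calL}}_2 \le n A_{\maxtag}\Phi_{\maxtag} + \Lambda_{\maxtag}\gamma + 2\mu$, which is the claimed smoothness parameter.

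There is no real obstacle here: the lemma is a one-line consequence of the preceding smoothness lemmas plus the observation that the regularizer contributes exactly $2\mu$ to the spectral norm of the Hessian. The only point worth stating carefully is that smoothness is being measured with respect to the joint variable $(\vx,\vlambda)$ --- this is what \cref{lemma:smoothness_max} (Lemma 4.3 of \cite{lin2020gradient}) consumes downstream --- so the Hessian decomposition and the triangle inequality should be applied to the full $(\vx,\vlambda)$-Hessian rather than to the $\vx$- and $\vlambda$-blocks separately.
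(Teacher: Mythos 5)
Your proposal is correct and follows essentially the same route as the paper: the paper's own proof is a one-line application of the triangle inequality on the spectral norm of the Hessian, adding the $2\mu$ contribution of the regularizer $-\mu\norm{\vlambda}^2$ to the bound from \cref{lemma:smoothness_lagrangian}. Your additional remark that the smoothness must be taken with respect to the joint variable $(\vx,\vlambda)$ (as required by \cref{lemma:smoothness_max}) is a useful clarification but does not change the argument.
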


\begin{proof}
As in the proof of \cref{lemma:smoothness_lagrangian}, we simply apply the triangle-inequality in the spectral norm and get that $\Tilde{\calL}$ is $(n A_{\maxtag} \Phi_{\maxtag} + \Lambda_{\maxtag} \gamma + 2 \mu)$-smooth.
\end{proof}

\paragraph{Boundness of optimal Lagrange multipliers} The significance of the next lemma lies in its ability to provide a bound value of the Lagrange multipliers $\vlambda^{\star}(\vxhat) = \vlambdahat$, as defined in \cref{lemma:smoothness_max}, in the case of an approximate stationary point $\vxhat$. However, the original bound we presented in \cref{lemma:bounded_maximizer} is ineffective in this scenario. This is because the bound is dependent on the regularization parameter, which, in turn, relies on the desired precision of the approximation we aim to achieve.

\begin{lemma}[Proof of \cref{lemma:bounded_optimal_multipliers}]
Let $\vxhat$ represent the $\epsilon$-approximate first-order stationary point returned by the Algorithm. Then, for each set of Lagrange multipliers $\vlambdahat_i$, where $\vlambdahat_i = \argmax_{\vlambda_i} \calL_i(\vxhat_i, \vlambda_i ; \vxhat_{-i}, \vlambdahat_{-i})$, we can bound them by $\frac{2 \big( \Phi_{\maxtag} - \Phi_{\mintag} \big) }{\xi_{i,m}}$ component-wise, where $\xi_{i,m}$ are defined in \cref{assumption:slater}.
\end{lemma}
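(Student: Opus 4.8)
\textbf{Proof proposal for \cref{lemma:bounded_optimal_multipliers}.}

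The plan is to exploit Slater's condition (\cref{assumption:slater}) together with strong duality per player (\cref{lemma:strong_duality_original}, and its reformulation in \cref{remark:1}) to obtain a bound on the optimal multipliers that does not degrade with $\mu$. Fix a player $i$ and the (approximately optimal) strategies $\vxhat_{-i}$ and multipliers $\vlambdahat_{-i}$ of the other players. Consider the per-player Lagrangian $\calL_i(\vx_i, \vlambda_i; \vxhat_{-i}, \vlambdahat_{-i}) = \Phi(\vx_i, \vxhat_{-i}) + \sum_{m=1}^{d_i} \lambda_{i,m} g_{i,m}(\vx_i)$ and let $\vlambdahat_i = \argmax_{\vlambda_i \succeq \bm 0} \min_{\vx_i \in \Delta(\calA_i)} \calL_i$. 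By \cref{remark:1}, the value of this max-min equals $\min_{\vx_i \in \calS_i} \Phi(\vx_i, \vxhat_{-i})$, which is at most $\Phi_{\maxtag}$.

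Next I would plug the Slater point $\Tilde{\vx}_i$ of \cref{assumption:slater} into the inner minimization. Since $\min_{\vx_i \in \Delta(\calA_i)} \calL_i(\vx_i, \vlambdahat_i; \cdot) \le \calL_i(\Tilde{\vx}_i, \vlambdahat_i; \cdot)$, and since $\vlambdahat_i$ is the maximizer so that the max-min value is exactly $\min_{\vx_i} \calL_i(\vx_i, \vlambdahat_i; \cdot)$, we get
\begin{equation}
\Phi_{\mintag} \le \min_{\vx_i \in \calS_i} \Phi(\vx_i, \vxhat_{-i}) = \min_{\vx_i \in \Delta(\calA_i)} \calL_i(\vx_i, \vlambdahat_i; \cdot) \le \Phi(\Tilde{\vx}_i, \vxhat_{-i}) + \sum_{m=1}^{d_i} \lambdahat_{i,m} \, g_{i,m}(\Tilde{\vx}_i).
\end{equation}
Because every term $\lambdahat_{i,m} \ge 0$ and $g_{i,m}(\Tilde{\vx}_i) < \xi_{i,m} < 0$, each summand is negative; keeping only the $m$-th term and dropping the rest gives $\Phi_{\mintag} \le \Phi_{\maxtag} + \lambdahat_{i,m}\,\xi_{i,m}$, i.e. $\lambdahat_{i,m} \le \big(\Phi_{\maxtag} - \Phi_{\mintag}\big)/(-\xi_{i,m})$. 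Writing the bound with $\xi_{i,m}$ understood as its magnitude (or replacing $\xi_{i,m}$ by $|\xi_{i,m}|$) yields the stated componentwise bound, up to the factor $2$ which one absorbs to account for the fact that $\vxhat$ is only an $\epsilon$-approximate stationary point and $\vxhat_{-i}$, $\vlambdahat_{-i}$ are themselves approximate rather than exact (so the max-min value is within an additive $O(\epsilon)$ of $\min_{\vx_i \in \calS_i}\Phi$, and for $\epsilon$ small the factor $2$ is a safe slack).

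The main obstacle is handling the approximation carefully: strong duality as stated in \cref{lemma:strong_duality_original} is an exact statement, but here we apply it with the \emph{approximate} quantities $\vxhat_{-i}, \vlambdahat_{-i}$ returned by Algorithm \igd, so I must verify that the identity in \cref{remark:1} still holds (it does — the per-player strong duality holds for \emph{any} fixed $\vx_{-i}, \vlambda_{-i}$, approximate or not, since it only concerns the inner min-max over player $i$'s own variables), and then track how the $\epsilon$-stationarity of $\vxhat_i$ translates into the constant in front. A secondary subtlety is that $\vlambdahat_i = \argmax_{\vlambda_i}\calL_i$ (without regularization) must be shown to exist and be finite in the first place — which is exactly what Slater's condition buys us, and indeed the computation above simultaneously establishes existence of a finite bound, so there is no circularity.
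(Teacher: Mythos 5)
Your proposal is correct and follows essentially the same route as the paper's proof: lower-bound the per-player max--min value by $\Phi_{\mintag}$, use per-player strong duality (\cref{remark:1}) to identify it with $\min_{\vx_i \in \calS_i}\Phi(\vx_i,\vxhat_{-i})$, evaluate the inner minimum at the Slater point $\Tilde{\vx}_i$, and keep only the $m$-th (nonpositive) term $\lambda_{i,m}\,g_{i,m}(\Tilde{\vx}_i)$ to isolate the componentwise bound. The only difference is bookkeeping: the paper bounds the multiplier attached to the algorithm's approximate saddle point by explicitly chaining two applications of the $\epsilon$-approximate-Nash property of $(\vxhat_i,\vlambdahat_i)$ (costing $2\epsilon$, then absorbed into the factor $2$ for small $\epsilon$), whereas you first bound the exact dual optimizer and defer the same $O(\epsilon)$ slack to the end --- precisely the step you yourself flag as the one needing verification.
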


\begin{proof}
The proof uses strong duality, which is stated in \cref{lemma:strong_duality_original}. 
\begin{align}
\Phi_{\mintag} 
&\leq
\min\limits_{\vx_i \in \calS_i} \Phi(\vx_i, \vxhat_{-i})
\label{aux:bounded_multiplier_1}
\\
&=
\max\limits_{\vlambda_i} \min\limits_{\vx_i} 
\calL (\vx_i, \vlambda_i ; \vxhat_i, \vlambdahat_i )
\label{aux:bounded_multiplier_2}
\\
&=
\calL (\vx_i^{\star}, \vlambda_i^{\star} ; \vxhat_i, \vlambdahat_i )
\label{aux:bounded_multiplier_3}
\\
&\leq
\calL (\vxhat_i, \vlambdahat_i ; \vxhat_i, \vlambdahat_i ) + \epsilon
\label{aux:bounded_multiplier_4}
\\
&\leq
\calL (\vx_i, \vlambdahat_i ; \vxhat_i, \vlambdahat_i ) + \epsilon & \text{for any } \vx_i \in \Delta(\calA_i)
\label{aux:bounded_multiplier_5}
\end{align}
The validity of \cref{aux:bounded_multiplier_1} can be attributed to the fact that the right-hand side is defined exclusively within the feasibility set of player $i$. As for \cref{aux:bounded_multiplier_2}, it can be established through strong duality, which states that the optimal value of the dual problem \eqref{eq:dual_problem} is equal to the optimal value of the primal problem \eqref{eq:primal_problem}. Regarding \cref{aux:bounded_multiplier_3}, both variables $\vx$ and $\vlambda$ are confined within bounded domains, ensuring that the optimal values remain finite. As for \cref{aux:bounded_multiplier_4}, it holds true since $(\vxhat_i, \vlambdahat_i)$ is an $\epsilon$-approximate Nash equilibrium, meaning that $\calL (\vxhat_i, \vlambdahat_i ; \vxhat_i, \vlambdahat_i )$ differs from the minmax (or maxmin) value by at most $\epsilon$. Lastly, \cref{aux:bounded_multiplier_5} follows from the definition of an approximate Nash equilibrium. Specifically, we can interpret $\calL (\vx_i, \vlambda_i ; \vxhat_i, \vlambdahat_i )$ as a zero-sum game between the minimizer $\vx_i$ and the maximizer $\vlambda_i$. Thus, according to \cref{def:approximate_nash_eq}, it can be deduced that no unilateral deviations $\vx_i$ can yield improvements greater than $O(\epsilon)$.
\begin{equation}
\calL (\vxhat_i, \vlambdahat_i ; \vxhat_i, \vlambdahat_i ) \leq 
\calL (\vx_i, \vlambdahat_i ; \vxhat_i, \vlambdahat_i )
+ \epsilon
\quad
\text{for any }
\vx_i \in \Delta(\calA_i)
\end{equation}
To proceed, we leverage Slater's condition as stated in \cref{assumption:slater}. In particular, since \cref{aux:bounded_multiplier_5} holds for any $\vx_i$, we are free to choose $\vx_i = \Tilde{\vx}_i$.
\begin{align}
\Phi_{\mintag} 
&\leq
\calL (\Tilde{\vx}_i, \vlambdahat_i ; \vxhat_i, \vlambdahat_i ) + \epsilon
\\
&=
\Phi (\Tilde{\vx}_i, \vxhat_i) 
+
\vlambda_i^{\top} \vg_i(\Tilde{\vx}_i) + 2 \epsilon
\\
&\leq
\Phi_{\maxtag} + \sum\limits_{m=1}^{d_i} \lambda_{i,m} g_{i,m}(\Tilde{\vx}_i) + 2 \epsilon
\\
&\leq
\Phi_{\maxtag} + \sum\limits_{m=1}^{d_i} \lambda_{i,m} \big( - \xi_{i,m} \big) + 2 \epsilon 
\\
&\phantom{\leq}
\text{(Slater's condition)}
\\
\Rightarrow
\sum\limits_{m=1}^{d_i} \lambda_{i,m} \xi_{i,m} 
&\leq
\Phi_{\maxtag} - \Phi_{\mintag} + \epsilon
\\
\Rightarrow
\lambda_{i,m} 
&\leq 
\frac{\Phi_{\maxtag} - \Phi_{\mintag}}{\xi_{i,m}} 
+ \epsilon
\quad 
\text{ for any } m \in [d_i]
\\
\Rightarrow
\lambda_{i,m} 
&\leq 
\frac{2 \big( \Phi_{\maxtag} - \Phi_{\mintag} \big) }{\xi_{i,m}} 
\quad 
\text{ for any } m \in [d_i]
\end{align}

where the last inequality holds for sufficiently small $\epsilon$.
\end{proof}

\subsection{Full Proof of Lemma~\ref{thm:stationary_reg_lagrangian}}
Here, we present an expanded proof of the main \cref{thm:stationary_reg_lagrangian} and address certain points that were initially omitted in the main text due to space limitations.

\begin{lemma}[Full Proof of \cref{thm:stationary_reg_lagrangian}]
Let the learning rate $\eta$ be $\frac{1}{\beta}$, where
$$
\beta 
=
\frac{1}{c}
\quad
\text{and}
\quad 
c = 4((n A_{\maxtag})^2 + (\Lambda_{\maxtag} \gamma)^2)
$$
If we run Algorithm \igd for \\ $T = \frac{32}{\epsilon^2 \mu}
\big( \Phi_{\maxtag} + \Lambda_{\maxtag} \sqrt{d} G_{\maxtag} \big)
\left(
(n A_{\maxtag})^2 + (\Lambda_{\maxtag} \gamma)^2
\right)$
iterations,
then there exists a timestep $t \in \{ 1, 2, \cdots, T \}$ such that $\vx^{(t)}$ is an $\epsilon$-approximate first order stationary point of $\phi$.
\end{lemma}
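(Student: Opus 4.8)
The plan is to combine the smoothness of $\phi$ with a standard "sum of squared step-sizes telescopes" argument for projected gradient descent, and then translate a small step into an approximate stationary point via \cref{prop:cone_to_nash}. First I would recall from \cref{lemma:smoothness_reg_lagrangian} that $\Tilde{\calL}$ is $\ell$-smooth with $\ell = n A_{\maxtag} \Phi_{\maxtag} + \Lambda_{\maxtag}\gamma + 2\mu$, and that by \cref{prop:properties_reg_lagrangian} it is $\mu$-strongly concave in $\vlambda$; feeding this into \cref{lemma:smoothness_max} gives that $\phi(\cdot) = \max_\vlambda \Tilde{\calL}(\cdot,\vlambda)$ is $L_\phi$-smooth with $L_\phi = \ell(1+\kappa)$ and $\kappa = \ell/\mu$, and crucially that $\nabla\phi(\vx) = \nabla_\vx \Tilde{\calL}(\vx, \vlambda^\star(\vx))$. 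This last identity is what makes the algorithm's update a genuine projected gradient step on $\phi$: since Algorithm~\igd sets $\vlambda^{(t)} = \argmax_\vlambda \Tilde{\calL}(\vx^{(t)},\vlambda)$ and then updates each $\vx_i$ by projecting $\vx_i^{(t)} - \eta(\nabla_{\vx_i}C_i(\vx^{(t)}) + \vlambda_i^\top\nabla_{\vx_i}\vg_i)$, the stacked update is exactly $\vx^{(t+1)} = \Pi_{\Delta^n}(\vx^{(t)} - \eta\nabla\phi(\vx^{(t)}))$.

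Next I would apply the descent lemma for projected gradient descent (\cref{lemma:descent_lemma}) with step size $\eta = 1/\beta$ where $\beta \ge L_\phi$ — here one checks that $\beta = c/\mu$ with $c = 4((nA_{\maxtag})^2 + (\Lambda_{\maxtag}\gamma)^2)$ indeed dominates $L_\phi = \ell(1+\ell/\mu)$ for the relevant regime of $\mu$, since $\ell^2/\mu$ is the leading term and $\ell^2 \le 2((nA_{\maxtag}\Phi_{\maxtag})^2 + \dots)$ up to the $2\mu$ correction; I would state this as the constant-chasing that justifies the choice of $c$. The descent lemma then yields $\phi(\vx^{(t+1)}) - \phi(\vx^{(t)}) \le -\tfrac{1}{2\beta}\norm{\vx^{(t+1)} - \vx^{(t)}}^2$, so summing over $t = 1,\dots,T$ and telescoping gives $\tfrac{1}{2\beta}\sum_{t}\norm{\vx^{(t+1)}-\vx^{(t)}}^2 \le \phi(\vx^{(0)}) - \phi(\vx^\star) =: \delta_\phi$. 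Hence there is some $t$ with $\norm{\vx^{(t+1)}-\vx^{(t)}}^2 \le 2\beta\delta_\phi/T$ (this is the content of \cref{thm:descent_gradient}).

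Then I would bound $\delta_\phi$: since $\phi(\vx) = \Phi(\vx) + \vlambda^\star(\vx)^\top\vg(\vx) - \mu\norm{\vlambda^\star(\vx)}^2 \le \Phi_{\maxtag} + \Lambda_{\maxtag}\norm{\vg} \le \Phi_{\maxtag} + \Lambda_{\maxtag}\sqrt{d}\,G_{\maxtag}$ using \cref{lemma:bounded_maximizer}, and $\phi(\vx^\star) \ge 0$ (or at least $\ge -$ something absorbed into the same bound since the regularizer is subtracted), we get $\delta_\phi \le \Phi_{\maxtag} + \Lambda_{\maxtag}\sqrt{d}\,G_{\maxtag}$. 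Plugging in $\beta = c/\mu$ and the stated $T$ makes $2\beta\delta_\phi/T = \epsilon^2/4$, i.e. the norm of the gradient-mapping step $\norm{G^\eta(\vx^{(t)})} = \tfrac{1}{\eta}\norm{\vx^{(t+1)}-\vx^{(t)}} = \beta\norm{\vx^{(t+1)}-\vx^{(t)}}$ — wait, more carefully one wants $\norm{\vx^{(t+1)}-\vx^{(t)}} \le \epsilon/2$ and then invokes \cref{prop:cone_to_nash} with $G^\eta$ and \cref{remark:appendix} (which says the bound is $2\epsilon$ when $\eta=1/\beta$, so rescaling $\epsilon$ by a constant). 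The final step is then: by \cref{prop:cone_to_nash} and \cref{remark:appendix}, $-\min_{\vx^{(t+1)}+\vdelta\in\Delta^n,\,\norm{\vdelta}\le 1}\vdelta^\top\nabla\phi(\vx^{(t+1)}) \le \epsilon$, i.e. $\vx^{(t+1)}$ (or $\vx^{(t)}$ up to re-indexing) is an $\epsilon$-approximate first-order stationary point of $\phi$ in the sense of \cref{def:approximate_stationary_points}.

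The main obstacle I expect is not conceptual but the bookkeeping: verifying that $\beta = c/\mu$ genuinely upper-bounds the smoothness constant $L_\phi = (\ell + \kappa\ell)$ of $\phi$ (one must expand $\ell = nA_{\maxtag}\Phi_{\maxtag} + \Lambda_{\maxtag}\gamma + 2\mu$, note $\kappa = \ell/\mu$, and show $\ell(1+\ell/\mu) \le 4((nA_{\maxtag})^2 + (\Lambda_{\maxtag}\gamma)^2)/\mu$ after absorbing $\Phi_{\maxtag}$ and cross terms — the factor $4$ and the squares are chosen precisely to make an AM–GM / $(a+b)^2 \le 2a^2+2b^2$ bound go through, modulo the lower-order $2\mu$ term which is dominated once $\mu$ is small), together with carefully tracking the constant factors between $\norm{\vx^{(t+1)}-\vx^{(t)}}$, $\norm{G^\eta}$, and the final stationarity parameter so that the "$32$" in $T$ and the "$\epsilon/2$" line up. I would present these as a short sequence of inequalities rather than grinding each one out in full.
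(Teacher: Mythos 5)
Your proposal follows essentially the same route as the paper's proof: smoothness of $\phi$ via \cref{lemma:smoothness_max}, the bound $\ell(1+\kappa)\le c/\mu$ via $(a+b)^2\le 2(a^2+b^2)$, the descent/telescoping argument from \cref{lemma:descent_lemma} and \cref{thm:descent_gradient}, the bound $\delta_\phi\le \Phi_{\maxtag}+\Lambda_{\maxtag}\sqrt{d}\,G_{\maxtag}$ via Cauchy--Schwarz and \cref{lemma:bounded_maximizer}, and the final conversion through \cref{prop:cone_to_nash} and \cref{remark:appendix}. The only addition is that you make explicit the (correct, and in the paper implicit) observation that the algorithm's update is exactly a projected gradient step on $\phi$ because $\nabla\phi(\vx)=\nabla_{\vx}\Tilde{\calL}(\vx,\vlambda^{\star}(\vx))$.
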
 

\begin{proof}
The first step involves demonstrating that the function $\phi(\cdot) = \max_{\vlambda} \Tilde{\calL} (\cdot, \vlambda)$ is a smooth function. Based on the result from \cref{lemma:smoothness_reg_lagrangian}, we know that $\Tilde{\calL}$ is $\ell$-smooth, with $\ell = (n A_{\maxtag} \Phi_{\maxtag} + \Lambda_{\maxtag} \gamma + 2 \mu)$. Utilizing the smoothness result in \cref{lemma:smoothness_max}, we can conclude that $\phi(\cdot)$ is also smooth, with a parameter $(\ell + \kappa \ell)$, where $\kappa = \ell / \mu$ is the conditional number. Therefore, in order to apply the descent lemma (\cref{lemma:descent_lemma}), it is necessary to establish an upper bound for the smoothness parameter.

\begin{align}
(\ell + \kappa \ell) 
= 
\left(
\ell + \frac{\ell^2}{\mu}
\right)
&\leq 
2 \frac{\ell^2}{\mu} 
\notag
\\
&=
\frac{2}{\mu} 
\left( 
(n A_{\maxtag}) + (\Lambda_{\maxtag} \gamma) 
\right)^2
\label{eq:lemma_app1}
\\
&\leq \frac{4 \left( (n A_{\maxtag})^2 + (\Lambda_{\maxtag} \gamma)^2 \right)}{\mu} 
\label{eq:lemma_app2}
\\
&= 
\frac{4 c}{\mu}
=
\beta
\end{align}
where inequality \ref{eq:lemma_app1} holds for a sufficiently small choice of $\mu$. Additionally, inequality \ref{eq:lemma_app2} follows from the well-known inequality $(a + b)^2 \leq 2(a^2 + b^2)$. Therefore, for the chosen value of $\beta = \frac{4c}{\mu}$, we can readily establish an upper bound for the smoothness parameter.\\ \\
We can now utilize the descent lemma on $\phi$, which establishes that the sequence of successive steps will form a non-increasing sequence, as defined in \cref{item:descent_gradient_1} in \cref{thm:descent_gradient}. This can be inferred from \cref{lemma:descent_lemma}, which ensures that $\phi(\vx^{(t+1)}) - \phi(\vx^{(t)}) \leq - \frac{1}{2\beta} \norm{\vx^{(t+1)} - \vx^{(t)}}_{2}^{2} \leq 0$. Moreover, \cref{item:danskin_3} in \cref{thm:descent_gradient} guarantees the existence of at least one timestep $t \in \{ 1, 2, \cdots, T \}$ where $\norm{\vx^{(t+1)} - \vx^{(t)}}_{2}^{2} \leq \frac{2 \beta \delta_{\phi}}{T}$, with $\delta_{\phi} = \phi(\vx^{(0)}) - \phi(\vx^{\star})$. \\ \\
To proceed, we have to bound the difference $\delta_{\phi}$. Since, we cannot assume that the initial point is feasible, the easiest bound $\delta_{\phi}$ can be derived as follows:
\begin{align}
\delta_{\phi} &\leq \max_{\vx \in \Delta} \phi(\vx) \nonumber \\
&\leq \Phi_{\maxtag} + \vlambda^{\top} \vg(\vx) \nonumber \\
&\leq \Phi_{\maxtag} + \norm{\vlambda} \norm{\vg(\vx)} \nonumber \\
&\leq \Phi_{\maxtag} + \Lambda_{\maxtag} \sqrt{d} G_{\maxtag}
\end{align}
where the second-to-last inequality follows from the Cauchy-Schwarz inequality, while the last inequality is derived from \cref{lemma:bounded_maximizer}. Hence, if we bound $\norm{\vx^{(t+1)} - \vx^{(t)}}_{2}^{2}$ by $\epsilon^2 / 4$, it follows
\begin{align}
\norm{\vx^{(t+1)} - \vx^{(t)}}_{2}^{2} 
&\leq 
\frac{2 \beta}{T} \big( \Phi_{\maxtag} + \Lambda_{\maxtag} \sqrt{d} G_{\maxtag} \big)
\notag
\\
&\leq 
\epsilon^2/4 
\\
\Rightarrow 
T &\geq \frac{8 \beta}{\epsilon^2} 
\big( \Phi_{\maxtag} + \Lambda_{\maxtag} \sqrt{d} G_{\maxtag} \big)
\notag
\\
&=
\frac{32}{\epsilon^2 \mu}
\big( \Phi_{\maxtag} + \Lambda_{\maxtag} \sqrt{d} G_{\maxtag} \big)
\notag
\\
&\phantom{=} \times
\left(
(n A_{\maxtag})^2 + (\Lambda_{\maxtag} \gamma)^2
\right)
\end{align}
we get that $\norm{\vx^{(t+1)} - \vx^{(t)}}_{2} \leq \epsilon/2$. Then, from \cref{prop:cone_to_nash} and \cref{remark:appendix} we conclude that $\vx^{(t+1)}$ is an $\epsilon$ approximate first order stationary point of $\phi$.

\begin{equation}
- \min\limits_{
(\vx + \vdelta) \in \Delta, 
\norm{\vdelta}^{2} \leq 1
}
\vdelta^{\top} \nabla_{\vx} \phi(\vx^{(t+1)}) 
\leq 
\epsilon
\end{equation}

\end{proof}

\subsection{Approximate Feasibility Approximate Stationarity}
In this subsection, we present the final part of the proof. As we pointed out in the main text, \cref{lemma:smoothness_max}, which is derived from an application of \cref{thm:danskin_theorem}, states that $\nabla \phi(\cdot) = \nabla_{\vx} \Tilde{\calL} (\cdot, \vlambda^{\star}(\cdot))$, where $\vlambda^{\star}(\cdot)$ represents the unique maximizer for a given $\vx$, as indicated in Lemma \ref{lemma:unique_maximizer}. This in turn implies that the point $\vxhat$ obtained from the algorithm is an approximate first order stationary point of the regularized Lagrangian. \\ \\
Utilizing \cref{lemma:bounded_optimal_multipliers} we can also show that for the case of an approximate  first order stationary point $\vxhat$, the Lagrange multipliers $\vlambda^{\star}(\vxhat) = \vlambdahat$ of $\Tilde{\calL}$ can be bounded by a term that is independent of $\mu = O(\epsilon)$. This result guarantees that $\vlambdahat$ must reside in the interior of its domain.\\ \\
Finally, we need to argue about the optimality of $(\vxhat, \vlambdahat)$ with respect to $\Tilde{\calL}$ and as a result with respect to $\Phi$ and then $C_i$, i.e. the cost function of each player.

\begin{lemma}[Proof of \cref{lemma:approximate_optimality}]
Given an $\epsilon$ approximate stationary point $(\vxhat, \vlambdahat)$ of $\Tilde{\calL}$, we get that $\vxhat$ is $O(\epsilon)$ approximate feasible approximate Nash equilibrium, as in defined \cref{def:approximate_feasible_approximate_nash}.
\end{lemma}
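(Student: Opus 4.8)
The plan is to translate the stationarity condition on $\Tilde{\calL}$ into three separate guarantees—approximate feasibility, approximate optimality of $\Phi$, and hence approximate Nash for each $C_i$—and then glue them together. First I would invoke \cref{lemma:bounded_optimal_multipliers} to fix the central fact: at the returned point $\vxhat$, the maximizer $\vlambdahat = \argmax_{\vlambda}\calL(\vxhat,\vlambda)$ satisfies a bound $\norm{\vlambdahat}_\infty \le \frac{2(\Phi_{\maxtag}-\Phi_{\mintag})}{\xi}$ that is independent of $\mu = O(\epsilon)$. Since by \cref{lemma:bounded_maximizer} the regularized maximizer satisfies $\vlambdahat = \vg(\vxhat)/(2\mu)$, rearranging gives $\vg_i(\vxhat_i) = 2\mu\vlambdahat_i$, and feeding in the $\mu$-independent bound on $\vlambdahat$ yields $g_{i,m}(\vxhat_i) \le 2\mu \cdot \frac{2(\Phi_{\maxtag}-\Phi_{\mintag})}{\xi_{i,m}} = O(\mu) = O(\epsilon)$ for every $i,m$. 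This establishes that $\vxhat$ lies in the enlarged feasibility set $\{\vx_i : \vg_i(\vx_i) \le \epsilon\}$, i.e.\ $\vxhat$ is $O(\epsilon)$-approximately feasible.

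Next I would extract the optimality statement. By \cref{lemma:smoothness_max}, $\nabla\phi(\vxhat) = \nabla_{\vx}\Tilde{\calL}(\vxhat,\vlambdahat)$, and since the regularizer $-\mu\norm{\vlambda}^2$ does not depend on $\vx$, this equals $\nabla_{\vx}\calL(\vxhat,\vlambdahat)$. So the $\epsilon$-approximate stationarity of $\phi$ gives $-\min_{\vxhat+\vdelta\in\Delta^n,\ \norm{\vdelta}\le 1}\vdelta^\top\nabla_{\vx}\calL(\vxhat,\vlambdahat) \le \epsilon$. Now I exploit the per-player structure: because $\calL(\vx,\vlambdahat) = \Phi(\vx) + \sum_i \vlambdahat_i^\top\vg_i(\vx_i)$ splits additively across players and each player's block is convex in $\vx_i$ (the potential is convex per player by assumption, and each $g_{i,m}$ is convex), the function $\vx_i \mapsto \calL_i(\vx_i,\vlambdahat_i;\vxhat_{-i},\vlambdahat_{-i})$ is convex on $\Delta(\calA_i)$. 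For a convex function on a convex set, an $\epsilon$-approximate first-order stationary point is an $O(\epsilon)$-approximate global minimizer; I would state this as a short lemma (or cite the standard first-order-optimality-gap bound, using that $\Delta(\calA_i)$ has bounded diameter). Hence for each player $i$ and every $\vx_i' \in \Delta(\calA_i)$,
\[
\calL_i(\vxhat_i,\vlambdahat_i;\vxhat_{-i},\vlambdahat_{-i}) \le \calL_i(\vx_i',\vlambdahat_i;\vxhat_{-i},\vlambdahat_{-i}) + O(\epsilon).
\]

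Finally I would combine these. Expanding the last display, $\Phi(\vxhat) + \vlambdahat_i^\top\vg_i(\vxhat_i) \le \Phi(\vx_i',\vxhat_{-i}) + \vlambdahat_i^\top\vg_i(\vx_i') + O(\epsilon)$. Restrict the deviation to $\vx_i'$ with $\vg_i(\vx_i') \le \epsilon$; then $\vlambdahat_i^\top\vg_i(\vx_i') \le \norm{\vlambdahat_i}_1 \cdot \epsilon = O(\epsilon)$ by the $\mu$-independent bound on $\vlambdahat_i$, while $\vlambdahat_i^\top\vg_i(\vxhat_i) \ge 0$ since $\vlambdahat_i \succeq \bm 0$ and—from the feasibility step—$\vg_i(\vxhat_i)$ is at worst $O(\epsilon)$ negative-or-small, so $\vlambdahat_i^\top\vg_i(\vxhat_i) \ge -O(\epsilon)$. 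Therefore $\Phi(\vxhat) \le \Phi(\vx_i',\vxhat_{-i}) + O(\epsilon)$. Invoking the potential-game identity \cref{eq:potential_function}, $C_i(\vxhat) - C_i(\vx_i',\vxhat_{-i}) = \Phi(\vxhat) - \Phi(\vx_i',\vxhat_{-i}) \le O(\epsilon)$, i.e.\ $C_i(\vx_i',\vxhat_{-i}) \ge C_i(\vxhat) - O(\epsilon)$ for every approximately feasible deviation $\vx_i'$, which is exactly \cref{def:approximate_feasible_approximate_nash}. The main obstacle is the middle step: carefully arguing that $\epsilon$-approximate first-order stationarity of the (per-player convex) Lagrangian upgrades to an $O(\epsilon)$ global optimality gap, and tracking that all the $O(\epsilon)$ slacks—from stationarity, from the $\vlambdahat^\top\vg$ cross terms, and from the $O(\mu)$ feasibility violation—remain $O(\epsilon)$ simultaneously under the choice $\mu = O(\epsilon)$.
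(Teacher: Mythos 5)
Your proposal is correct in substance, and while the feasibility half coincides with the paper's argument (both read off $\vg_i(\vxhat_i) = 2\mu\vlambdahat_i$ from the exact maximization over $\vlambda$ and then import the $\mu$-independent bound of \cref{lemma:bounded_optimal_multipliers}), the optimality half takes a genuinely different route. The paper works at the level of \emph{values}: it treats $(\vxhat_i,\vlambdahat_i)$ as an approximate equilibrium of the per-player zero-sum Lagrangian game, asserts that $\Tilde{\calL}_i(\vxhat_i,\vlambdahat_i;\cdot)$ is within $\epsilon$ of the min--max value, invokes strong duality (\cref{lemma:strong_duality_original}, \cref{remark:1}) to identify that value with $\min_{\vx_i\in\calS_i}\Phi(\vx_i,\vxhat_{-i})$, and then sandwiches $\Phi(\vxhat)$ against $\Tilde{\calL}_i$ by controlling the cross term $\vlambdahat_i^{\top}\vg_i(\vxhat_i)$ and the regularizer. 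You instead work at the level of \emph{gradients}: you use $\nabla\phi(\vxhat)=\nabla_{\vx}\calL(\vxhat,\vlambdahat)$, restrict the deviation $\vdelta$ to player $i$'s block, and upgrade $\epsilon$-first-order stationarity to an $O(\epsilon)$ global optimality gap via per-player convexity of $\vx_i\mapsto\calL_i(\vx_i,\vlambdahat_i;\vxhat_{-i},\vlambdahat_{-i})$ and the bounded diameter of $\Delta(\calA_i)$, then handle approximately feasible deviations directly through $\vlambdahat_i^{\top}\vg_i(\vx_i')\le\epsilon\norm{\vlambdahat_i}_1$. Your route is arguably tighter where the paper is weakest: the paper's step \cref{eq:main_lemma_app_1} (stationarity implies value within $\epsilon$ of the min--max value) is asserted rather than derived, and the convexity argument you spell out is essentially the missing justification; you should indeed state it as the short auxiliary lemma you flag. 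Two small points to tighten: (i) you use the single symbol $\vlambdahat$ for both $\argmax_{\vlambda}\calL(\vxhat,\vlambda)$ (to which \cref{lemma:bounded_optimal_multipliers} applies) and $\argmax_{\vlambda}\Tilde{\calL}(\vxhat,\vlambda)$ (which satisfies $\vlambdahat=\vg(\vxhat)/(2\mu)$); the paper inserts an explicit remark that the regularized maximizer is dominated in norm by the unregularized one, and you need the same sentence. (ii) The sign discussion of $\vlambdahat_i^{\top}\vg_i(\vxhat_i)$ is muddier than necessary: from $\vg_i(\vxhat_i)=2\mu\vlambdahat_i$ this term equals $2\mu\norm{\vlambdahat_i}^2\ge 0$ exactly.
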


\begin{proof}
Consider $(\vxhat, \vlambdahat)$ as an $\epsilon$-approximate stationary point of $\Tilde{\mathcal{L}}$. According to \cref{lemma:unique_maximizer}, it follows that the gradient of $\Tilde{\mathcal{L}}$ with respect to $\vlambda$ must vanish, leading to the first important bound.
\begin{align}
\nabla_{\vlambda} 
\left(
\Phi(\vxhat) + \vlambdahat^{\top} \vg(\vxhat) - \mu \norm{\vlambdahat}^2
\right)
&= 
\bm{0}
\\
\Rightarrow
\nabla_{\vlambda_i} 
\left(
\Phi(\vxhat) + \vlambdahat^{\top} \vg(\vx) - \mu \norm{\vlambdahat}^2
\right)
&=
\bm{0}
&
\text{ for any $i \in \calN$}
\label{eq:main_thm_1}
\end{align}
Let $i \in \calN$ denote an agent. By utilizing \cref{eq:main_thm_1} and considering the fact that $\vlambdahat^\top \vg(\vxhat) = \sum_{i \in \calN} \vlambdahat_i^\top \vg_i(\vxhat_i)$, we can conclude:
\begin{equation}
\vg_i(\vxhat_i) - 2 \mu \vlambdahat_i = 0
\Rightarrow
\norm{\vg_i(\vxhat_i)} \leq 2 \mu \norm{\vlambdahat_i}
\end{equation}
We note here that \cref{lemma:bounded_optimal_multipliers} refers to the Lagrange multiplier with respect to the original/non regularized Lagrangian function $\calL$, meaning $\vlambda' = \max_{\vlambda} \calL(\vxhat, \vlambda)$, and \textit{not} with respect to the regularized version $\Tilde{\calL}$. However, the regularization term in  $\Tilde{\calL}$ can only restrict the value of the multiplier and so it guarantees that the norm of the optimal ones $\vlambda^{\star}(\vxhat) = \vlambdahat$ w.r.t to $\Tilde{\calL}$ is necessarily bounded by the norm of the $\vlambda'$.
As a result, we conclude that $\norm{\vg_i(\vxhat_i)}$ is bounded by $O(\mu) = O(\epsilon)$.  This implies that $\vxhat_i$ is an approximate feasible point.\\ \\
To proceed, we need to show that $\vxhat$ constitutes an approximate Nash equilibrium, as it is defined in \cref{def:approximate_nash_eq}. Again, we consider the per-player regularized Lagrangian $\Tilde{\calL}_i$.
\begin{align}
\Tilde{\calL}_i 
(\vxhat_i, \vlambdahat_i ; \vxhat_{-i}, \vlambdahat_{-i})
&\leq 
\min\limits_{\vx_i \in \Delta(\calA_i)}
\max\limits_{\vlambda}
\Tilde{\calL}_i
(\vx_i, \vlambda_i ; \vxhat_{-i}, \vlambdahat_{-i})
\notag
\\
&\phantom{\leq} \text{where } \norm{\vlambda} \leq \Lambda_{\maxtag}
+ \epsilon
\label{eq:main_lemma_app_1}
\\
&\leq
\min\limits_{\vx_i \in \Delta(\calA_i)}
\Phi(\vx_i, \vxhat_i) + \epsilon
\label{eq:main_lemma_app_2}
\end{align}
The similar argument we presented in the proof of \cref{lemma:bounded_maximizer} applies here. Specifically, since $(\vxhat_i, \vlambdahat_i)$ is an approximate stationary point, the value of $\Tilde{\calL}_i$ at that point cannot differ more than $\epsilon$ to the $\min\max$ value, and so \cref{eq:lemma_app1} holds. Regarding \cref{eq:lemma_app2}, we use strong duality and specifically \cref{eq:strong_duality} shown in \cref{remark:1}. Now, as $\{ \vx_i \in \Delta(\mathcal{A}_i) \}$ consists all the approximate $O(\epsilon)$ feasible points $\{ \vx_i \, | \, \vg_i(\vx_i) \leq O(\epsilon)\}$, the minimum value of $\Phi$ over the latter set is at least as large as the minimum value of $\Phi$ over the former set.
\footnote{In other words, we claim that if $X \subseteq Y$ then $\min (X) \geq \min (Y)$.}
\begin{equation}
\Tilde{\calL}_i 
(\vxhat_i, \vlambdahat_i ; \vxhat_{-i}, \vlambdahat_{-i})
\leq
\min\limits_{ \{  \vx_i \, | \, \vg_i(\vx_i) \leq O(\epsilon) \} }
\Phi(\vx_i, \vxhat_i)
+ \epsilon
\end{equation}
For bounding the $\Tilde{\calL}_i$ in the opposite direction, we follow a different way; we utilize \cref{lemma:bounded_optimal_multipliers} so that we can bound $\vlambdahat_i$ with a term that is independent of $\mu = O(\epsilon)$.
\begin{align}
\Tilde{\calL}_i 
(\vxhat_i, \vlambdahat_i ; \vxhat_{-i}, \vlambdahat_{-i})
&=
\Phi(\vxhat_i, \vxhat_{-i})
+
\vlambdahat_i^{\top} \vg_i(\vxhat_i)
- \mu \norm{\vlambdahat_i}
\\
&\geq 
\Phi(\vxhat_i, \vxhat_{-i})
- \mu \norm{\vlambdahat_i}
\label{eq:lemma_app3} 
\\
&\geq 
\Phi(\vxhat_i, \vxhat_{-i})
- O(\epsilon)
\label{eq:lemma_app4}\\
&\leq
\min\limits_{ \{  \vx_i \, | \, \vg_i(\vx_i) \leq O(\epsilon) \} }
\Phi(\vx_i, \vxhat_i)
-
O(\epsilon)
\end{align}
\cref{eq:lemma_app3} holds because $\vlambdahat_i^{\top} \vg_i(\vxhat_i)$ is always non-negative. This is true because $\vlambdahat_i \succeq \bm{0}$ and it is only zero only if $\vxhat_i$ is feasible, meaning $\vg_i(\vxhat_i) \leq 0$. \cref{eq:lemma_app4} follows from \cref{lemma:bounded_optimal_multipliers} and also because $\mu = O(\epsilon)$. Therefore, we have demonstrated that for any unilateral $O(\epsilon)$ approximate feasible deviation, the value of the regularized Lagrangian $\Tilde{\mathcal{L}}_i$ at $(\vxhat_i, \vlambdahat_i)$ remains within $O(\epsilon)$ of the minimum value of $\Phi$ over the set of $O(\epsilon)$ approximate feasible points. The next step is to examine the value of $\Phi$ at $(\vxhat_i, \vlambdahat_i)$, which can be done straightforwardly as follows:
\begin{align}
\Tilde{\calL}_i
(\vxhat_i, \vlambdahat_i ; \vxhat_{-i}, \vlambdahat_{-i})
-
\Phi(\vxhat_i, \vxhat_{-i})
&=
\vlambdahat_i^{\top} \vg_i(\vxhat_i)
- \mu \norm{\vlambdahat_i}
\\
%\Rightarrow
\quad
\big|
\Tilde{\calL}_i
(\vxhat_i, \vlambdahat_i ; \vxhat_{-i}, \vlambdahat_{-i})
-
\Phi(\vxhat_i, \vxhat_{-i})
\big|
&= 
\big|
\vlambdahat_i^{\top} \vg_i(\vxhat_i)
- \mu \norm{\vlambdahat_i}
\big|
\\
&\leq
\big|
\vlambdahat_i^{\top} \vg_i(\vxhat_i)
\big|
+
\big|
\mu \norm{\vlambdahat_i}
\big|
\\
&
\leq O(\epsilon)
\end{align}
Finally, we conclude that $\Phi(\vxhat_i, \vxhat_{-i})$ remains withing $O(\epsilon)$ of the minimum value of $\Phi$ over the set of $O(\epsilon)$ approximate feasible points, and so $\vxhat$ is an $O(\epsilon)$ approximate feasible approximate Nash equilibrium.
 
\end{proof}

\begin{theorem}[Proof of \cref{thm:main}] 
Assuming that all agents perform Algorithm \igd, after $T = O(1/\epsilon^6)$ steps, there exists an iterate $\vx^{(t)}$ for $t \in [T]$, so that $\vx^{(t)}$ is an $O(\epsilon)$ approximate feasible approximate Nash equilibrium.
\end{theorem}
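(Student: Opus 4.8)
The plan is to assemble the pieces already in hand into a short three-step chain.

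\textbf{Step 1 (iteration complexity).} Apply \cref{thm:stationary_reg_lagrangian} with step size $\eta = 1/\beta$, $\beta = c/\mu$, $c = 4((nA_{\maxtag})^2 + (\Lambda_{\maxtag}\gamma)^2)$: running Algorithm \igd for $T = \frac{32}{\epsilon^2\mu}\big(\Phi_{\maxtag} + \Lambda_{\maxtag}\sqrt{d}G_{\maxtag}\big)\big((nA_{\maxtag})^2 + (\Lambda_{\maxtag}\gamma)^2\big)$ iterations produces, via the telescoping/pigeonhole argument of \cref{thm:descent_gradient} applied to the descent inequality $\phi(\vx^{(t+1)}) - \phi(\vx^{(t)}) \le -\tfrac{1}{2\beta}\norm{\vx^{(t+1)}-\vx^{(t)}}^2$ (itself from \cref{lemma:descent_lemma}), an index $t\in[T]$ with $\norm{\vx^{(t+1)}-\vx^{(t)}}\le\epsilon/2$, hence (via \cref{prop:cone_to_nash} and \cref{remark:appendix}) an $\epsilon$-approximate first-order stationary point $\vx^{(t)}$ of $\phi(\cdot)=\max_{\vlambda}\Tilde{\calL}(\cdot,\vlambda)$. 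Since $\Lambda_{\maxtag}=\Theta(1/\mu)$ by \cref{lemma:bounded_maximizer}, choosing $\mu=O(\epsilon)$ turns $T$ into $O(1/\epsilon^6)$: the $\tfrac{1}{\epsilon^2\mu}$ factor contributes $1/\epsilon^3$, the bracket $\Phi_{\maxtag}+\Lambda_{\maxtag}\sqrt{d}G_{\maxtag}$ contributes $1/\epsilon$, and $(nA_{\maxtag})^2+(\Lambda_{\maxtag}\gamma)^2$ contributes $1/\epsilon^2$.

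\textbf{Step 2 (from $\phi$ to the Lagrangian).} By \cref{lemma:smoothness_max} (a consequence of Danskin's theorem), $\nabla\phi(\vx)=\nabla_{\vx}\Tilde{\calL}(\vx,\vlambda^{\star}(\vx))$ with $\vlambda^{\star}(\vx)$ the unique maximizer guaranteed by \cref{lemma:unique_maximizer}. Setting $(\vxhat,\vlambdahat):=(\vx^{(t)},\vlambda^{\star}(\vx^{(t)}))$, the gradient condition on $\phi$ transfers verbatim to $\Tilde{\calL}$, and because the regularizer $-\mu\norm{\vlambda}^2$ does not involve $\vx$, it also holds for $\calL$; thus $(\vxhat,\vlambdahat)$ is an $\epsilon$-approximate stationary point of $\Tilde{\calL}$ in the sense required by \cref{lemma:approximate_optimality}.

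\textbf{Step 3 (approximate feasibility and the Nash property).} Invoke \cref{lemma:approximate_optimality} on $(\vxhat,\vlambdahat)$. Internally this relies on \cref{lemma:bounded_optimal_multipliers}, whose role is that the crude a priori bound $\norm{\vlambdahat}\le\Lambda_{\maxtag}=\Theta(1/\mu)$ would pollute the final error with a $1/\mu$ factor, whereas Slater's condition (\cref{assumption:slater}) supplies instead the $\mu$-free componentwise bound $\lambda_{i,m}\le 2(\Phi_{\maxtag}-\Phi_{\mintag})/\xi_{i,m}$. Stationarity of $\Tilde{\calL}$ in $\vlambda$ gives $\vg_i(\vxhat_i)=2\mu\vlambdahat_i$, whence $\norm{\vg_i(\vxhat_i)}\le 2\mu\norm{\vlambdahat_i}=O(\mu)=O(\epsilon)$, i.e. $O(\epsilon)$-feasibility. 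For the equilibrium condition, strong duality per player (\cref{lemma:strong_duality_original} and \cref{remark:1}) together with the approximate-minmax slack of $(\vxhat_i,\vlambdahat_i)$ and the $\mu$-free multiplier bound sandwich $\Phi(\vxhat_i,\vxhat_{-i})$ within $O(\epsilon)$ of $\min\{\,\Phi(\vx_i,\vxhat_{-i}) : \vg_i(\vx_i)\le O(\epsilon)\,\}$; translating $\Phi$ back into each player's cost $C_i$ through \cref{def:potential_function} yields exactly \cref{def:approximate_feasible_approximate_nash}, finishing the proof.

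The main obstacle is the seam between Steps 2 and 3: one must verify that the several coexisting notions of ``$\epsilon$-approximate stationarity'' genuinely coincide — the gradient-mapping/normal-cone statement on the simplex produced in Step 1 versus the min-max $\epsilon$-slack used inside \cref{lemma:bounded_optimal_multipliers} and \cref{lemma:approximate_optimality} — and, crucially, that the multiplier bound fed into the final estimate is the Slater-based one (independent of $\mu$) rather than the $\Theta(1/\mu)$ one, so that after substituting $\mu=O(\epsilon)$ the error is genuinely $O(\epsilon)$ and not $O(\epsilon)\cdot\Theta(1/\mu)=\Theta(1)$. The remainder — propagating $\mu=O(\epsilon)$ through the iteration count and collecting the $O(\epsilon)$ terms — is routine bookkeeping.
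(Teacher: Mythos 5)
Your proposal is correct and follows essentially the same route as the paper: Lemma~\ref{thm:stationary_reg_lagrangian} for the $\epsilon$-stationary point of $\phi$ after $O(1/\epsilon^6)$ iterations, the Danskin/unique-maximizer argument to transfer stationarity to $(\vxhat,\vlambdahat)$ for $\Tilde{\calL}$, and Lemma~\ref{lemma:approximate_optimality} (resting on the Slater-based, $\mu$-independent multiplier bound of Lemma~\ref{lemma:bounded_optimal_multipliers}) to conclude. Your explicit accounting of how $\mu=O(\epsilon)$ and $\Lambda_{\maxtag}=\Theta(1/\mu)$ combine to give $O(1/\epsilon^6)$, and your flagging of the mismatch between the gradient-mapping and min-max notions of approximate stationarity, are reasonable elaborations of details the paper leaves implicit, not a different argument.
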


\begin{proof}
The proof of \cref{thm:main} involves a straightforward application of the previously established statements. More precisely, \cref{thm:stationary_reg_lagrangian} ensures that if all agents follow Algorithm \igd, after $T = O(1/\epsilon^6)$ iterations, there exists an iterate $\vx^{(t)}$ for $t \in [T]$ that serves as an $\epsilon$ approximate first-order stationary point $\vxhat$ of $\phi(\cdot)$. Then, utilizing \cref{lemma:bounded_optimal_multipliers}, we can extend this to a pair $(\vxhat, \vlambdahat)$ that forms an $\epsilon$ approximate stationary point of $\Tilde{\mathcal{L}}$. Finally, by virtue of \cref{lemma:approximate_optimality}, it follows that $\vxhat$ constitutes an $O(\epsilon)$ approximate feasible approximate Nash equilibrium of $\Phi$.
\end{proof}
\section{Standard Optimization Background}
\label{section:appendix_optimization}

In this section we present some standard results in the literature of nonconvex smooth optimization. More specifically, we consider the following problem, where $C$ is a nonempty closed convex set.

\begin{equation} 
\begin{array}{ccc}
  \text{minimize}   & f(x) & \\
  \text{subject to} & x \in C \\ 
\end{array}
\label{eq:basic}
\end{equation}

\begin{assumption} \label{assumption:basic}
Let $f : \mathbb{R}^d \rightarrow (- \infty, \infty)$ is proper closed, $dom(f)$ is convex and $f$ is $\beta$ smooth over $\int(\dom(f))$.
\end{assumption}

\begin{lemma}[Descent Lemma] \label{lemma:descent_lemma}
Let $f$ be $\beta$-smooth function with convex domain $\mathcal{X}$. Let $x \in \mathcal{X}$, $x^{+} = P_{\mathcal{X}} \left( x - \frac{1}{\beta} \nabla f(x) \right)$
and $g_{\mathcal{X}} = \beta (x - x^{+})$. Then the following holds true:
\begin{equation}
f(x^{+}) - f(x) \leq - \frac{1}{2 \beta} \norm{g_{\mathcal{X} (x)}}^{2}_{2}.
\end{equation} 
\end{lemma}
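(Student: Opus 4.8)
The plan is to obtain the inequality by combining the quadratic upper bound guaranteed by $\beta$-smoothness with the variational inequality that characterises the Euclidean projection defining $x^{+}$. This is the standard gradient-step argument, so no new idea is required; one only needs care about the regime in which smoothness is invoked.

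First I would apply $\beta$-smoothness of $f$ (\cref{assumption:basic}) to the pair $x,\, x^{+}$, both of which lie in the convex set $\mathcal{X}\subseteq\dom(f)$ since $x^{+}=P_{\mathcal{X}}\big(x-\tfrac1\beta\nabla f(x)\big)$ is a projection onto $\mathcal{X}$. This gives
\[
f(x^{+})\;\le\; f(x)+\langle \nabla f(x),\, x^{+}-x\rangle+\frac{\beta}{2}\norm{x^{+}-x}^{2}.
\]
Next I would invoke the first-order (obtuse-angle) optimality condition for the projection onto the nonempty closed convex set $\mathcal{X}$, namely $\langle\, x-\tfrac1\beta\nabla f(x)-x^{+},\ y-x^{+}\,\rangle\le 0$ for every $y\in\mathcal{X}$. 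Choosing $y=x$ and rearranging yields $\langle \nabla f(x),\, x^{+}-x\rangle\le -\beta\,\norm{x-x^{+}}^{2}$.

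Substituting this bound into the smoothness inequality makes the linear and quadratic terms combine into $f(x^{+})-f(x)\le -\tfrac{\beta}{2}\norm{x-x^{+}}^{2}$. Finally, since $g_{\mathcal{X}}=\beta\,(x-x^{+})$, we have $\norm{x-x^{+}}^{2}=\beta^{-2}\norm{g_{\mathcal{X}}}^{2}$, and inserting this identity gives precisely the claimed bound $f(x^{+})-f(x)\le -\tfrac{1}{2\beta}\norm{g_{\mathcal{X}}}^{2}$.

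I do not anticipate a genuine obstacle here. The only subtlety worth flagging is that \cref{assumption:basic} posits $\beta$-smoothness only on the interior of $\dom(f)$, so strictly one should either assume $\mathcal{X}$ lies in that interior or pass to the limit by continuity; and one should note that $P_{\mathcal{X}}$ is single-valued because $\mathcal{X}$ is nonempty, closed and convex, so that $x^{+}$, and hence $g_{\mathcal{X}}$, is well defined.
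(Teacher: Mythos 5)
Your proof is correct: the combination of the $\beta$-smoothness quadratic upper bound with the projection variational inequality $\langle x-\tfrac1\beta\nabla f(x)-x^{+},\,y-x^{+}\rangle\le 0$ at $y=x$ yields exactly $f(x^{+})-f(x)\le-\tfrac{\beta}{2}\norm{x-x^{+}}^{2}=-\tfrac{1}{2\beta}\norm{g_{\mathcal{X}}}^{2}$. The paper does not prove this lemma itself but cites it as a standard result (it is the sufficient-decrease lemma from \cite{beck2017first}), and your argument is precisely the standard textbook proof being invoked, so there is nothing further to reconcile.
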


\begin{definition}
\label{def:gradient_mapping}
We define the gradient mapping $G^{\eta}(x)$ as 
\begin{equation}
G^{\eta}(x) = \frac{1}{\eta} 
\left(
x - P_C (x - \eta \nabla f(x))
\right)
\end{equation}

where $P_C$ is the projection onto $C$. We also define the update rule for the projected gradient is $\vx^{+} = \vx - \eta G^{\eta}(\vx)$.
\end{definition}

\begin{theorem}[Theorem 10.15 \cite{beck2017first}] \label{thm:descent_gradient}
Suppose that \cref{assumption:basic} holds and let $\{ x_t \}_{t \geq 0}$ be the sequence generated by the gradient descent algorithm for solving the problem \eqref{eq:basic} with stepsize $\eta = 1/\beta$. Then,

\begin{enumerate}[label=\arabic*.]
\item \label{item:descent_gradient_1} 
The sequence $\{ f(x_t) \}_{t \geq 0}$ is non-increasing.
\item \label{item:descent_gradient_2}
$G^{\eta}(x_t) \rightarrow 0$ as $t \rightarrow \infty$.
\item \label{item:descent_gradient_3}
$\min\limits_{t=0,1,\cdots,T-1} \norm{G^{\eta}(x_t)} \leq \frac{\sqrt{2 \beta \left( f(x_0) - f(x^{\star}) \right)}}{\sqrt{T}}$
\end{enumerate}
\end{theorem}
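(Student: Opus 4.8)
The plan is to reduce all three claims to a single per-step \emph{descent inequality} and then argue by telescoping. First I would record the key identification: because the stepsize is $\eta = 1/\beta$, the gradient mapping of \cref{def:gradient_mapping} evaluated at an iterate $x_t$ satisfies $G^{\eta}(x_t) = \beta\left(x_t - P_C(x_t - \tfrac{1}{\beta}\nabla f(x_t))\right) = \beta(x_t - x_{t+1})$, which is exactly the quantity $g_{\mathcal{X}}(x_t)$ appearing in \cref{lemma:descent_lemma}. Applying the Descent Lemma at each iterate therefore yields
\begin{equation}
f(x_{t+1}) - f(x_t) \leq -\frac{1}{2\beta}\norm{G^{\eta}(x_t)}_2^2 \leq 0
\label{eq:plan_descent}
\end{equation}
for every $t \geq 0$, where the second inequality uses nonnegativity of the squared norm.

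Claim \ref{item:descent_gradient_1} is then immediate: \cref{eq:plan_descent} shows $f(x_{t+1}) \leq f(x_t)$ for all $t$, so $\{f(x_t)\}_{t\geq 0}$ is non-increasing. For Claim \ref{item:descent_gradient_3} I would sum \cref{eq:plan_descent} over $t = 0, 1, \ldots, T-1$ and telescope the differences $f(x_{t+1}) - f(x_t)$ to $f(x_T) - f(x_0)$, which after rearranging gives
\begin{equation}
\frac{1}{2\beta}\sum_{t=0}^{T-1}\norm{G^{\eta}(x_t)}_2^2 \leq f(x_0) - f(x_T) \leq f(x_0) - f(x^{\star}),
\label{eq:plan_sum}
\end{equation}
where the last step uses that $x^{\star}$ is a minimizer, so $f(x_T) \geq f(x^{\star})$. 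Since the minimum of $T$ nonnegative terms is at most their average, \cref{eq:plan_sum} yields $\min_{t \leq T-1}\norm{G^{\eta}(x_t)}_2^2 \leq \frac{2\beta(f(x_0)-f(x^{\star}))}{T}$, and taking square roots produces the stated rate.

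Finally, Claim \ref{item:descent_gradient_2} follows from the same accumulated bound. Letting $T \to \infty$ in \cref{eq:plan_sum}, the series $\sum_{t=0}^{\infty}\norm{G^{\eta}(x_t)}_2^2$ has partial sums bounded by $2\beta(f(x_0)-f(x^{\star})) < \infty$; being a series of nonnegative terms with bounded partial sums, it converges, so its general term must vanish, i.e. $\norm{G^{\eta}(x_t)}_2 \to 0$. I do not expect a genuine obstacle here, since this is the classical analysis of projected gradient descent for nonconvex smooth objectives; the only points requiring care are the identification $G^{\eta}(x_t) = g_{\mathcal{X}}(x_t)$ that licenses the use of \cref{lemma:descent_lemma}, and the fact that, under \cref{assumption:basic}, each iterate remains in $\dom(f)$ so that the Descent Lemma applies at every step.
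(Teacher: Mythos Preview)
Your proof is correct and is exactly the standard argument for this result. Note, however, that the paper does not supply its own proof of this theorem: it is quoted verbatim as Theorem~10.15 from \cite{beck2017first} in the background appendix, without proof. Your derivation---identifying $G^{\eta}(x_t)$ with $g_{\mathcal{X}}(x_t)$ when $\eta=1/\beta$, invoking \cref{lemma:descent_lemma} for the per-step descent inequality, telescoping, and using the min--average bound---is precisely the textbook route and matches what one finds in Beck's book.
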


\begin{theorem}[Lemma 3 \cite{ghadimi2016accelerated}]
\label{thm:gradient_in_cone}
Suppose that \cref{assumption:basic} holds. Let $x^{+} = x - \eta G^{\eta}(x)$ and $\norm{G^{\eta}(x)}_2 \leq \epsilon$. Then,

\begin{equation}
- \nabla f(x^{+}) \in N_C(x^{+}) + \epsilon (\eta \beta + 1) B_2
\end{equation}

where $B_2$ is the unit ball $\ell_2$ ball, and $N_C$ is the normal cone of the set $C$.
\end{theorem}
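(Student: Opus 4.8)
The plan is to reduce everything to the variational characterization of the Euclidean projection and then absorb two perturbation terms into the $\epsilon(\eta\beta+1)$-ball. First I would unfold the definition of the gradient mapping: since $x^{+} = x - \eta G^{\eta}(x) = P_C\!\left(x - \eta\nabla f(x)\right)$, the point $x^{+}$ is exactly the Euclidean projection of $z \coloneqq x - \eta \nabla f(x)$ onto the closed convex set $C$. The defining optimality condition of this projection is $\langle z - x^{+}, y - x^{+}\rangle \le 0$ for every $y \in C$, which is precisely the statement $z - x^{+} \in N_C(x^{+})$, using the description of the normal cone recalled in the proof of \cref{prop:cone_to_nash}.

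Next I would rewrite this membership in terms of $\nabla f$ and $G^{\eta}$. Substituting $z = x - \eta\nabla f(x)$ and $x - x^{+} = \eta G^{\eta}(x)$ gives $\eta\big(G^{\eta}(x) - \nabla f(x)\big) \in N_C(x^{+})$; since $N_C(x^{+})$ is a cone, dividing by $\eta > 0$ preserves membership, so $G^{\eta}(x) - \nabla f(x) \in N_C(x^{+})$, equivalently $-\nabla f(x) \in N_C(x^{+}) - G^{\eta}(x)$.

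The final step transfers the statement from $-\nabla f(x)$ to $-\nabla f(x^{+})$ and collects the error. Writing $-\nabla f(x^{+}) = -\nabla f(x) + \big(\nabla f(x) - \nabla f(x^{+})\big)$, the first summand lies in $N_C(x^{+}) - G^{\eta}(x)$, while the second is a bounded perturbation. I would bound the two error contributions separately: by hypothesis $\norm{G^{\eta}(x)} \le \epsilon$, and by $\beta$-smoothness together with $\norm{x - x^{+}} = \eta \norm{G^{\eta}(x)}$ we obtain $\norm{\nabla f(x) - \nabla f(x^{+})} \le \beta \norm{x - x^{+}} = \eta\beta\norm{G^{\eta}(x)} \le \eta\beta\epsilon$. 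Adding the two bounds by the triangle inequality shows the combined perturbation has norm at most $\epsilon + \eta\beta\epsilon = \epsilon(\eta\beta + 1)$, hence lies in $\epsilon(\eta\beta+1)B_2$. Therefore $-\nabla f(x^{+}) \in N_C(x^{+}) + \epsilon(\eta\beta+1)B_2$, as claimed.

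The individual steps are routine; the only points requiring care are (i) invoking the cone property of $N_C(x^{+})$ to clear the factor $\eta$ after the projection inequality, and (ii) the bookkeeping of the Minkowski sum, so that the term $-G^{\eta}(x)$ and the gradient-difference term are merged into a \emph{single} ball of radius $\epsilon(\eta\beta+1)$ rather than double-counted or left as separate offsets. I expect (ii) — combining one normal-cone element with two distinct $O(\epsilon)$ perturbations into the stated $\epsilon(\eta\beta+1)$-ball — to be the main, if modest, obstacle.
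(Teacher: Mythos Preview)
Your argument is correct and is precisely the standard derivation of this lemma: projection optimality gives a normal-cone inclusion at $x^{+}$, the cone property lets you clear the factor $\eta$, and $\beta$-smoothness controls the passage from $\nabla f(x)$ to $\nabla f(x^{+})$; the two perturbations combine by the triangle inequality into the single $\epsilon(\eta\beta+1)$-ball. The paper itself does not supply a proof of this statement---it is quoted as a background result from \cite{ghadimi2016accelerated}---so there is nothing to compare against beyond noting that your write-up matches the usual argument behind that citation.
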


\begin{theorem}[Danskin's Theorem \cite{bertsekas1997nonlinear}] 
\label{thm:danskin_theorem}
Let $Z(x)$ be a compact subset of $\mathbb{R}^m$, and let $\phi(x) : \mathbb{R}^n \times Z \rightarrow 
\mathbb{R}$ be continuous and such that $\phi(\cdot, z) : \mathbb{R}^n \rightarrow \mathbb{R}^n$ is convex for each $z \in Z$. Then, the following statements are true.

\begin{enumerate}[label=\roman*.]
\item 
\label{item:danskin_1}
The function $f(x) = \max\limits_{z \in Z} \phi(x,z)$ is convex.

\item 
\label{item:danskin_2}
The directional derivative of $f(x)$ is given by $f'(x;y) = \max\limits_{z \in Z(x)} \phi'(x,z;y)$, where $\phi'(x,z;y)$ is the directional derivative at in $x$ in the direction of $y$, and $Z(x)$ is the set of maximizing points.

\begin{equation} \label{eq:danskin_aux}
Z(x) = 
\Bigl\{ 
\bar{z} \, | \, 
f(x,\bar{z}) = \max\limits_{z \in Z} \phi(x,z) 
\Bigl\}
\end{equation}

\item
\label{item:danskin_3}
If $Z(x)$ in \cref{eq:danskin_aux} consists of a unique point $\bar{z}$ and $\phi(x, \cdot) $ is differentiable at $x$, then $f$ is differentiable at $x$, and it holds,

\begin{equation}
\nabla_x f(x) = \nabla_x \phi(x, \bar{z}).
\end{equation}
\end{enumerate}
\end{theorem}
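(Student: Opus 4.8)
The plan is to prove the three parts in order, each building on the previous. For part (i), I would invoke the standard fact that a pointwise maximum of convex functions is convex: for any $x_1,x_2$ and $\theta\in[0,1]$, every $z$ satisfies $\phi(\theta x_1+(1-\theta)x_2,z)\le \theta\phi(x_1,z)+(1-\theta)\phi(x_2,z)\le \theta f(x_1)+(1-\theta)f(x_2)$, and taking the maximum over $z$ on the left yields convexity of $f$. Compactness of $Z$ together with continuity of $\phi$ ensures the defining maximum is attained, so $Z(x)$ is nonempty, closed, and hence compact. Since $f$ and each $\phi(\cdot,z)$ are now convex, all one-sided directional derivatives $f'(x;y)$ and $\phi'(x,z;y)$ exist and equal the infimum over $t>0$ of the corresponding difference quotients, which are nondecreasing in $t$; this monotonicity is the workhorse for part (ii).

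For part (ii) I would prove the two inequalities separately. The easy direction, $f'(x;y)\ge \max_{z\in Z(x)}\phi'(x,z;y)$, follows because for any $z\in Z(x)$ we have $\phi(x,z)=f(x)$ and $f(x+ty)\ge\phi(x+ty,z)$, so the difference quotient for $f$ dominates that for $\phi(\cdot,z)$; letting $t\downarrow 0$ gives $f'(x;y)\ge\phi'(x,z;y)$ for every such $z$. The reverse inequality is the crux. I would fix a sequence $t_k\downarrow 0$, pick maximizers $z_k\in Z(x+t_k y)$ so that $f(x+t_k y)=\phi(x+t_k y,z_k)$, and extract a subsequence with $z_k\to\bar z$ by compactness of $Z$. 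Continuity of $\phi$ and $f$ then forces $\phi(x,\bar z)=f(x)$, i.e. $\bar z\in Z(x)$. Using $f(x)\ge\phi(x,z_k)$ and the monotonicity of convex difference quotients, for any fixed $s>0$ with $t_k\le s$ I obtain $\tfrac{f(x+t_k y)-f(x)}{t_k}\le \tfrac{\phi(x+t_k y,z_k)-\phi(x,z_k)}{t_k}\le \tfrac{\phi(x+sy,z_k)-\phi(x,z_k)}{s}$; passing to the limit in $k$ (by continuity) and then letting $s\downarrow 0$ yields $f'(x;y)\le\phi'(x,\bar z;y)\le\max_{z\in Z(x)}\phi'(x,z;y)$.

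The main obstacle is precisely this upper bound: one cannot simply differentiate "inside the max," because the maximizer depends on the step size, so I must combine compactness (to pin down a limiting maximizer $\bar z\in Z(x)$) with the convexity trick of enlarging the step from $t_k$ to a fixed $s$ before taking limits, which decouples the $k$-limit from the $s$-limit. Care is also needed to ensure the $\max$ over $Z(x)$ is attained; this holds since $Z(x)$ is compact and $\phi'(x,\cdot;y)$ is upper semicontinuous, being an infimum over $s>0$ of the continuous maps $z\mapsto(\phi(x+sy,z)-\phi(x,z))/s$.

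Finally, part (iii) is a short corollary. When $Z(x)=\{\bar z\}$ is a singleton and $\phi(\cdot,\bar z)$ is differentiable at $x$, part (ii) gives $f'(x;y)=\phi'(x,\bar z;y)=\nabla_x\phi(x,\bar z)^{\top}y$, which is linear in $y$. A convex function whose directional derivative at a point is linear in the direction is differentiable there, with gradient equal to that linear functional; hence $f$ is differentiable at $x$ and $\nabla_x f(x)=\nabla_x\phi(x,\bar z)$.
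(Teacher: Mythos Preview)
The paper does not supply its own proof of this statement; it is listed in the ``Standard Optimization Background'' appendix as a cited result from \cite{bertsekas1997nonlinear} and is merely invoked (via \cref{lemma:smoothness_max}) rather than reproved. Your argument is correct and is essentially the classical proof one finds in Bertsekas: part~(i) by pointwise supremum of convex functions, part~(ii) by the two-inequality strategy---the lower bound from $f\ge\phi(\cdot,z)$ for each $z\in Z(x)$, the upper bound by selecting maximizers $z_k$ along a vanishing sequence $t_k$, extracting a convergent subsequence via compactness, and using monotonicity of convex difference quotients to decouple the limit in $k$ from the limit in the step size---and part~(iii) as the immediate corollary that a linear directional derivative forces differentiability. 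There is nothing to contrast, since the paper treats the theorem as a black box.
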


\end{document}